\declaretheorem{theorem}
\declaretheorem{lemma}
\declaretheorem{claim}
\declaretheorem{definition}
\newcommand{\R}{\mathbb{R}}
\DeclareMathOperator{\val}{val}
\DeclareMathOperator{\PG}{PG}
\DeclareMathOperator{\POLY}{poly}
\DeclareMathOperator{\low}{lower}
\DeclareMathOperator{\up}{upper}
\newcommand{\abs}[1]{|#1|}
\newcommand{\PSPACE}{\ensuremath{\texttt{PSPACE}}\xspace}
\newcommand{\NLOGSPACE}{\ensuremath{\texttt{NL}}\xspace}
\renewcommand{\P}{\ensuremath{\texttt{P}}\xspace}
\newcommand{\NP}{\ensuremath{\texttt{NP}}\xspace}
\newcommand{\coNP}{\ensuremath{\texttt{coNP}}\xspace}
\newcommand{\EXP}{\ensuremath{\texttt{EXP}}\xspace}
\newcommand{\TWOEXP}{\ensuremath{\texttt{2EXP}}\xspace}
\begin{document}

%%%%%%%%%%%%%%%%%%%%%%%%%%%%%%%%%%%%%%%%%%%%%%%%%%%%%%%%%%%%%%%%%%%%%%%%%%%%%%%
\title{One-Clock Priced Timed Games are \PSPACE-hard}         
%%%%%%%%%%%%%%%%%%%%%%%%%%%%%%%%%%%%%%%%%%%%%%%%%%%%%%%%%%%%%%%%%%%%%%%%%%%%%%%
\author{John Fearnley}
\author{Rasmus Ibsen-Jensen}
\author{Rahul Savani}
\affil{Dept.\ of Computer Science, University of Liverpool}
%%%%%%%%%%%%%%%%%%%%%%%%%%%%%%%%%%%%%%%%%%%%%%%%%%%%%%%%%%%%%%%%%%%%%%%%%%%%%%%

\maketitle

\begin{abstract}
The main result of this paper is that computing the value of a one-clock priced timed game (OCPTG) is \PSPACE-hard. Along the way, we provide a family of OCPTGs that have an exponential number of event points. Both results hold even in very restricted classes of games such as DAGs with treewidth three. Finally, we provide a number of positive results, including polynomial-time algorithms for even more restricted classes of OCPTGs such as trees.
\end{abstract}

\section{Introduction}

In this paper, we study \emph{priced timed games} (PTG), which are two-player
zero-sum games that are played on a graph. 
The defining feature of PTGs is that the game is played over time, with 
players accumulating costs both for spending time waiting in states, and for using
edges. Ultimately, one of the players would like to reach a goal state while
minimizing the cost, while the opponent would like to prevent the goal state
from being reached, or if that is impossible, to maximize the cost of reaching
the goal state.

Priced timed games have been studied extensively in the literature, starting
with the work of La Torre, Mukhopadhyay, and Murano~\cite{LMM02} who first
studied games on DAGs, with the later paper of Bouyer, Cassez, Fleury, and
Larsen~\cite{BCFL05} being the first to study the concept on general graphs.
Since then, there has been a great deal of follow-up work on these games,
e.g.,~\cite{LMM02,ABM04,BCFL05,BCFL05b,BBR05,BBM06,BLMR06,B06,BBBR07,BCDFLL07,JT07,R11,HIM13,BGNMMT14,BGHLM15,BJM15},
including work on practical applications in, for example, embedded systems, and
also applications in other theoretical results. 

\paragraph{\bf One-clock priced timed games.}
In general, a PTG can have any number of \emph{clocks}, which all increase at
the same rate as time progresses, but which can be independently \emph{reset}
back to zero. The edges of the game can have \emph{guards}, which only allow the
edge to be used if the clock values satisfy the conditions of the guard.

In this paper, we focus on the case in which there is exactly one clock, and so
we study \emph{one-clock priced timed games} (OCPTG). It has
been shown that one-clock priced timed games always have a
value~\cite{BLMR06}, and moreover algorithms have been proposed
\cite{HIM13,R11,BLMR06} for computing the value of these games. The current
state of the art is the algorithm of Hansen, Ibsen-Jensen, and
Miltersen~\cite{HIM13}, who give an algorithm that runs in $O(m \cdot \POLY(n)
\cdot 12^n)$ time, where $m$ is the number of edges and $n$ is the number of
vertices. This gives an exponential-time upper bound for the problem.

It has remained open, however, whether the problem can be solved in polynomial
time. The running time of Hansen, Ibsen-Jensen, and Miltersen's
algorithm~\cite{HIM13} is polynomial in the number of \emph{event points} in the
game, which are the set of points at which the gradient of the value function
changes. They showed that all OCPTGs have at most $24m(n+1)12^n$ event points, which
directly leads to the running time of their algorithm mentioned above.
They conjectured that the number of event points in a OCPTG is actually bounded
by a polynomial~\cite{HIM13}, and if this conjecture were true, then their
algorithm would always terminate in polynomial time.

\subsection{Our contribution}

\paragraph{\bf Lower bounds.}

This paper shows that computing the value of a one-clock priced timed game is
very unlikely to be solvable in polynomial time, by showing that the problem is
actually \PSPACE-hard. We begin by constructing a family of examples that have
exponentially many event points. This explicitly disproves the conjecture of
Hansen, Ibsen-Jensen, and Miltersen. We then use those examples as the
foundation of our computational complexity reductions. We first show that the
problem is both \NP and \coNP-hard, and we then combine the techniques from both
those reductions to show hardness for the $k$-th level of the polynomial-time hierarchy,
for all $k$, and finally \PSPACE.

All of our lower bound constructions produce graphs with special structures. In
particular they are all acyclic, planar, have in-degree and out-degree at most
$2$, and overall degree at most $3$ (our figures show a simpler variant with
overall degree at most 4). Also, the treewidth, cliquewidth, and rankwidth of
our constructions are all $3$.

Our results for the polynomial-time hierarchy give additional properties.
To obtain hardness for the $k$-th level of the polynomial time hierarchy, we
only need $k+2$ distinct \emph{holding rates}, which are the costs that the
players incur by waiting in a particular state.  

Another interesting feature is that, in a variant of the construction, which
loses planarity, all but $k+1$ of the states, can be made \emph{urgent}. A state
is urgent if the player is not allowed to wait at the state. Urgent states are
relevant because the results of~\cite{BLMR06,R11} are based on the technique of
converting more and more states into urgent states, since it is easy to solve a
game in which all states are urgent. 

In particular, our \NP- and \coNP-hardness constructions have 3 distinct holding
rates, namely $0,1/2,1$, and there is a variant in which all but two states can
be made urgent. 

Finally, members of our initial family that has exponentially many event points
have the additional properties of having pathwidth 3, using only holding rates
$\{0,1\}$, and having only a single state that cannot be made urgent. Thus, the
games may still have exponentially-many event points even for many of the most
obvious special cases.

%Our \NP and \coNP-hardness results are interesting in their own right, because
%they use a very small number of distinct \emph{holding rates}, which are the
%costs that the players incur by waiting in a particular state. In fact, both
%constructions use only the holding rates $\{0, 1/2, 1\}$, meaning that the
%problems are hard even with only three distinct holding rates. Another
%interesting feature is that, in these constructions, all but one of the states
%are  \emph{urgent}, meaning that the player is not allowed to wait at the
%state. Urgent states are relevant because the results of~\cite{BLMR06,R11} are
%based on the technique of converting more and more states into urgent states,
%since it is easy to solve a game in which all states are urgent. In our games,
%only one state needs to be converted, but it is \NP-hard or \coNP-hard to do the
%conversion.

%More generally, we obtain a fine-grained picture: we can construct families
%of games that are hard for the $k$-th level of the polynomial time hierarchy for
%any $k$.
%These games use $k+2$ distinct holding rates, and all but $k$ of the states are
%urgent. 
%\todo[inline]{Mention this somewhere in the main body.}

%Our construction produces graphs that have very special graph structures. The
%games are acyclic, planar, have in-degree and out-degree at most $2$, and
%overall degree at most $3$. For the \PSPACE-hardness result, the treewidth,
%cliquewidth, and rankwidth of the graphs are all bounded by $3$. 

\paragraph{\bf Upper bounds.}

Our hardness results essentially rule out finding polynomial-time algorithms for
many questions in a large number of special cases, unless \P=\xspace\PSPACE. We
are able to prove some upper bounds: we show that undirected graphs and trees
have a polynomial number of event points, and so can be solved in polynomial
time. 

Finally, we show that OCPTGs on DAGs are in \PSPACE by showing that a variant of
the event-point iteration algorithm~\cite{HIM13} can solve games on DAGS in
polynomial space. Combined with our hardness results, we obtain a
\PSPACE-completeness result for OCPTGs played on DAGs. This result improves on
an exponential-time algorithm by~\cite{ABM04} that in turn improved on a double-
exponential-time algorithm~\cite{LMM02}, both of which are designed for games
with many clocks. 

\subsection{Related work}

As shown in~\cite{BBM06}, building on a similar result in~\cite{BBR05} for five
clocks, some problems are undecidable in general for priced timed games with
three clocks. This was extended to the value problem in~\cite{BJM15}. The
complexity of most problems for two clocks is still open. 

Games with only a single player, called priced timed automata, have been studied
extensively on their own, following their introduction in~\cite{ALP01,BFHLPRV01}. 
They can be solved in \NLOGSPACE for the one-clock case~\cite{LMS04} and in
\PSPACE for the multiple clock case~\cite{BBBR07}. Games on DAGs are in \EXP for
any number of clocks~\cite{ABM04}, which improved on a previous \TWOEXP bound
in~\cite{LMM02}. Games with no costs and holding rates in $\{0,1\}$ are called
reachability timed games. They can be solved in polynomial time for one
clock~\cite{T11,HIM13} and in polynomial space for multiple clocks~\cite{JT07}.

This result has been generalised by~\cite{BGNMMT14} to show a polynomial time algorithm for the decision question\footnote{I.e. given a game, a state, and a number, is the value of starting in that state at time~0  above the value?} for one clock=priced timed games with rates $\{0,1\}$ and integer costs.
 Previously, they also claimed that such games would have only a polynomial number of event points, implying that one could find the full value functions in polynomial time. This,
however is incorrect: We show in Appendix~\ref{app:ashutosh} how to convert our examples
with exponentially-many event points and two holding rates to have integer costs. 
Their result~\cite{BGNMMT14} does show a pseudo-polynomial number of event points for such games though.

More generally, \cite{BGNMMT14} also give a pseudo-polynomial time algorithm for the special case with holding rates in  $\{-d,0,d\}$ for any number $d$ (note that our paper otherwise does not discuss negative rates or costs).

\section{Definitions}

As shown by~\cite{HIM13}, every one-clock priced timed game can be reduced, in
polynomial time, to a \emph{simple} priced timed game (SPTG), which is an OCPTG
in which there are no edge guards and no clock resets. Our hardness results will
directly build SPTGs, and so we restrict our definitions to SPTGs in this
section. Since every SPTG is a OCPTG, all of our hardness results directly
apply to OCPTGs. 

%are polynomial time reducible to the special case of
%simple priced timed games (SPTGs), where all intervals are $[0,1]$ and the clock
%is never reset. This paper is therefore focusing on SPTGs.

%In this section, we will give formal definitions of the different types of games considered. Note that the definition of OCPTGs is mainly for self-containment and we will focus on SPTGs (because they are polynomial time equivalent~\cite{HIM13}).

\paragraph{\bf SPTGs.} 

A simple priced timed game is a game played between two players called the
minimizer and the maximizer. The game is formally defined by a $6$-tuple: 
$(V_1,V_2,G,E,c,r)$, where

\begin{itemize} [leftmargin=0.3cm]

\item $V_1$ is the set of states belonging to the minimizer, $V_2$ is the set of
states belonging to the maximizer, and $G$ is a set of goal states.
%$V_1,V_2,G$ are the disjoint {\em player~1}, {\em player~2}, {\em goal}
%\footnote{one could use $G$ as a singleton, but we use this definition because it matters for which special graph properties the graphs can have} 
%{\em states}, respectively. 
The set of all states is denoted as $V=V_1\cup V_2\cup G$, and we use $n$ to
denote the number of states.

\item $E$ is a set of directed {\em edges}, which is a subset of $V\times V$. We
use $m$ to denote the number of edges.  

\item $c:E\rightarrow \R_{\geq 0}$ is a non-negative {\em cost function} for
edges.

\item $r:V\rightarrow \R_{\geq 0}$
is a non-negative {\em holding rate function} for states.

%\item $M$ is the {\em maximum time}.

%\item $I$ is the {\em existence interval function} for edges, i.e. $I:E\rightarrow \I(M)$, where $\I(M)$ is the set of intervals that are subsets of $[0,M]$.

%\item $R$ is the set of directed {\em reset edges}, i.e. a subset of $E$.

\end{itemize}

The game takes place over a period of time. At the start of the game, a pebble
is placed on one of the states of the game. In each round of the game, we will
be at some time $t \in [0, 1]$. The player who owns the state that holds
the pebble, can choose to move the pebble along one of the outgoing edges of
that state, or to delay until some future point in time. Moving along an edge
$e$ incurs the fixed one-time cost given by $c(e)$, while delaying for $d$ time
units at a state $s$ incurs a cost of $r(s) \cdot d$. 

The game starts at time $0$, and either ends when a goal state is reached, or
it never ends. If a goal state is not reached, then the minimizer loses the game, and receives payoff $-\infty$.
Otherwise, the payoff is the total amount of cost that was incurred before the
goal state was reached, which the maximizer wins, and the minimizer loses.

\paragraph{\bf Strategies.} 

Our players will use \emph{time-positional strategies}, meaning that for each
state and each point in time, the strategy chooses a fixed action that is
executed irrespective of the history of the play.
Formally, for each $j\in \{1,2\}$, a time-positional strategy $\sigma_j$ for player $j$
is defined by a pair $(W^j,S^j)$.
\begin{itemize} [leftmargin=0.3cm]

\item $W^j$ is a set of non-negative {\em
change points}. That is, $W^j=\{0=w_0^j<
w_1^j<w_2^j<\dots<w_{k-1}^j<1=w_{k}^j\}$ gives a sequence of points in time at
which the player changes their strategy. For notational convenience we define $w_{k+1}^j=\infty$.
\item $S^j=\{S_0^j,S_1^j,\dots S_{k}^j\}$ is a corresponding list of \emph{strategy choices},
which defines what action the player chooses at each point in time.
The player can either choose an outgoing edge, or choose to wait at the state,
which we denote with the symbol $\delta$.
So, for each $i$, we have that $S_i^j:V_j\rightarrow E\cup \{\delta\}$ with the
requirement that if 
$S_i^j(s)\in E$ then $S_i^j(s)=(s,s')$ for some state $s'$. At time~$1$,
delay is not possible, so for all $s\in V_j$ we require that
$S_{k}^j(s) \ne \delta$. 
\end{itemize}

\paragraph{\bf Plays.} 

Given a pair of strategies $\sigma_1,\sigma_2$ for the minimizer and the
maximizer, respectively, the resulting {\em play} from a starting state $s_0$,
and a starting time $t_0$ is denoted as $P(\sigma_1,\sigma_2,s_0,t_0)$, and is
defined as follows. 
Initially,
place a pebble on $s_0$ at time $t_0$. For each $j\in \{1,2\}$ and $i$,
whenever the pebble is placed on a state $s_i$ in $V_j$ at time $t_i$, let $i'$
be the index such that $t_i\in [w^j_{i'},w^j_{i'+1})$ and let $\ell\geq i'$ be the
smallest index such that $e_i := S_{\ell}^j(s_i) = (s_i, s_{i+1}) \neq \delta$.
Then, player $j$ waits until time $t_{i+1} = w_\ell$, and then
moves the pebble on to $s_{i+1}$ at time $t_{i+1}$
We also define $\delta_{i}:=t_{i+1}-t_i$ to be the delay that player~$i$ chooses
at time $t_i$. 

If $s_i\in G$, then the play is over and $|P(\sigma_1,\sigma_2,s_0,t_0)|=i$. If
the play is never over, i.e. for all $i$, $s_i\not\in G$, we have that
$|P(\sigma_1,\sigma_2,s_0,t_0)|=\infty$.

\paragraph{\bf Outcomes and values.} 

The {\em outcome}
$\val(P)$ is defined to be~$\infty$ if $|P|=\infty$, since no goal state is
reached. 
Otherwise, the outcome is
\[\val(P) := \sum_{t=0}^{|P|}(r(s_t)\cdot \delta_{t}+c(e_t)),\] 
where $r(s_t)\cdot \delta_{t}$ is the cost for holding at the state $s_t$ for
$\delta_t$ time units, and $c(e_t)$ is the cost for using the edge $e_t$.
Fix $s$ to be a state, and $t$ to be a time. The
{\em lower value} is defined to be 
$\underline{\val}(s,t)=\sup_{\sigma_1}\inf_{\sigma_2}
\val\left(P(\sigma_1,\sigma_2,s,t)\right),$ while the {\em upper value} is defined to be
$\overline{\val}(s,t)=\inf_{\sigma_2} \sup_{\sigma_1}
\val\left(P(\sigma_1,\sigma_2,s,t)\right).$ 

By definition, $\underline{\val}(s,t)\leq \overline{\val}(s,t)$. As shown in
\cite{BLMR06}, for a richer class of strategies,
$\underline{\val}(s,t)=\overline{\val}(s,t)$. It mostly follows from
\cite{BLMR06} (but formally, one also needs \cite{HIM13}) that this equality
holds even when the minimizer is restricted to time-positional strategies in the
definition of lower value and the maximizer is restricted to time-positional
strategies in the definition of upper value.  Therefore, the game is
determined in time-positional strategies, and we use
$\val(s,t):=\underline{\val}(s,t)=\overline{\val}(s,t)$ to denote the value of
the game starting at the state $s$, and time $t$.
%and we do not need to consider more general strategies. 

\paragraph{\bf Optimal and $\epsilon$-optimal strategies.} 

Given an $\epsilon\geq 0$, a strategy $\sigma_1$ is
\emph{$\epsilon$-optimal} for the minimizer if $\val(s,t)-\epsilon\leq
\inf_{\sigma_2} \val(P(\sigma_1,\sigma_2,s,t))$ for all $s$ and $t$.  A
strategy is \emph{optimal} if it is $0$-optimal. The definitions for the
maximizer are symmetric.
As shown in \cite{BLMR06}, for all
$\epsilon>0$, $\epsilon$-optimal strategies exist in OCPTGs, and 
\cite{HIM13} have shown that optimal strategies exist in SPTGs.
Moreover, the function $\val(s,t)$
is piecewise linear for OCPTGs \cite{BLMR06}, and continuous for
SPTGs~\cite{HIM13}

\paragraph{\bf Event points.} 

As mentioned, the value function of each state in an SPTG is
piecewise linear. 
An \emph{event point} is a point in time at which the value function of some
state $s$ changes from one linear function to another.
The set of {\em event points} contains every event point for every
state in the game.

%is the set of points $E$ in time such that $t\in E$ iff there exists some state
%$s$ for which $\val(s,t)$ is not differentiable when viewed as a function of $t$
%(i.e., the event points are the ends of the line pieces making up the value
%functions). 
As shown in \cite{HIM13}, improving on \cite{R11,BLMR06}, the number of event
points is less than $12^n$ for SPTGs and it is less than $m\cdot 12^n\cdot
\POLY(n)$ for OCPTGs. The optimal strategies for SPTGs constructed
by~\cite{HIM13} have the set of change points being equal to the set of event
points. Conversely, it is clear that event points are a subset of the change
points in any optimal strategy.

\section{Exponentially many event points are required}
\label{sec:exponential}

We begin by constructing a family of simple priced timed games in which the
number of event points in the optimal strategy is exponential. This serves two
purposes. Firstly, it provides a negative answer to the question, posed in prior
work~\cite{HIM13}, of whether the number of event points is polynomial.
Secondly, this construction will be used in a fundamental way in the hardness
results that we present in later sections.

\paragraph{\bf The construction.}

\begin{figure*}
\input figures/lb_combined.tex
\caption{Event points lower bound construction.}
\label{fig:basiclb}
\end{figure*}

The family of games is shown on the left-hand side in Figure~\ref{fig:basiclb}.
States belonging to the maximizer are drawn as squares, while states belonging
to the minimizer are drawn as triangles. The number displayed on each state is
the holding rate for that state, while the number affixed to each edge is the
cost of using that edge.

The game is divided into levels, with each level containing two states, which we
will call the \emph{left state} (denoted as $v_{\ell}^i$) and the
\emph{right state} (denoted as $v_{r}^i$). These names correspond to the
positions at which these states are drawn in Figure~\ref{fig:basiclb}. 

At the bottom of the game, on level~0, the left state $v_{\ell}^0$ is the goal
state and the right state $v_r^0$ is a maximizer state with holding rate~1. The state
$v_r^0$ has an edge to $v_{\ell}^0$ with cost 0. For each level $i > 0$, the
left state $v_{\ell}^i$ is a minimizer state with holding rate~1, and the right state
$v_{r}^i$ is a maximizer state with holding rate~0. Both states have the same outgoing
edges: an edge to $v_{\ell}^{i-1}$ with cost $2^{-i}$, and an edge to
$v_{\ell}^{i-1}$ with cost $0$.

\paragraph{\bf Value diagrams.}

On the right-hand side in Figure~\ref{fig:basiclb}, we show the value function
for each state, represented as \emph{value diagrams}. These show the value for
each state at each point in time. The bottom-left diagram shows the value
function of the goal state, which is zero at all points in time, since the game
ends when the state is reached. The bottom-right diagram shows the value
function of the state $v_r^0$ (the bottom-right state of the game). At this
state, the maximizer will wait for as long as possible before moving to the
goal, since this maximizes the cost generated from the holding rate of~$1$.
Hence, the value of this state is $1 - x$ at time $0 \le x \le 1$, which is
shown in the diagram.

For the states at level one of the game, first observe that there is no
incentive for either player to wait. The left state has holding rate $1$, which
is the worst possible holding rate for the minimizer, and the right state has holding
rate $0$, which is the worst possible holding rate for the maximizer. Hence both players
will move immediately to the lower level, and we must determine which state is
chosen.

To do this, we use the value function diagrams of the lower level. Both players
can move to the goal with an edge cost of $0.5$, or move to $v_r^0$ with a cost of
zero. So we shift the value function of the goal state up by $0.5$, and then
overlay it with the value function of $v_r^0$. This is displayed in the value
diagram that lies between the two layers. The minimizer's value function is the
\emph{lower envelope} of these two functions, which minimizes the value, 
while the maximizer's value function is the
\emph{upper envelope}, which maximizes the value. 
This is shown in the value diagrams of the two states at level one.

This process repeats for each level. For level two, we overlay the two value
diagrams from level one, after shifting the left-hand diagram up by the edge
cost of $0.25$, and then we take lower and upper envelopes for the respective
players.

\paragraph{\bf The exponential lower bound.}

To see that this game produces exponentially many event points, observe that the
left-hand value diagram at level two contains two complete copies of the
left-hand value diagram at level one, and that the same property holds for the
right-hand value diagrams. This property generalizes, and we can show that the
value diagrams for $v_{\ell}^{n}$ and $v_r^n$ both contain $2^n$ distinct line
segments. The following theorem is shown in Appendix~\ref{app:exp_event_points}.

%\begin{theorem}\label{thm:exp_event}
%There is a family of simple priced time games that have exponentially many event
%points.
%\end{theorem}

% http://texdoc.net/texmf-dist/doc/latex/thmtools/thmtools.pdf
\begin{restatable}{theorem}{expeventthm}
\label{thm:exp_event}%
There is a family of simple priced time games that have exponentially many event
points.
\end{restatable}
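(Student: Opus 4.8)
The plan is to work with the family $\{\mathcal{G}_n\}_{n\ge 1}$ from the construction above, where $\mathcal{G}_n$ has levels $0,1,\dots,n$, and to determine the value functions of all of its states exactly. For $x\in[0,1]$ write $f_\ell^i(x):=\val(v_\ell^i,x)$ and $f_r^i(x):=\val(v_r^i,x)$. The bottom level is immediate: $f_\ell^0\equiv 0$ since $v_\ell^0$ is the goal, and $f_r^0(x)=1-x$ since the maximizer at $v_r^0$ (holding rate $1$, free edge to the goal) simply waits until time $1$. Using the facts, recalled in the excerpt, that the game is determined and that on a DAG the value obeys the obvious one-step optimality relations, for $i\ge 1$ we get the Bellman equations
\[
f_\ell^i(x)=\inf_{x'\in[x,1]}\Big[(x'-x)+\min\!\big(2^{-i}+f_\ell^{i-1}(x'),\,f_r^{i-1}(x')\big)\Big],\qquad
f_r^i(x)=\sup_{x'\in[x,1]}\max\!\big(2^{-i}+f_\ell^{i-1}(x'),\,f_r^{i-1}(x')\big),
\]
where the first reflects that $v_\ell^i$ is a minimizer state of holding rate $1$ with a cost-$2^{-i}$ edge to $v_\ell^{i-1}$ and a free edge to $v_r^{i-1}$, and the second that $v_r^i$ is a maximizer state of holding rate $0$ with the same two edges.

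The core of the argument is an invariant about level $i$, to be proved by induction on $i$ (starting from the trivially-verified level $0$): $f_\ell^i$ and $f_r^i$ are continuous and piecewise linear with breakpoints exactly at the dyadic points $k/2^i$, $0<k<2^i$; on the $j$-th piece $[(j-1)/2^i,\,j/2^i]$ the slope of $f_\ell^i$ is $0$ for odd $j$ and $-1$ for even $j$, while the slope of $f_r^i$ is $-1$ for odd $j$ and $0$ for even $j$; and the gap $f_r^i-f_\ell^i$ is the nonnegative sawtooth that is affine on each piece, vanishes at every odd multiple of $2^{-i}$, and equals $2^{-i}$ at every even multiple of $2^{-i}$. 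To carry out the inductive step I would first invoke the slope part of the invariant at level $i-1$: all of $f_\ell^{i-1}$, $f_r^{i-1}$, $2^{-i}+f_\ell^{i-1}$ have slopes in $\{0,-1\}$, so the bracketed $\min$ never decreases at rate faster than $1$ and the $\max$ is nonincreasing; hence waiting is never profitable, the suprema and infima over $x'$ are attained at $x'=x$, and the recurrences collapse to the plain envelopes $f_\ell^i=\min(2^{-i}+f_\ell^{i-1},\,f_r^{i-1})$ and $f_r^i=\max(2^{-i}+f_\ell^{i-1},\,f_r^{i-1})$. Then, on each level-$(i-1)$ piece the two functions $2^{-i}+f_\ell^{i-1}$ and $f_r^{i-1}$ have opposite slopes (one flat, one of slope $-1$), so their difference is strictly monotone there; a one-line computation using the gap invariant shows this difference equals $+2^{-i}$ at the odd-multiple endpoint of the piece and $-2^{-i}$ at the even-multiple endpoint, hence crosses $0$ exactly once, at the piece's midpoint — which is an odd multiple of $2^{-i}$. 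Thus every level-$(i-1)$ piece splits into two, the new breakpoint set is precisely all multiples of $2^{-i}$ in $(0,1)$, and inspecting which function wins on each half-piece reproduces the alternating slopes and the sawtooth gap at level $i$, closing the induction.

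With the invariant in hand the theorem follows at once: $\val(v_\ell^n,\cdot)=f_\ell^n$ consists of $2^n$ linear pieces with strictly alternating slopes $0,-1,0,-1,\dots$, so each of its $2^n-1$ breakpoints is a genuine event point, whereas $\mathcal{G}_n$ has only $2(n+1)$ states and $4n+1$ edges; hence the number of event points is exponential in the size of the game. I expect the main obstacle to be the bookkeeping of the inductive step — pinning down exactly where the new breakpoints appear and verifying that taking lower/upper envelopes regenerates the sawtooth-gap invariant — and, relatedly, remembering that the slope bound must be threaded through the induction precisely so that one can rigorously rule out waiting and thereby identify the value functions with the plain envelopes rather than with something possibly smaller.
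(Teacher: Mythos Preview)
Your proposal is correct and follows essentially the same route as the paper: both arguments establish, by induction on the level $i$, that $\val(v_\ell^i,\cdot)$ and $\val(v_r^i,\cdot)$ consist of $2^i$ pieces of length $2^{-i}$ with slopes alternating between $0$ and $-1$ (starting with slope $0$ for $v_\ell^i$ and $-1$ for $v_r^i$), and both locate the new breakpoints at the midpoints of the level-$(i-1)$ pieces by tracking the difference $(2^{-i}+f_\ell^{i-1})-f_r^{i-1}$ at the endpoints. Your sawtooth gap invariant is exactly the paper's statement that $|\val(v_\ell^i,t)+2^{-i-1}-\val(v_r^i,t)|=2^{-i-1}$ at the multiples of $2^{-i-1}$, just repackaged; and your in-line justification that waiting is never profitable (via the slope bound) is what the paper handles separately in its ``no wait'' lemma.
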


\subsection{Inapproximability with few change points}
\label{sec:no_approx}

We are also able to show that both players must use strategies
with exponentially many change points in order to play close to optimally
in our lower bound game.
Specifically, we can show that if the game starts at the $k$th level of our
game, that is, in the vertices 
$v_{\ell}^k$ or $v_r^k$, and if both players play $\epsilon$-optimally for 
$\epsilon < 1/2^{k}$, then every interval of the form
\begin{equation*}
\left[ \frac{x}{2^{k-1}}, \frac{x+1} {2^{k-1}} \right)
\end{equation*}
for some integer $x$, must contain a change point. This is only possible if
there are $2^{k-1}$ distinct change points.

We shall illustrate this for the case where $k = 3$, by showing that the
minimizer must use four different change points at $v_{\ell}^3$ to play an
$\epsilon$-optimal strategy with $\epsilon < 1/8$. The value
diagram of $v_{\ell}^3$ is the lower envelope of the value diagram at the top of
Figure~\ref{fig:basiclb}. Let us consider the interval $D=\left[x/4,(x+1)/4\right)$ for
some integer $x$, and for the sake of contradiction, suppose that there are no
change points in this interval.

Since the minimizer cannot change their strategy, they have only three options
during $D$: always go to $v_{\ell}^2$, always go to $v_{r}^2$, or wait at
$v_{\ell}^3$ until the end of $D$. 

If the minimizer chooses to wait, then let us consider a play starting at time
$x/4$. This play has a payoff of at least $$1/4 + \val(v_{\ell}^3,(x+1)/4),$$
because we wait with a holding rate of $1$ for $1/4$ time units, and then the
best we can do at time $(x+1)/4$ is to follow the optimal strategy, which gives
us a payoff of $\val(v_{\ell}^3,(x+1)/4)$. On the other hand, we have
$$\val(v_{\ell}^3,x/4)=\val(v_{\ell}^3,(x+1)/4)+1/8.$$ This can be seen from the
value function for $v_{\ell}^3$ in Figure~\ref{fig:basiclb}: the first half of
the value function during $D$ is flat, while the second half falls at  rate $1$,
hence the difference is $1/8$. Since choosing to wait achieves a value that is
$1/8$ bigger than this, waiting cannot be $\epsilon$-optimal for any $\epsilon <
1/8$.

For the other two options, the outcomes can be seen in the top value diagram in
Figure~\ref{fig:basiclb}. The red line gives the outcome for starting in
$v_{\ell}^3$ and always going to $v_{r}^2$, while the blue line gives the
outcome for always going to $v_{\ell}^2$, assuming that both players play
optimally afterwards. The optimal strategy takes the lower envelope of the two
lines.

There is a difference of $1/8$ between the two lines at $x/4$ and $(x+1)/4$, but
the lines cross in the middle of the interval, so the line that is part of the
lower envelope at $x/4$ is not the line that is part of the lower
envelop at $(x+1)/4$. Hence, choosing to go to $v_{r}^2$, or to $v_{\ell}^2$ for
the entire interval will cause a loss in value of up to $1/8$, relative to the
optimal strategy, which is the difference in height between the lines. The
strategy is therefore not $\epsilon$-optimal since $\epsilon<1/8$.

This argument is generalized to all $k \geq 1$ and both players in
Appendix~\ref{app:no_approx}. 
Ultimately, the result is stated in the following lemma.
%\begin{lemma}\label{lem:no_approx}
%There is a family of simple priced time games in which every $\epsilon$-optimal
%strategy with $\epsilon < 1/2^k$ uses $2^{k-1}$ change points.
%\end{lemma}
\begin{restatable}{lemma}{noapproxlem}
\label{lem:no_approx}
There is a family of simple priced time games in which every $\epsilon$-optimal
strategy with $\epsilon < 1/2^k$ uses $2^{k-1}$ change points.
\end{restatable}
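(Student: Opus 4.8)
The plan is to generalize the $k=3$ argument sketched in the main text to arbitrary $k$, and to run the symmetric version of it for the maximizer. The backbone is an induction on $k$ establishing a closed form for the value functions $\val(v_\ell^k, t)$ and $\val(v_r^k, t)$ on $[0,1]$: concretely, I would show that on each dyadic interval $[x/2^{k-1}, (x+1)/2^{k-1})$ the function $\val(v_\ell^k, \cdot)$ consists of a flat piece followed by a piece of slope $-1$ (with the roles of flat/decreasing swapped according to the parity of $x$), so that $\val(v_\ell^k, \cdot)$ is a ``sawtooth'' with $2^{k-1}$ teeth, and similarly $\val(v_r^k, \cdot)$ is the complementary sawtooth obtained as the upper rather than lower envelope. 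The base case is the explicit description of $v_r^0$ (value $1-x$) and the goal $v_\ell^0$ (value $0$) given in the text; the inductive step is exactly the ``shift the left diagram up by $2^{-i}$, overlay, take lower/upper envelope'' operation described for the construction, and the only content is checking that shifting a depth-$(k-1)$ sawtooth by $2^{-k}$ and overlaying it with the other sawtooth produces, under the lower envelope, a depth-$k$ sawtooth of amplitude $2^{-k}$ (and dually for the upper envelope). The key quantitative facts to extract are: (i) within each interval $D = [x/2^{k-1}, (x+1)/2^{k-1})$ the value drops by exactly $2^{-k}$ from the left endpoint to the right endpoint; and (ii) the two ``option lines'' at $v_\ell^k$ — always move to $v_\ell^{k-1}$, versus always move to $v_r^{k-1}$ — agree with $\val$ at the two endpoints of $D$ but cross strictly inside $D$, with the gap between them reaching $2^{-k}$ at (at least) one endpoint.

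Given that structural lemma, the argument itself is a case analysis exactly as in the $k=3$ illustration. Fix a dyadic interval $D$ of length $2^{-(k-1)}$ and suppose for contradiction that a minimizer strategy $\sigma_1$ has no change point in $D$. Then on all of $D$ the strategy does one of three fixed things at $v_\ell^k$: wait until the right endpoint of $D$, always take the edge to $v_\ell^{k-1}$, or always take the edge to $v_r^{k-1}$. In the waiting case, a play started at time $x/2^{k-1}$ pays at least $2^{-(k-1)} + \val(v_\ell^k, (x+1)/2^{k-1})$, which by fact (i) exceeds $\val(v_\ell^k, x/2^{k-1})$ by $2^{-(k-1)} - 2^{-k} = 2^{-k} > \epsilon$, so $\sigma_1$ is not $\epsilon$-optimal. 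In each of the two committing cases, by fact (ii) the chosen option line lies strictly above the lower envelope $\val(v_\ell^k, \cdot)$ somewhere on $D$, and in fact exceeds it by up to $2^{-k}$ at an endpoint of $D$; starting the play at that endpoint (and letting the maximizer play optimally thereafter, which only helps the lower bound since the continuation value is $\val$) yields a payoff at least $2^{-k}$ above optimal, again contradicting $\epsilon$-optimality for $\epsilon < 2^{-k}$. Hence every such $D$ contains a change point of $\sigma_1$, and since the $2^{k-1}$ intervals $[x/2^{k-1}, (x+1)/2^{k-1})$ for $x = 0, \dots, 2^{k-1}-1$ are disjoint, $\sigma_1$ has at least $2^{k-1}$ change points. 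The maximizer case is obtained by replacing ``lower envelope'' with ``upper envelope'' and $v_\ell^k$ with $v_r^k$ throughout, using that the maximizer-owned $v_r^k$ has holding rate $0$ so the ``wait'' option is the one that loses value for the maximizer; the sawtooth phases are shifted but the same disjointness count applies.

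**The main obstacle** I anticipate is purely bookkeeping: getting the parities and phases of the sawtooth right through the induction, i.e.\ tracking on which half of each dyadic subinterval the value function is flat versus decreasing as $k$ increases, and making sure the ``option lines'' are described correctly (the option line for moving to $v_\ell^{k-1}$ is $\val(v_\ell^{k-1}, \cdot) + 2^{-k}$ and the one for moving to $v_r^{k-1}$ is $\val(v_\ell^{k-1}, \cdot)$, since both left and right states at level $k-1$ feed into $v_\ell^{k-1}$ — one checks from the construction that the two outgoing edges of a level-$k$ state both go to $v_\ell^{k-1}$, one with cost $2^{-k}$ and one with cost $0$, so the relevant comparison at level $k$ is between $\val(v_\ell^{k-1},\cdot)$, i.e.\ the lower envelope from below, shifted or not). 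A secondary point needing care is the handling of the boundary instants and the half-open intervals, and the fact that a strategy is only required to have finitely many change points: this is exactly why ``no change point in $D$'' forces one of three fixed behaviors, and it should be spelled out that these three are genuinely exhaustive given the state's out-degree. None of this is deep, but it is where an otherwise-correct proof could go wrong, so I would state the sawtooth lemma carefully (with an explicit formula for $\val(v_\ell^k, t)$ and $\val(v_r^k, t)$) before invoking it.
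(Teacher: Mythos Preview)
Your overall approach is essentially the paper's own: the structural ``sawtooth'' lemma you propose is exactly Lemma~\ref{lem:val_func_exp}, and the three-way case analysis (wait / commit to one edge / commit to the other) on each dyadic interval, together with the symmetric argument for the maximizer at $v_r^k$, is precisely what the appendix carries out.

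There is, however, one genuine error in your option-line identification. You write that both outgoing edges of a level-$k$ state go to $v_\ell^{k-1}$, so that the two option lines are $\val(v_\ell^{k-1},\cdot)+2^{-k}$ and $\val(v_\ell^{k-1},\cdot)$. Taken literally these are parallel shifts of one another and never cross, so your fact~(ii) would be vacuously false and the committing case would collapse. The construction actually sends one edge to $v_\ell^{k-1}$ with cost $2^{-k}$ and the other to $v_r^{k-1}$ with cost $0$; the option lines to compare are $\val(v_\ell^{k-1},\cdot)+2^{-k}$ and $\val(v_r^{k-1},\cdot)$, and Lemma~\ref{lem:val_func_exp} gives $|\val(v_\ell^{k-1},t)+2^{-k}-\val(v_r^{k-1},t)|=2^{-k}$ at $t\in\{a,b\}$ with opposite signs, which is what makes the crossing argument work. (The body of the paper contains a typo on this point, so your confusion is understandable, but your parenthetical justification shows you actually relied on it.)

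One smaller subtlety you flag but do not resolve: since $D=[a,b)$ is half-open, you cannot simply ``start the play at the right endpoint $b$'' when that is the bad endpoint, because the strategy may well change there. The paper handles this by evaluating instead at $t=b-(2^{-k}-\epsilon)/2$, where the option-line gap is still strictly greater than $\epsilon$; you will need the same trick or an equivalent limiting argument.
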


\section{\NP and \coNP lower bounds}
\label{sec:np_conp}

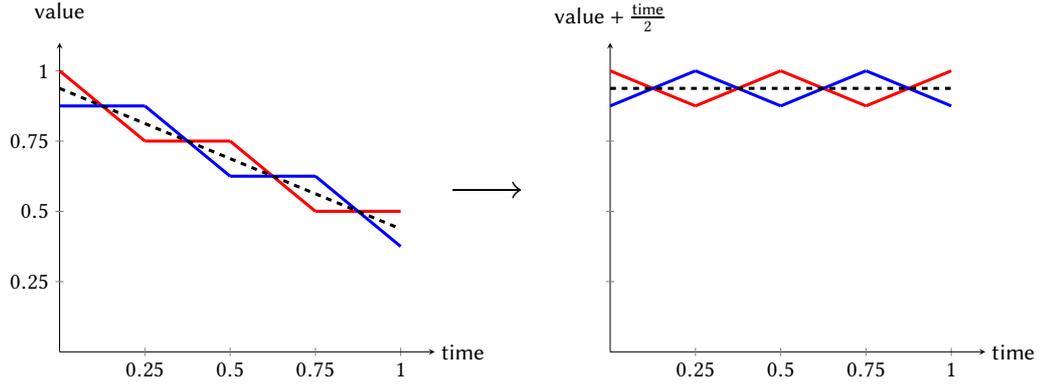
\begin{figure}
\centering
\scalebox{1.8}{
% https://tex.stackexchange.com/questions/23411/tikzpicture-in-node-of-another-tikzpicture-how-to-screen-of-from-inheriting-sty
%%%%%%%%%%%%%%%%%%%%%%%%%%%%%%%%%%%%%%%%%%%%%%%%%%%%%%%%%%%%%%%%%%%%%%%%%%%%
 
\pgfkeys{/pgfplots/mynewaxis/.style={xmin=0,ymin=0,xmax=1.1,ymax=1.1, 
								  samples=2, % straight lines so no need for more than 2
								  axis lines=center,
						  		  xtick = {1},
						          ytick = {1},
								  ylabel=$\mathsf{value}$,
								  xlabel=$\mathsf{time}$,
								  x label style = {at={(ticklabel* cs:1.15)}, anchor=east, font=\large},
								  y label style = {at={(ticklabel* cs:1.15)}, anchor=north, font=\large},
								  tick label style={font=\large}
						  		  }}

\pgfkeys{/pgfplots/mynewaxis2/.style={xmin=0,ymin=0,xmax=1.1,ymax=1.1, 
								  samples=2, % straight lines so no need for more than 2
								  axis lines=center,
						  		  xtick = {1},
								  ytick = \empty,
								  yticklabel=\empty,
								  ylabel=$\mathsf{value} + \frac{\mathsf{time}}{2}$,
								  xlabel=$\mathsf{time}$,
								  x label style = {at={(ticklabel* cs:1.15)}, anchor=east, font=\large},
								  y label style = {at={(ticklabel* cs:1.15)}, anchor=north, font=\large},
								  tick label style={font=\large}
						  		  }}

\pgfmathsetmacro{\scaling}{0.4}

% combinequarter copied from lb_combined.tex
\newsavebox\torotate
\begin{lrbox}{\torotate}
\scalebox{\scaling}{ 
\begin{tikzpicture}
\begin{axis}[mynewaxis, extra x ticks={0.25, 0.5, 0.75}, 
					 extra y ticks={0.25, 0.5, 0.75}]
  % old right part
  \addplot[red, ultra thick][domain=0:0.25] (x,1-x);
  \addplot[red,  ultra thick][domain=0.75:1] (x,0.5);
  \addplot[red,  ultra thick][domain=0.25:0.5] (x,0.75);
  \addplot[red, ultra thick][domain=0.5:0.75] (x,1.25-x);
  % old left part shifted up by 0.125
  \addplot[blue, ultra thick][domain=0.25:0.5] (x,1.125-x);
  \addplot[blue,  ultra thick][domain=0.5:0.75] (x,0.625);
  \addplot[blue,  ultra thick][domain=0:0.25] (x,0.875);
  \addplot[blue, ultra thick][domain=0.75:1] (x,1.375-x);
  % the "rotation" line
  \addplot[ultra thick, dashed][domain=0:1] (x,0.9375-0.5*x);

\end{axis}
\end{tikzpicture}
}
\end{lrbox}

\newsavebox\rotated
\begin{lrbox}{\rotated}
\scalebox{\scaling}{ 
%\begin{tikzpicture}[rotate=32,transform shape]
\begin{tikzpicture}
%\begin{axis}[mynewaxis2]
\begin{axis}[mynewaxis2, extra x ticks={0.25, 0.5, 0.75}, 
					 extra y ticks={0.25, 0.5, 0.75}]
%\begin{axis}[axis lines=none]
  % old right part
  \addplot[red, ultra thick][domain=0:0.25] (x,1-x+0.5*x);
  \addplot[red,  ultra thick][domain=0.75:1] (x,0.5+0.5*x);
  \addplot[red,  ultra thick][domain=0.25:0.5] (x,0.75+0.5*x);
  \addplot[red, ultra thick][domain=0.5:0.75] (x,1.25-x+0.5*x);
  % old left part shifted up by 0.125
  \addplot[blue, ultra thick][domain=0.25:0.5] (x,1.125-x+0.5*x);
  \addplot[blue,  ultra thick][domain=0.5:0.75] (x,0.625+0.5*x);
  \addplot[blue,  ultra thick][domain=0:0.25] (x,0.875+0.5*x);
  \addplot[blue, ultra thick][domain=0.75:1] (x,1.375-x+0.5*x);
  % the "rotation" line
  \addplot[ultra thick, dashed][domain=0:1] (x,0.9375-0.5*x+0.5*x);

\end{axis}
\end{tikzpicture}
}
\end{lrbox}

% add the following two lines to your document to get bigger arrows
%\usetikzlibrary{arrows.meta}
%\tikzset{>={Latex[width=2mm,length=2mm]}}

\begin{tikzpicture}[auto,scale=1]%,every node/.style={scale=1.3}]

\draw (0, 0) node {\usebox\torotate};
\draw[->] (1.5, 0) -- (2, 0);
\draw (4, 0) node {\usebox\rotated};

\end{tikzpicture} 
}
\caption{Our relative value diagramming convention.}
\label{fig:rotate}
\end{figure}

\begin{figure}
\centering
\scalebox{1}{
\input figures/np.tex
}
\caption{\NP lower bound construction.}
\label{fig:np}
\end{figure}

We now present \NP-hardness and \coNP-hardness results for computing the value of a simple priced
timed game. This serves two purposes. Firstly, it introduces some of the key
concepts that we will use in our \PSPACE-hardness result. Secondly, these 
hardness results will hold for SPTGs that have only the holding
rates $\{0,1/2,1\}$, which is not the case for our later results.

Our goal in this section is to show hardness results for the following decision
problem: given a state $s$ and a constant~$c$, decide whether $\val(s, 0) \ge
c$. 
In other words, it is hard to determine the value of a particular state
at time zero. The majority of this section will be used to describe the
\NP-hardness result, and the \coNP-hardness will be derived by slightly altering
the techniques that we develop.

\paragraph{\bf Relative values.}

The family of games from Section~\ref{sec:exponential} will be used as a basis
for this result. 
We start by discussing a change in perspective that is helpful when
dealing with value diagrams. Take, for example, the value diagram at the top of
Figure~\ref{fig:basiclb}. Observe that both of the value functions depicted in
this diagram are weakly monotone. This will always be the case in an SPTG, since
there are no guards, meaning that costs can only increase as the amount of time
left in the game increases.

We will use values at specific points in time to encode information. But we will
not use the absolute value, but rather the value \emph{relative} to some
monotone linear function. This is shown in Figure~\ref{fig:rotate}. 
On the right-hand side we have added the linear function $\text{time}/2$, which
causes the value functions to become horizontal. The diagram shows the value
functions increasing and decreasing relative to this linear function.

We will use relative values in our reduction, because it makes it easier to
understand.  It is worth pointing out, however, that this is only a change in
perspective. The underlying absolute values are still always weakly monotone.

\paragraph{\bf Enumerating bit strings.}

Our \NP-hardness reduction will be a direct reduction from Boolean
satisfiability. There are two steps to the reduction. First we build a set of
gadgets that enumerate all possible $n$-bit strings over time,
%(i.e. at any one point in time, our gadgets corresponds to one $n$-bit string
%and each $n$-bit string has its time) 
and then we build a gadget that tests whether a Boolean formula is true over
this set of bit strings.

We start by describing the enumeration gadget. We denote the $n$ bits of a bit
string as $v_1$ through $v_n$. The enumeration gadget builds $2n$ states,
corresponding to $v_i$ and $\lnot v_i$ for each index $i$.
The top half of Figure~\ref{fig:np} shows the relative value diagrams for
these states. 

The gadget divides time into $2^n$ intervals, with each interval corresponding
to a particular bit string. Bit values of the bit-string are encoded using the
relative value function of the states, using two fixed constants $L$ and $H$ 
that the relative value stays between.
\begin{itemize} [leftmargin=0.3cm]

\item If a bit is zero for an interval, then the relative value
of the state remains at $L$ during the interval.
\item If a bit is one for an interval, then the relative value of the state
begins the interval at $L$, it increases during the interval to $H$, and then
decreases back to $L$ by the end of the interval. This forms the peaks 
shown in Figure~\ref{fig:np}.
\end{itemize}

The enumeration gadget produces these value functions by using several copies of
the exponentially-many event point games from Section~\ref{sec:exponential}. From
Figure~\ref{fig:basiclb}, we can see that the value functions there are similar
to what we want: the functions alternate between having high relative value and
low relative value, and there are exactly $2^i$ alternations at level $i$.
However, these value functions do not exactly match those shown in
Figure~\ref{fig:np}. Specifically:
\begin{itemize} [leftmargin=0.3cm]

\item The exponential lower bound functions are symmetric with respect to peaks
and troughs, but we would like zeroes to be represented by the fixed constant
$L$, and ones to be represented as peaks. 
\item The functions start at either peaks or troughs, but we would
like to start in the middle of the waveform. So attempting to represent $v_1$ in
Figure~\ref{fig:np} using the value functions from the exponential lower
bound would result in a bit-sequence like
\verb+1 1 0 0 0 0 1 1+, rather than \verb+0 0 0 0 1 1 1 1+.
\item When a state has a sequence of intervals that all encode one-bits, we
would like each to contain a copy of the peaks shown in
Figure~\ref{fig:np}. However, the exponentially-many event point game value
functions would instead give us a single large peak during the whole interval.  
\end{itemize}

To address these issues, we transform the exponentially-many event point game 
value functions so that they have these properties. This involves inserting a sequence
of intermediate states, and the construction is described in detail in
Appendix~\ref{app:enc_bools}.

\paragraph{\bf Evaluating a Boolean formula.}

Once we have constructed the states $v_1$ through $v_n$ and $\lnot v_1$ through
$\lnot v_n$, we can then design a gadget to evaluate an arbitrary Boolean
formula $F$ over every $n$-bit string. The output of this gadget is a state, that
we will also call $F$, whose value is depicted in Figure~\ref{fig:np}. 
The output of the $F$ state uses the same encoding as before: if $F$ evaluates
to false for a specific bit string, then the value of $F$ remains at $L$ for the
entire interval, while if it evaluates to true, the value forms a peak that
starts at $L$, increases to touch $H$, and then returns to $L$ by the end of the
interval.

To evaluate the formula, we first apply De Morgan's laws to ensure that all
negations are applied to propositions, meaning that all internal operations of
the formula consist only of $\land$ and $\lor$ operations.
Next, we introduce a state in the game for each sub-formula $F' = x \oplus y$ of
$F$, where $\oplus \in \{\land, \lor\}$. This state will have edges to the
states corresponding to $x$ and $y$ with no edge costs, and 
\begin{itemize} [leftmargin=0.3cm]

\item 
if $\oplus = \lor$ then the state is a \emph{maximizer} state with
\emph{holding rate $0$}, while
\item 
if $\oplus = \land$ then the state is a \emph{minimizer} state with
\emph{holding rate $1$}.
\end{itemize}

As in the exponentially-many event point games, the holding rates have been chosen so that
neither player has an incentive to wait at these states. So the relative value
of the state $F'$ 
\begin{itemize} [leftmargin=0.3cm]

\item will be the maximum of the two input states for an
$\lor$ gate, meaning that in any particular interval the relative value of $F'$
will contain a peak if either of the two input states contains a peak,
\item 
while for a $\land$ gate, the relative value will be the minimum of the two inputs,
meaning that an interval will contain a peak only when both inputs contain
peaks\footnote{An issue could arise if the peaks were located at different
points in the intervals, but as shown in Lemma~\ref{lem:detector} of
Appendix~\ref{sec:form_enc}, the peaks are always exactly in the middle.}.
\end{itemize}
Hence this correctly simulates boolean logic, and the output of state $F$ will
encode the set of bit strings that satisfy the formula.

\paragraph{\bf \NP-hardness of computing values.}

Finally, we can turn this into our \NP-hardness result. So far, we have shown
how to evaluate the Boolean formula, but the outcome of the evaluation does not
affect the values at time zero, because each evaluation is entirely contained
within its interval. 

To address this, we introduce one final state called the \emph{extender}. 
The relative value function of the extender is shown at the bottom of
Figure~\ref{fig:np}. Whenever the relative value of $F$ peaks at the value
$H$, the extender makes the relative value decay more gradually on the left-hand
side of the peak. This decay rate is carefully chosen so that the value will not
have returned to $L$ even after all $2^n$ intervals. Hence, 
\begin{itemize} [leftmargin=0.3cm]

\item if the relative value of $F$
touches $H$ at any point in time, the relative value of the extender at time
zero
will be strictly greater than $L$, while 
\item if the relative value of
$F$ is never more than $L$, then the relative value of the extender will be $L$
at time zero.
\end{itemize}
This implies that the relative value (and hence absolute value) of the extender
at time zero depends on the satisfiability of the formula $F$, which gives us
our \NP-hardness result.

The extender state is a maximizer state that has one outgoing edge to the state
$F$ with no edge cost, and a carefully chosen holding rate. The second to last
relative value diagram in Figure~\ref{fig:np} shows the affect of the holding
rate of the extender. The idea is that the maximizer would like to wait in the
extender until the next interval in which the formula evaluates to true (if
there is such an interval). 

The holding rate at the extender determines the gradient of the blue lines. For
\NP-hardness it is sufficient for this line to be horizontal\footnote{Horizontal 
in our relative value diagrams means a holding rate of $1/2$ in the actual game
with absolute values.},
and never touch the relative value of $L$, but the ability for the extender
state to decay back to $L$ after a finite amount of time will be later used in
our \PSPACE-hardness result.

One final thing to note is that this construction uses exactly three different
holding rates. The exponentially-many event point games use the holding rates
$0$ and $1$, and one extra holding rate (of $1/2$) is introduced in the
enumeration gadget. 
We get the following theorem. 
The full formal description of the construction, along with a proof of
correctness, can be found in Appendix~\ref{app:np_conp_hardness}. 
%\begin{theorem}\label{thm:np_hard}
%For an SPTG, deciding whether $v(s, 0) \ge c$ for a given state $s$ and constant
%$c$ is \NP-hard, even if the game has only holding rates in $\{0,1/2,1\}$.
%\end{theorem}
\begin{restatable}{theorem}{nphardthm}
\label{thm:np_hard}
For an SPTG, deciding whether $v(s, 0) \ge c$ for a given state $s$ and constant
$c$ is \NP-hard, even if the game has only holding rates in $\{0,1/2,1\}$.
\end{restatable}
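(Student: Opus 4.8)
The plan is to establish \NP-hardness by a reduction from Boolean satisfiability, following the outline already sketched in the section text. Given a Boolean formula $F$ in $n$ variables (which we may assume, after applying De Morgan's laws, has negations only at the leaves), I would build an SPTG whose time interval $[0,1]$ is partitioned into $2^n$ equal subintervals, the $x$-th of which is identified with the bit string that is the binary representation of $x$. The construction has three layers: first, the \emph{enumeration layer}, which produces $2n$ states $v_1,\dots,v_n,\lnot v_1,\dots,\lnot v_n$ whose relative value functions encode, in each subinterval, the corresponding bit of that subinterval's bit string (value stays at $L$ for a $0$-bit, and forms a peak up to $H$ and back to $L$ for a $1$-bit); second, the \emph{formula layer}, which adds one state per subformula $x\oplus y$ of $F$, realised as a maximizer state of holding rate $0$ for $\lor$ and a minimizer state of holding rate $1$ for $\land$, each with zero-cost edges to the states for $x$ and $y$; and third, the \emph{extender} state, a maximizer state with zero-cost edge to $F$ and a carefully chosen holding rate that turns ``$F$ peaks at $H$ somewhere'' into ``relative value of the extender at time $0$ is strictly above $L$''. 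I would then output the extender state $s$ together with the constant $c$ equal to the absolute value corresponding to relative value $L$ at time $0$ (a fixed linear expression in $L$ and the slope of the reference line), adjusted to be strict, so that $\val(s,0)\ge c$ iff $F$ is satisfiable.

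The key steps, in order, are: (i) formally define the enumeration gadget and prove it realises the claimed value functions — this is where the exponentially-many-event-point games of Section~\ref{sec:exponential} are invoked, with the three listed transformations (making troughs be the fixed constant $L$, shifting the phase so each state ``starts in the middle'' of its waveform, and splitting one long peak over a run of consecutive $1$-bit intervals into separate peaks, one per interval) implemented by inserting intermediate states; this is deferred to Appendix~\ref{app:enc_bools}. (ii) Prove the formula gadget is correct: since the chosen holding rates remove any waiting incentive, the relative value of an $\lor$-state is the pointwise maximum and that of an $\land$-state the pointwise minimum of its two inputs, and because all peaks are centred in the middle of their intervals (Lemma~\ref{lem:detector} of Appendix~\ref{sec:form_enc}), taking pointwise min/max commutes with the ``interval contains a peak'' predicate; by induction on the formula structure the state $F$ has a peak in subinterval $x$ exactly when $F$ is true on bit string $x$. (iii) Prove the extender works: the maximizer, who waits in the extender with the chosen holding rate, will delay until the next subinterval whose $F$-value peaks and then move into $F$; the holding rate (horizontal in relative terms, i.e.\ $1/2$ absolute) together with the decay shaping guarantees the relative value does not fall below a threshold strictly above $L$ once it has been raised, so it is still above $L$ at time $0$ iff some subinterval has a true evaluation, and it is exactly $L$ at time $0$ otherwise. (iv) Check that the whole construction has size polynomial in $|F|$ and $n$, uses only holding rates in $\{0,1/2,1\}$, and can be produced in polynomial time, and that $c$ is computable.

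I expect the main obstacle to be step (i): getting the enumeration gadget to output value functions that are \emph{exactly} the desired waveform rather than merely qualitatively similar. The raw games from Section~\ref{sec:exponential} give symmetric peak/trough waveforms that start at an extremum and merge consecutive same-value intervals; correcting all three discrepancies simultaneously, while keeping the holding rates within $\{0,1/2,1\}$ and not accidentally introducing waiting incentives or extra event points that desynchronise the intervals, requires a delicate layered sequence of intermediate states, and verifying that their lower/upper envelopes compose to the claimed functions is the technical heart of the proof. A secondary subtlety, handled by Lemma~\ref{lem:detector}, is ensuring peaks remain centred under repeated pointwise min/max in the formula layer, since a $\land$ of two peaks located at different offsets within an interval could spuriously vanish; this is why the enumeration gadget must be engineered so that every peak, at every state and every level, sits precisely in the middle of its interval. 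The remaining steps are comparatively routine envelope/monotonicity arguments. The full construction and proof of correctness are given in Appendix~\ref{app:np_conp_hardness}.
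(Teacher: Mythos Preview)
Your proposal is correct and follows essentially the same approach as the paper: reduce from SAT via an enumeration layer built from the exponential-event-point games (with the three transformations you list), a formula layer using maximizer rate-$0$ states for $\lor$ and minimizer rate-$1$ states for $\land$, a detector state to normalise the output (Lemma~\ref{lem:detector}), and a maximizer extender state with holding rate $1/2$ whose value at time $0$ distinguishes satisfiable from unsatisfiable instances. Your identification of step~(i) as the technical heart and of the peak-centring issue as the key subtlety matches the paper's treatment in Appendices~\ref{app:enc_bools} and~\ref{sec:form_enc}.
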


\paragraph{\bf \coNP-hardness of computing values.}

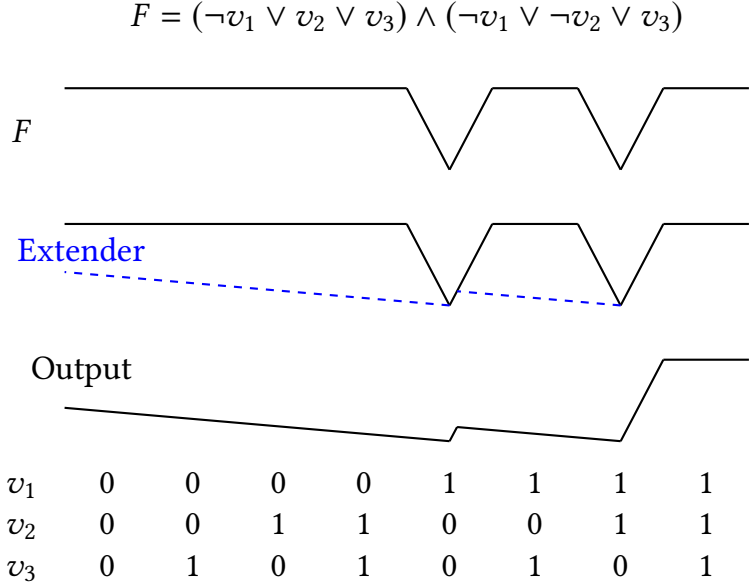
\begin{figure}
\centering
\scalebox{1}{
\begin{tikzpicture}[scale=2.25,every node/.style={scale=1.3}]

%VARIABLES
\pgfmathsetmacro{\width}{2}

%% original
%\pgfmathsetmacro{\spikeheight}{1}
%\pgfmathsetmacro{\panelheight}{1.5}

% original
\pgfmathsetmacro{\spikeheight}{0.48}
\pgfmathsetmacro{\panelheight}{0.8}
\pgfmathsetmacro{\middlegap}{0.5}
\pgfmathsetmacro{\leftlabel}{-0.5}

\pgfmathsetmacro{\levelzerobottom}{0}
\pgfmathsetmacro{\levelzero}{\levelzerobottom + \spikeheight}

\pgfmathsetmacro{\levelonebottom}{-1*\panelheight}
\pgfmathsetmacro{\levelone}{\levelonebottom + \spikeheight}

\pgfmathsetmacro{\leveltwobottom}{-2*\panelheight}
\pgfmathsetmacro{\leveltwo}{\leveltwobottom + \spikeheight}

%\pgfmathsetmacro{\beneath}{-3.5}
\pgfmathsetmacro{\beneath}{-2*\panelheight-0.5}
\pgfmathsetmacro{\above}{1.5}

\node at (1.75, 0.9) {$F = (\lnot v_1 \lor v_2 \lor v_3) \land (\lnot v_1 \lor \lnot v_2 \lor v_3)$};

%%%%%%%%%%%%%%%%%%%%%%%%%%%%%%%%%%%%%%%%%%%%%%%%%%%%%%%%%%%%%%%%%%%%%%%%%%%%%%
% KEY AT bottom
%%%%%%%%%%%%%%%%%%%%%%%%%%%%%%%%%%%%%%%%%%%%%%%%%%%%%%%%%%%%%%%%%%%%%%%%%%%%%%
%\node at ( -0.5,\above) {$\begin{matrix} v_1 \\ v_2 \\ v_3\\ \end{matrix}$};
%\node at (  0,\above) {$\begin{matrix}  0 \\ 0 \\ 0 \\ \end{matrix}$};
%\node at (0.5,\above) {$\begin{matrix}  0 \\ 0 \\ 1 \\ \end{matrix}$};
%\node at (  1,\above) {$\begin{matrix}  0 \\ 1 \\ 0 \\ \end{matrix}$};
%\node at (1.5,\above) {$\begin{matrix}  0 \\ 1 \\ 1 \\ \end{matrix}$};
%\node at (  2,\above) {$\begin{matrix}  1 \\ 0 \\ 0 \\ \end{matrix}$};
%\node at (2.5,\above) {$\begin{matrix}  1 \\ 0 \\ 1 \\ \end{matrix}$};
%\node at (  3,\above) {$\begin{matrix}  1 \\ 1 \\ 0 \\ \end{matrix}$};
%\node at (3.5,\above) {$\begin{matrix}  1 \\ 1 \\ 1 \\ \end{matrix}$};

%%%%%%%%%%%%%%%%%%%%%%%%%%%%%%%%%%%%%%%%%%%%%%%%%%%%%%%%%%%%%%%%%%%%%%%%%%%%%%
% KEY AT BOTTOM
%%%%%%%%%%%%%%%%%%%%%%%%%%%%%%%%%%%%%%%%%%%%%%%%%%%%%%%%%%%%%%%%%%%%%%%%%%%%%%
\node at ( -0.5,\beneath) {$\begin{matrix} v_1 \\ v_2 \\ v_3\\ \end{matrix}$};
\node at (  0,\beneath) {$\begin{matrix}  0 \\ 0 \\ 0 \\ \end{matrix}$};
\node at (0.5,\beneath) {$\begin{matrix}  0 \\ 0 \\ 1 \\ \end{matrix}$};
\node at (  1,\beneath) {$\begin{matrix}  0 \\ 1 \\ 0 \\ \end{matrix}$};
\node at (1.5,\beneath) {$\begin{matrix}  0 \\ 1 \\ 1 \\ \end{matrix}$};
\node at (  2,\beneath) {$\begin{matrix}  1 \\ 0 \\ 0 \\ \end{matrix}$};
\node at (2.5,\beneath) {$\begin{matrix}  1 \\ 0 \\ 1 \\ \end{matrix}$};
\node at (  3,\beneath) {$\begin{matrix}  1 \\ 1 \\ 0 \\ \end{matrix}$};
\node at (3.5,\beneath) {$\begin{matrix}  1 \\ 1 \\ 1 \\ \end{matrix}$};

%%%%%%%%%%%%%%%%%%%%%%%%%%%%%%%%%%%%%%%%%%%%%%%%%%%%%%%%%%%%%%%%%%%%%%%%%%%%%%
% LEVEL 0
%%%%%%%%%%%%%%%%%%%%%%%%%%%%%%%%%%%%%%%%%%%%%%%%%%%%%%%%%%%%%%%%%%%%%%%%%%%%%%

\node at (\leftlabel,\levelzero-0.25) {$F$};

% flat
\draw[thick] (-0.25,\levelzero) -- (1.75,\levelzero);

% spike 100
\draw[thick] (1.75,\levelzero) -- (2, \levelzerobottom) node (bottom100) {}; % up
\draw[thick] (bottom100.center) -- (2.25,\levelzero); % up

% flat
\draw[thick] (2.25,\levelzero) -- (2.75,\levelzero);

% spike 110
\draw[thick] (2.75,\levelzero) -- (3, \levelzerobottom) node (bottom110) {}; % up
\draw[thick] (bottom110.center) {} -- (3.25,\levelzero); % up

% flat 
\draw[thick] (3.25,\levelzero) -- (3.75,\levelzero);

%%%%%%%%%%%%%%%%%%%%%%%%%%%%%%%%%%%%%%%%%%%%%%%%%%%%%%%%%%%%%%%%%%%%%%%%%%%%%%
% LEVEL 1
%%%%%%%%%%%%%%%%%%%%%%%%%%%%%%%%%%%%%%%%%%%%%%%%%%%%%%%%%%%%%%%%%%%%%%%%%%%%%%

\node at (\leftlabel+0.35,\levelone-0.15) {\textcolor{blue}{Extender}};

% flat
\draw[thick] (-0.25,\levelone) -- (1.75,\levelone);

% spike 100
\draw[thick] (1.75,\levelone) -- (2, \levelonebottom) node (bottom100) {}; % up
\draw[thick, name path=up100] (bottom100.center) -- (2.25,\levelone); % up

% flat
\draw[thick] (2.25,\levelone) -- (2.75,\levelone);

% spike 110
\draw[thick] (2.75,\levelone) -- (3, \levelonebottom) node (bottom110) {}; % up
\draw[thick] (bottom110.center) {} -- (3.25,\levelone); % up

% flat 
\draw[thick] (3.25,\levelone) -- (3.75,\levelone);

% lengths of these set by hand so as to not increase bounding box
% swap path to draw to see for yourself
\path[name path=from100] (bottom100.center) -- ++(-185:2.6cm);
\path[name path=from110] (bottom110.center) -- ++(-185:2cm);
\path[name path=vertical] (-0.25,\levelone) -- (-0.25,\levelonebottom);

\draw[thick,blue,dashed, name intersections={of= from110 and up100}] (bottom110.center) -- (intersection-1);
\draw[thick,blue,dashed, name intersections={of= from100 and vertical}] (bottom100.center) -- (intersection-1);

%%%%%%%%%%%%%%%%%%%%%%%%%%%%%%%%%%%%%%%%%%%%%%%%%%%%%%%%%%%%%%%%%%%%%%%%%%%%%%
% LEVEL 2
%%%%%%%%%%%%%%%%%%%%%%%%%%%%%%%%%%%%%%%%%%%%%%%%%%%%%%%%%%%%%%%%%%%%%%%%%%%%%%

\node at (\leftlabel+0.35,\leveltwo-0.07) {Output};

% flat
%\draw[thick] (-0.5,\leveltwo) -- (1.75,\leveltwo);

% spike 100
\path[thick] (1.75,\leveltwo) -- (2, \leveltwobottom) node (bottom100) {}; % up
\path[thick, name path=up100] (bottom100.center) -- (2.25,\leveltwo); % up

% flat
%\draw[thick] (2.25,\leveltwo) -- (2.75,\leveltwo);

% spike 110
\path[thick] (2.75,\leveltwo) -- (3, \leveltwobottom) node (bottom110) {}; % up
\draw[thick] (bottom110.center) {} -- (3.25,\leveltwo); % up

% flat 
\draw[thick] (3.25,\leveltwo) -- (3.75,\leveltwo);

% lengths of these set by hand so as to not increase bounding box
% swap path to draw to see for yourself
\path[name path=from100] (bottom100.center) -- ++(-185:2.6cm);
\path[name path=from110] (bottom110.center) -- ++(-185:2cm);
\path[name path=vertical] (-0.25,\leveltwo) -- (-0.25,\leveltwobottom);

\draw[thick, name intersections={of= from110 and up100}] (bottom110.center) -- (intersection-1);
\draw[thick, name intersections={of= from100 and vertical}] (bottom100.center) -- (intersection-1);

%%%%%%%%%%%%%%%%%%%%%%%%%%%%%%%%%%%%%%%%%%
% DRAW THE MIDDLE
%%%%%%%%%%%%%%%%%%%%%%%%%%%%%%%%%%%%%%%%%%
\draw[thick, name intersections={of= from110 and up100}] (bottom100.center) -- (intersection-1);

%%%%%%%%%%%%%%%%%%%%%%%%%%%%%%%%%%%%%%%%%%%%%%%%%%%%%%%%%%%%%%%%%%%%%%%%%%%%%%
\end{tikzpicture} 
}
\caption{\coNP lower bound construction. Troughs encode false assignments.}
\label{fig:conp}
\end{figure}

To obtain \coNP hardness, we use essentially the same technique, but with one
important difference in our encoding of bits. 
In the \NP-hardness result we used the constant $L$ to encode a zero bit, and a
peak that touches the constant $H$ to encode a one bit. To prove \coNP hardness,
we flip that upside down.
\begin{itemize}[leftmargin=0.3cm]
\item If a bit is one during an interval, then the relative value of the state
will remain at $H$ for the entire interval.
\item If a bit is zero during an interval, then this is encoded as a trough,
during which the relative value touches $L$. 
\end{itemize}
We use this encoding, which we call the \emph{reverse encoding} throughout the
\coNP-hardness construction: all of the states of the enumeration gadget use the
reverse encoding, and the formula evaluation is also done in reverse encoding.
We end up with a state whose relative value encodes $F$ in reverse encoding, as
shown in Figure~\ref{fig:conp}.

With the reverse encoding, if $F$ is always true, then the relative value of the
state will be $H$. If there exists an input that makes $F$ false, then this will
be encoded as a trough. We can extend this back to time zero using an extender
state with a carefully chosen holding rate\footnote{As for \NP-hardness, this 
holding rate can be $1/2$.}, though this time the extender state
must be a minimizer state, since we want the extender player to obtain a lower
value by waiting until $F$ is not satisfied.

The end result is that the relative value at time $0$ is $H$ if $F$ is always
true, and it is strictly less than $H$ if there exists an assignment to
variables that makes $F$ false. Again, this construction uses only three holding
rates, so we obtain the following theorem. 
%\begin{theorem}\label{thm:conp_hard}
%For an SPTG, deciding whether $v(s, 0) \ge c$ for a given state $s$ and constant
%$c$ is \coNP-hard, even if the game has only holding rates in $\{0,1/2,1\}$.
%\end{theorem}
\begin{restatable}{theorem}{conphardthm}
\label{thm:conp_hard}
For an SPTG, deciding whether $v(s, 0) \ge c$ for a given state $s$ and constant
$c$ is \coNP-hard, even if the game has only holding rates in $\{0,1/2,1\}$.
\end{restatable}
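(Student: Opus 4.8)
The plan is to mirror the \NP-hardness construction of Theorem~\ref{thm:np_hard} almost verbatim, reducing from the complement of SAT: given a Boolean formula $F$ on $n$ variables, we decide whether $F$ is \emph{valid} (true on all $2^n$ assignments), which is \coNP-complete. The reduction will build exactly the same enumeration gadget as for \NP-hardness, but reinterpreted through the \emph{reverse encoding}: a one-bit in an interval corresponds to the relative value resting at $H$ for that interval, and a zero-bit corresponds to a downward trough touching $L$ in the middle of the interval. This amounts to replacing peaks-on-$L$-baseline by troughs-on-$H$-baseline. First I would observe that the exponentially-many-event-point games of Section~\ref{sec:exponential}, together with the state-insertion transformation of Appendix~\ref{app:enc_bools}, already produce waveforms that are symmetric between peaks and troughs, so the same machinery produces the reverse-encoded enumeration states $v_1,\dots,v_n$ and $\lnot v_1,\dots,\lnot v_n$; only the choice of which baseline is ``rest'' and which excursion is ``signal'' changes.

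The next step is the formula-evaluation gadget. Here the roles of the two players' envelopes must be swapped relative to the \NP case, since minimum and maximum interact with the reverse encoding dually. Concretely, for a sub-formula $F' = x \lor y$ we want the relative value of $F'$ to be $H$ (a one) in an interval iff $x$ or $y$ is $H$ there, i.e.\ we want the \emph{upper} envelope, so $F'$ is still a maximizer state with holding rate~$0$; for $F' = x \land y$ we want $F'$ at $H$ iff both inputs are at $H$, i.e.\ the \emph{lower} envelope, so $F'$ is a minimizer state with holding rate~$1$. Wait — this is the same assignment as in the \NP construction; the point is that in the reverse encoding, $\max$ of two reverse-encoded bits computes the logical OR (a trough survives only if both inputs trough), and $\min$ computes the logical AND, which is exactly correct. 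I would verify that the peaks/troughs remain centred in their intervals (invoking the analogue of Lemma~\ref{lem:detector}) so that pointwise $\min$/$\max$ of the value functions faithfully computes bit-wise Boolean operations interval by interval. After pushing negations to the literals via De Morgan, the output state $F$ then has relative value $\equiv H$ on an interval iff $F$ is satisfied by the corresponding assignment, and has a trough touching $L$ iff $F$ is falsified there.

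Finally I would add the extender state, now a \emph{minimizer} state with a carefully chosen holding rate (which can again be taken to be $1/2$, i.e.\ horizontal in relative-value terms). The minimizer wishes to wait from time~$0$ forward until reaching the next interval where $F$ is falsified, absorbing the trough's low value and decaying it only gradually back toward $H$ on the left side of each trough; the decay rate is tuned so that a single trough anywhere in $[0,1]$ keeps the relative value at time~$0$ strictly below $H$. Thus the relative value (equivalently, by the fixed rotation, the absolute value) of the extender at time~$0$ equals $H$ iff $F$ is valid, and is strictly less than $H$ iff $F$ has a falsifying assignment. Choosing the threshold $c$ to be the absolute value corresponding to relative value $H$ at time~$0$, the query ``$\val(s,0) \ge c$'' at the extender state $s$ is true iff $F$ is valid, completing the reduction from the complement of SAT. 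Since the enumeration gadget uses holding rates $\{0,1\}$, the formula gadget uses $\{0,1\}$, and the extender adds $1/2$, the whole construction uses only rates in $\{0,1/2,1\}$.

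The main obstacle I anticipate is the same delicate point flagged in the \NP case: ensuring that the $\min$/$\max$ operations at the formula gates actually compute bit-wise Boolean operations, which requires the troughs (and any residual peaks) to line up precisely — all centred in their intervals with matching widths — across all input states feeding a gate. Getting the state-insertion transformation of Appendix~\ref{app:enc_bools} to produce perfectly aligned, centred troughs in the reverse encoding, and proving the extender's decay-rate calculation is robust enough that one trough among $2^n$ intervals still registers at time~$0$ without underflowing past $L$, is where the real work lies; everything else is a routine dualization of the \NP argument.
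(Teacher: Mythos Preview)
Your proposal is correct and follows essentially the same approach as the paper: reduce from Tautology using the reverse encoding, keep the same $\max$/$\min$ implementation of $\lor$/$\land$ (as you correctly worked out mid-paragraph), and replace the extender by a minimizer state with holding rate $1/2$, yielding $\val(x,0)=H$ iff $F$ is valid. The paper likewise notes that the \NP and \coNP proofs are nearly identical dualizations and proves them together in Appendix~\ref{app:np_conp_hardness}; the ``main obstacle'' you flag (aligned, centred troughs) is exactly what the detector-state Lemma~\ref{lem:detector} handles in both encodings.
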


The proof of this theorem appears in Appendix~\ref{app:np_conp_hardness}. Since
the \NP and \coNP-hardness proofs are very similar, we prove them both at the
same time in the appendix.

\section{\PSPACE lower bound}

\begin{figure*}
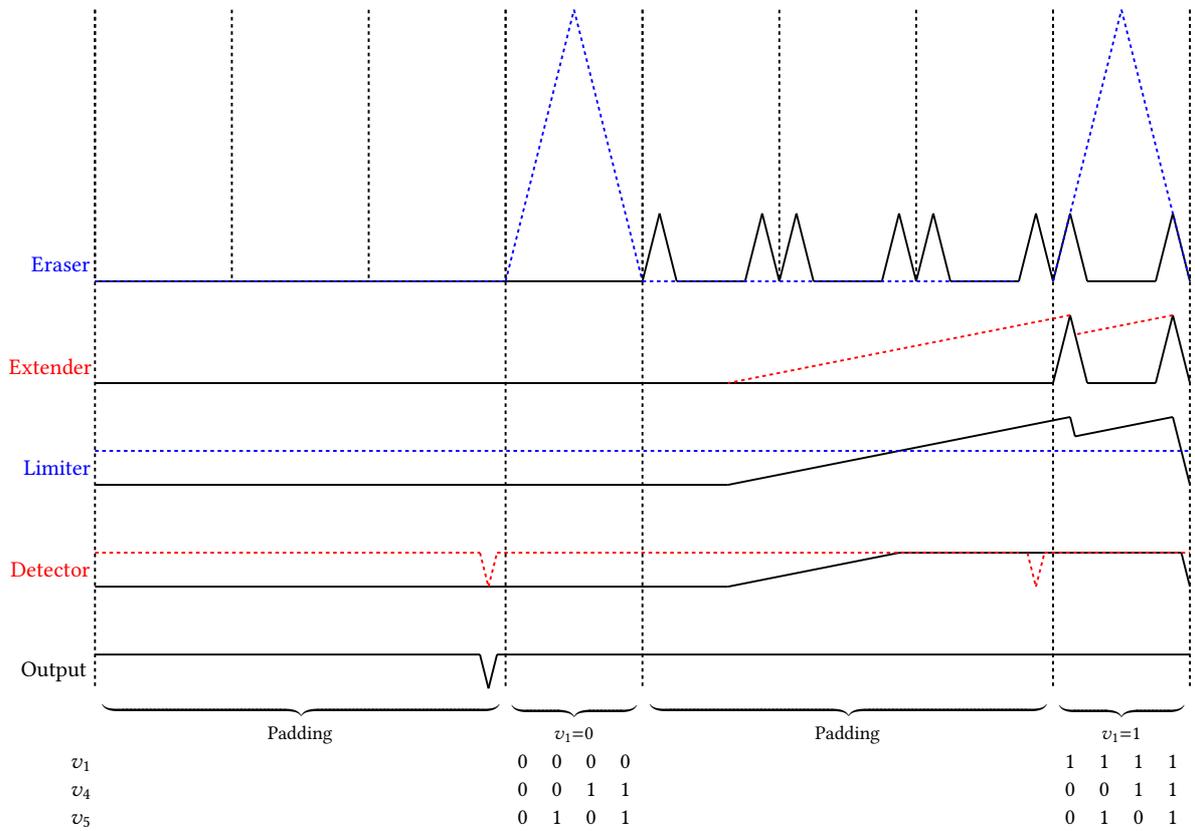

\centering
\scalebox{0.9}{
\input figures/pspace.tex
}
\caption{\PSPACE lower bound construction.}
\label{fig:pspace}
\end{figure*}

We now move on to our main result, and show that computing the value of a
particular state at time zero is \PSPACE-hard. We will reduce directly from
TQBF, which is the problem of deciding whether a quantified Boolean formula is
true. The high level idea is to make use of the techniques from our \NP-hardness
reduction to deal with existential quantifiers, and the techniques from our
\coNP-hardness reduction to deal with universal quantifiers.

As a running example, we will use the formula
\begin{center}
$F = (v_1 \lor v_4 \lor v_5) \land (v_1 \lor \lnot v_4 \lor \lnot v_5)$,
\end{center}
and we will apply the reduction to the TQBF instance 
$$\forall v_1 \; \exists v_4 v_5 \cdot F(v_1, v_4, v_5).$$
The slightly odd choice of variable indices will be explained shortly.

\paragraph{\bf Overview.}

As in previous reductions, we will divide the time period into intervals, and we
will associate each interval with a bit string, and evaluate the formula on each
of those bit strings. However, in this setting we must now deal with both types of
quantifiers. 

Our solution is shown in Figure~\ref{fig:pspace}. We use the
quantifiers to divide the bit strings into blocks, and place padding between the
blocks. For our running example, we have two blocks, which correspond to the
case where $v_1 = 0$ and the case where $v_1 = 1$. So we have split the bit
strings according to the universal quantifier in the formula. We will refer to
the two sub-instances as 
$F'(v_1) := \exists v_4 v_5 \cdot F(v_1, v_4, v_5)$ .

The idea is to evaluate the two blocks separately using the method from the
\NP-hardness reduction. So we will determine whether $F'(v_1)$ holds when $v_1 =
0$, and independently determine whether it holds when $v_1 = 1$. This then
leaves us with the problem of deciding whether $\forall v_1 \cdot F'(v_1)$ is
true, which will be evaluated using methods from the \coNP-hardness reduction.
We do this by turning the output of the two independent evaluations of $F'$
into a reverse encoded input for the \coNP problem.

\paragraph{\bf The padding.}

The padding between the blocks is used to ensure that the two evaluations of
$F'$ are independent. The padding is implemented by inserting extra dummy
variables into the formula. In our running example, we add the extra dummy
variables $v_2$ and $v_3$, but we do not modify the formula itself in any way.
As shown in the first line of Figure~\ref{fig:pspace}, this leads to each block
being repeated four times, since we enumerate all four possible settings for
$v_2$ and $v_3$, but none of these change the output of the formula. 

The first step is to take the minimum of this relative value function with a
state that we call the \emph{eraser}, whose relative value function is shown in
blue in the first line of Figure~\ref{fig:pspace}. This value function peaks
during the block\footnote{We do so by encoding the formula $(v_2\wedge v_3)$
using our previous constructions for formulas but only over these two variables,
which results in the very high peak.} that we would like to keep, but stays at
the value $L$ during the blocks that we would like to erase. So by taking the
minimum, we keep the right-most copy, and erase the other three, which gives us
the padding between the blocks.

Recall from the \NP-hardness reduction that the extender
state is used to detect whether the relative value has peaked during a block.
Furthermore, the relative value of the extender decays over time, and that the
rate at which this happens is controlled by the holding rate of the
extender. 

In the \PSPACE-hardness reduction, we choose the decay rate so that the value
will always decay back to $L$ during the padding before the next block starts.
This can be seen in the extender state in the second line of
Figure~\ref{fig:pspace}. In the right-hand block, there are two assignments that
make the formula true, and this information is carried to the left-hand edge of
the block by the extender. The padding provides enough space for the extender to
decrease back to $L$ before the left-hand block begins.

\paragraph{\bf Changing the encoding.}
So far we have independently evaluated $F'(v_1)$ for both possible settings of
$v_1$, and this is encoded in the second value function of
Figure~\ref{fig:pspace}. The rest of the steps in that  figure show how we then
turn this into a reverse encoding of $\forall v_1 \cdot F'(v_1)$. 

The overall goal is to detect whether the extender is above $L$ at the left-hand
boundary of each block. In fact, we choose the decay rate of the extender to be
slow enough to guarantee that if there was a peak during the block the
value of the extender is above $(H + L)/2$ at the left-hand edge of the block. 

The first step is to take the minimum of the relative value function with a
\emph{limiter} state, shown in blue in the third line of
Figure~\ref{fig:pspace}, whose relative value is constant at $(H + L)/2$. This
effectively chops off the top half of the function. We then construct a state,
known as the \emph{detector}, shown in red in the fourth line of
Figure~\ref{fig:pspace}. This state has a relative value function
that remains
at $(H + L)/2$ throughout, except at the left-hand edge of each block, where
there is a trough that touches $L$.
We do this  by encoding the formula $(\neg v_2\vee v_3\vee \neg v_4\vee \neg
v_5)$ in reverse encoding.

We take the maximum of the value function with the detector. This does the
following.
\begin{itemize} [leftmargin=0.3cm]
\item If there was a peak during the block, the value of the extender will be
above $(H + L)/2$, and so the trough in the detector will be eliminated. The limiter ensures
that the relative value does not exceed 
$(H + L)/2$ in this case.
\item If there was no peak during the block, the value of the extender will be
$L$, and so the trough in the detector will not be eliminated.
\end{itemize}
The end result is that we have a trough in the final value function if and only
if $F'(v_1)$ was false for the corresponding block.

Observe that this is a valid reverse encoding of the problem $\forall v_1 \cdot
F'(v_1)$, with the only change being that the relative function ranges between
$L$ and $(H+L)/2$ rather than $L$ and $H$. So we can apply the techniques from
the \coNP-hardness reduction to determine whether 
$\forall v_1 \cdot F'(v_1)$ is true.

\paragraph{\bf \PSPACE-hardness.}

So far, we have seen how to deal with alternations of the form $\forall x
\exists y$, but the same techniques can also deal with alternations of the
form $\exists y \forall x$. The only difference is that we must turn a
reverse encoded output into the normal encoding, which can again be done with
appropriately constructed limiter and detector states.

The full \PSPACE-hardness result applies the two techniques inductively.
Every alternation of quantifiers in the formula is handled by turning one
encoding into the other, ready to be evaluated by the next level of quantifiers.
The full details can be found in Appendix~\ref{app:pspace}, where we prove the
following result.

%\begin{theorem}\label{thm:pspace}
%For an SPTG, deciding whether $v(s, 0) \ge c$ for a given state $s$ and constant
%$c$ is \PSPACE-hard.
%\end{theorem}
\begin{restatable}{theorem}{pspacethm}
\label{thm:pspace}
For an SPTG, deciding whether $v(s, 0) \ge c$ for a given state $s$ and constant
$c$ is \PSPACE-hard.
\end{restatable}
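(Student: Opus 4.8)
The plan is to give a polynomial-time reduction from TQBF. Fix an instance $\Phi = Q_1 x_{i_1}\, Q_2 x_{i_2} \cdots Q_k x_{i_k} \cdot F$, where consecutive equal quantifiers have been grouped so that $Q_1,\dots,Q_k$ strictly alternate between $\exists$ and $\forall$, and $F$ is a Boolean formula over the $x$'s in De Morgan normal form (negations on literals). First I would fix the layout of bit positions: interleave the ``real'' variables of $\Phi$ with enough fresh \emph{padding} variables so that the $2^N$ time intervals produced by the enumeration gadget of Section~\ref{sec:exponential} ($N$ being the total number of bit positions) organise into a nested block structure --- the outermost grouping corresponding to assignments of $Q_1$, the next grouping to $Q_2$, and so on --- with padding intervals separating sibling blocks at every level. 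Running the formula-evaluation gadget from the \NP-hardness construction (Theorem~\ref{thm:np_hard}) on top of the enumeration gadget then yields a state whose relative value peaks to $H$ exactly on the intervals whose bit string satisfies $F$.

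Next I would process the quantifier blocks from innermost to outermost, alternately reusing the \NP- and \coNP-hardness machinery. Suppose the innermost block $Q_k$ is $\exists$. I attach an extender state as in the \NP-hardness construction, but pin its holding rate so that it decays slowly enough that, if some interval inside a block peaked to $H$, the extender is still above $(H+L)/2$ at the left edge of that block, yet fast enough that it returns all the way to $L$ before the next block to the left begins --- this is exactly what the padding is for. I then convert the encoding: take the $\min$ with a \emph{limiter} state of constant relative value $(H+L)/2$ to chop off the top half, then the $\max$ with a \emph{detector} state (itself a formula-evaluation gadget in reverse encoding) that equals $(H+L)/2$ everywhere except for a trough down to $L$ at the left edge of every block. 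The resulting state, read at the block left edges, is a valid \emph{reverse encoding} of $\exists x_{i_k} \cdots \cdot F$ over the remaining variables, ranging in $[L,(H+L)/2]$ rather than $[L,H]$ --- which is harmless, since all gadgets are insensitive to an affine rescaling of $[L,H]$. I then apply the \coNP-hardness machinery (Theorem~\ref{thm:conp_hard}) to the block $Q_{k-1}=\forall$: a minimizer extender carries ``some interval of the block is a trough'' (i.e.\ ``the inner formula fails for some setting of the block variables'') back to the block's left edge, and a second limiter/detector pair converts back to the normal encoding, ready for $Q_{k-2}$. Iterate. After all $k$ conversions we are left with a single designated state $s$; choosing the threshold $c$ as a relative value strictly between $L$ and the top of the (rescaled) range --- within the definite gap guaranteed by the outermost extender --- makes $\val(s,0)\ge c$ equivalent to $\Phi$ being true.

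Finally I would verify size and structure. The enumeration gadget is polynomial by Section~\ref{sec:exponential} (it yields exponentially many intervals from polynomially many states); the formula-evaluation, eraser, limiter, and detector gadgets each add $O(|F|+N)$ states per block, and there are at most $k\le N$ blocks, so the total is polynomial. Each gadget is acyclic and inherits the degree and treewidth bounds, and the number of distinct holding rates grows by a constant per alternation, giving $O(k)$ holding rates overall, consistent with the $k{+}2$-rates statement for the $k$-th level of the polynomial hierarchy.

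I expect the main obstacle to be the quantitative bookkeeping around the extender decay rates. At each level $j$ the extender's holding rate must be chosen so that its relative value is provably above the ``keep'' threshold $(H+L)/2$ at the left edge of a block that contained a peak --- which forces a lower bound on how much padding (how many dummy variables) the block must span --- while at the same time it must have decayed back to exactly $L$ before the previous block, forcing an upper bound on its decay over the padding. These constraints must hold simultaneously for all $k$ levels as the active portion of the waveform shrinks and the band between the high and low reference values halves at each conversion, so one must check that the induced system of inequalities on padding widths and holding rates remains feasible all the way up. A secondary subtlety, inherited from the formula gadget, is keeping every peak and every trough exactly centred in its interval so that the $\min$/$\max$ taken at $\land$/$\lor$ gates and at the limiter/detector steps align correctly (cf.\ Lemma~\ref{lem:detector}).
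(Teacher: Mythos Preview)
Your proposal is correct and follows essentially the same approach as the paper: a recursive reduction from TQBF that alternates between the \NP- and \coNP-style extender machinery, using padding variables between quantifier blocks, a limiter at the midpoint, and a detector to flip between the straight and reverse encodings, with the active band halving at each level. The obstacles you flag --- feasibility of the extender decay rates against the padding widths, and alignment of peaks/troughs at interval midpoints --- are precisely the two places where the paper does the real work (the explicit holding rate $1/2-\frac{v'}{2.5\cdot 2^{-S+1}}$ in Lemma~\ref{lem:slim region}, and Lemma~\ref{lem:detector}); the only step you leave implicit in the narrative is that the eraser must be applied \emph{before} the extender, since otherwise the repeated copies of the block across the padding would themselves peak and prevent the decay to $L$.
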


It is also worth noting that if the formula only has $k$ alternations, then the
resulting game uses $k+2$ distinct holding rates. The holding rates $0$ and $1$
are already used by the exponential lower bound game. Each level of alternation
uses an extender state with a distinct holding rate, which accounts for the
other $k$ holding rates. Hence, we also get the following result.

%\begin{theorem}\label{thm:poly_hier} 
%For an SPTG with $k+2$ distinct holding rates, deciding whether $v(s, 0) \ge c$
%for a given state $s$ and constant $c$ is hard for the $k$-th level of the
%polynomial-time hierarchy.
%\end{theorem}
\begin{restatable}{theorem}{polyhierthm}
\label{thm:poly_hier} 
For an SPTG with $k+2$ distinct holding rates, deciding whether $v(s, 0) \ge c$
for a given state $s$ and constant $c$ is hard for the $k$-th level of the
polynomial-time hierarchy.
\end{restatable}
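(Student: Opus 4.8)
The plan is to obtain Theorem~\ref{thm:poly_hier} as a direct refinement of the \PSPACE-hardness construction of Theorem~\ref{thm:pspace}, tracking the number of distinct holding rates as a function of the number of quantifier alternations. Recall that $\Sigma_k^p$-hardness (and dually $\Pi_k^p$-hardness) follows from a reduction from the $\Sigma_k$-TQBF problem, i.e.\ deciding truth of a quantified Boolean formula with exactly $k$ quantifier blocks. So I would first fix such a formula $\Phi = Q_1 X_1 \, Q_2 X_2 \cdots Q_k X_k \cdot F$ with $k-1$ alternations (the blocks alternate between $\exists$ and $\forall$), and then run the inductive construction from the proof of Theorem~\ref{thm:pspace} verbatim: the innermost quantifier-free formula $F$ is evaluated via the formula-evaluation gadget of Section~\ref{sec:np_conp}, and each successive quantifier block is peeled off using either the \NP-hardness machinery (for an $\exists$ block) or the \coNP-hardness machinery (for a $\forall$ block), with eraser/limiter/detector/extender states bridging one encoding to the other at each alternation.

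The key observation to extract is the holding-rate bookkeeping. The base layer --- the exponentially-many-event-point games of Section~\ref{sec:exponential} together with the formula-evaluation gadget --- uses only the holding rates $0$ and $1$ (and the eraser, limiter, detector states introduced along the way are themselves built from formula gadgets over a few variables, so they contribute no new rates, exactly as noted for the \PSPACE\ construction). The only states that require a \emph{fresh} holding rate are the extender states: there is one extender state per quantifier block whose decay rate must be tuned to the length of the padding at that level, and these decay rates are genuinely different across levels because the block lengths differ. Hence a formula with $k$ quantifier blocks, i.e.\ $k-1$ alternations in the sense used in the paper's running-text claim (or, matching the theorem's phrasing, $k$ alternations giving $k$ extenders), yields a game with $k+2$ distinct holding rates: the rates $0$ and $1$ from the base, plus one per extender. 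So I would (i) state precisely which $\Sigma_k/\Pi_k$-TQBF variant is being reduced from, (ii) invoke the construction of Theorem~\ref{thm:pspace} applied to that bounded-alternation instance, (iii) verify correctness is inherited unchanged since the inductive argument never used unboundedly many alternations in an essential way, and (iv) count holding rates as above to conclude hardness for the $k$-th level of the polynomial hierarchy.

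The main obstacle, and the step deserving the most care, is the holding-rate count itself --- specifically confirming that \emph{no} construction step other than the extenders needs a new rate, and that the $k$ extenders really can be made to use rates distinct from $0$ and $1$ and from each other (or, if some coincide, that this only helps). This requires looking inside Appendix~\ref{app:pspace}: the eraser, limiter and detector states are asserted there to be implementable via formula-evaluation gadgets over auxiliary variables, which reuse rates $0$ and $1$ only; the extender at level $j$ has a holding rate determined by requiring its relative value to decay from (roughly) $H$ back to $L$ over the padding region at level $j$, and since padding lengths shrink geometrically down the levels these rates are pairwise distinct and, after the relative-value ``rotation'' is undone, land at values other than $0$ and $1$. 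A secondary point to be careful about is matching the paper's convention for counting ``alternations'': the running text says $k$ alternations needs $k+2$ rates because alternations account for $k$ extenders, so I would phrase the theorem's hypothesis to make ``$k$-th level of the polynomial hierarchy'' line up with exactly $k$ extender states. Everything else --- the correctness of the reduction --- is inherited wholesale from Theorem~\ref{thm:pspace}, so the proof is essentially a resource-accounting corollary rather than a new argument.
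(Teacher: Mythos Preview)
Your proposal is correct and matches the paper's approach exactly: the paper also derives Theorem~\ref{thm:poly_hier} as a one-line holding-rate count on top of the \PSPACE construction, noting that the exponential-event-point base uses rates $\{0,1\}$ and each quantifier block contributes one extender with its own rate.

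One small inaccuracy to watch when you carry this out: the limiter is \emph{not} built purely from formula gadgets---it routes through an auxiliary state $s''$ with holding rate $1/2$, and likewise the boolean-bounding state $L$ in the enumeration gadget has rate $1/2$. So the base construction already uses $1/2$ in addition to $\{0,1\}$. This does not break the $k+2$ count, because (as the paper notes in the PromiseSPTG discussion) the \emph{outermost} extender can be taken with a horizontal relative-value line, i.e.\ rate exactly $1/2$, so this rate is absorbed into the extender tally rather than being a separate contribution. Just be sure to phrase the count that way rather than claiming the limiter is a formula gadget.
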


\subsection{Other decision problems}
All of our results so far have shown hardness of deciding whether $v(s, 0) \ge
c$ for some state $s$, and some constant $c$. In this section, we point out that
our construction can also prove hardness for other, related, decision problems. 

As in the \NP- and \coNP-hardness section, we can let the outer-most extender
state produce a horizontal line, rather than a decaying one. 
This ensures that we can pick two constants $H'$ and $L'$ such that
$\val(v,0)=H'$
if the formula is true and $\val(v,0)=L'$ if the formula is false.
Thus, all our hardness proofs for each of \NP-, \coNP- and \PSPACE-hardness, and
hardness for the $k$-th level of the polynomial time hierarchy, apply to the
following promise problem.

\paragraph{\bf PromiseSPTG:} Given an SPTG, a state $v$ and two numbers
$c>c'$, with the promise that $\val(v,0)\in \{c,c'\}$, is ${\val(v,0)=c}$?

\noindent This problem can be reduced in polynomial time to each of the following
problems.
\begin{enumerate} [leftmargin=0.4cm]
\item DecisionSPTG: Given an SPTG $G$, a state $v$ and a value $c$, is $\val(v,0)\geq c$?
\item EqualDecisionSPTG: Given an SPTG $G$, a state $v$ and a value $c$, is $\val(v,0)= c$?
\item $\epsilon$-StrategySPTG: Given an SPTG $G$, a state $v$, an $\epsilon> 0$ and an action $a$ is there an $\epsilon$-optimal strategy that uses~$a$ at time $0$?
\item StrategySPTG: Given an SPTG $G$, a state $v$ and an action $a$ is there an optimal strategy that uses $a$ at time~$0$?
\item AllOptimalStrategiesSPTG: Given an SPTG $G$, a state $v$ and an action $a$
do all optimal strategies use $a$ at time~$0$?
\end{enumerate}
The reduction is trivial for (1) and (2), since we have just removed the
promise. 

For (3), (4) and (5), fix some $\epsilon<\frac{c-c'}{2}$. We add
another minimizer state $v'$ to the game with holding rate $M+1$, where $M$ is
the largest holding rate in the rest of the game. The state $v'$ has an edge to
$v$ and an edge to a goal state. The edge to $v$ has cost $0$ and the edge to
the goal state has cost~$\frac{c+c'}{2}$.

It is clear that $\val(v',0)=c'$ if and only if $\val(v,0)=c'$. Also, if
$\val(v,0)=c'$, then no $\epsilon$-optimal strategy can use the edge to $v$ at
time $0$, so no optimal strategy can do this either. Similarly, if
$\val(v,0)=c$, then no $\epsilon$-optimal strategy can use the edge to the goal
state. This proves hardness for (3) and (4). 
Also, since the holding rate is larger than $M$ in the above
construction, the minimizer will not wait in $v'$ under an optimal strategy and
therefore he must use an edge immediately, which proves hardness for (5).

That said, it is $\epsilon$-optimal, for any
$\epsilon>0$, to wait for a duration of $\frac{\epsilon}{M}$ in $v'$ and then
make the optimal choice in either cases, when starting in $v'$ at time 0.
This explains why the $\epsilon$-optimal variant of AllOptimalStrategiesSPTG
does not appear in our list.

Note that parametrising the problems with time $t$, instead of always using
time~$0$, trivially makes the questions even harder. Also, using techniques
similar to what we use for shifting in our construction allow us to show
hardness for any of these problems for a given time $t\in(0,1)$. Finally, as
shown by~\cite{HIM13}, finding $\val(v,1)$ and the optimal and
$\epsilon$-optimal choice at time~1 can be solved in time $O(m+n\log n)$ and
is thus in \P.

\section{Properties of our hard instances}
\label{sec:graph_prop}

The instances that we have constructed actually lie in a very restricted class
of graphs, which we describe in this section. 

\paragraph{\bf The exponential-many event point games.}
In Section~\ref{sec:exponential} the family of games that we constructed are all
DAGS with degree four, as seen in Figure~\ref{fig:basiclb}. In
Appendix~\ref{app:graph_prop}, we show that by slightly modifying the graph,
this can actually be reduced to a {\bf DAG with degree three}.
Furthermore, there are only {\bf two distinct holding rates} namely the ones in $\{0,1\}$.

%In essence, our family $F$ that has an exponential number of event points is a
%{\bf DAG with degree four}, see , (we can easily lower it to {\bf degree
%three}, by, for each state $v$ of our construction, adding a new state $v'$,
%which is a minimizer state with rate $1$ with a single edge to $v$ and then
%changing all edges that goes to $v$ to instead go to $v'$. All original states
%have the same value function as before and all newly added states have the same
%value function as their corresponding state. Also, doing so does not change any
%of the other graph properties). 

%\footnote{i.e. that it can be drawn in the plane without edges
%crossing} 

The game also has a {\bf planar graph}. This can be seen by redrawing
Figure~\ref{fig:basiclb} in the following way. The crossing of edges in the
middle of each level can be eliminated by taking each edge
$(v_{r}^i,v_{\ell}^{i-1})$ and making it 
``wrap-around'' under the structure by passing $v_{r}^0$ on the right before going to
the left side and moving up.

While proving an upper bound on the number of event points in a class of games
similar, but more general than ours, the authors of \cite{R11,BLMR06} use a
technique based on adding more and more \emph{urgent} states to the game. A
state is urgent if the owner is not allowed to wait in it. In our construction
with exponentially-many event points, the minimizer would not want to wait in a
state with rate $1$, and the maximizer would not want to wait in a state with
rate 0, because in both cases this is the worst possible rate for them. So the
optimal strategies only wait in the state $v_{r}^0$. Therefore, making any
number of states, besides $v_{r}^0$, urgent does not change the value functions.
Hence, our results show that, while games with no non-urgent states are easy
(because they can be solved as a priced game) games with a \emph{single}
non-urgent state still may have an exponential number of event points.

In Appendix~\ref{app:graph_prop} we give a more in depth argument for this and
also argue that each member of the family have {\bf pathwidth,
treewidth, cliquewidth and rankwidth three}.

\paragraph{\bf The \PSPACE-hard games.}

The \PSPACE-hard games add several extra gadgets to the exponentially-many 
event point games.
These gadgets essentially form a directed tree structure, whose leafs have
outgoing edges to a unique copy of one of our exponential lower bound games. Hence, the
games continue to be {\bf DAGs} and {\bf planar} (because no edge goes ``over'' 
the top states in our exponentially-many event point games), and the gadgets can also be
constructed so that the games continue to have {\bf degree three}. 
In Appendix~\ref{app:graph_prop} we give a more in depth argument for this and
also argue that each member of the family have {\bf treewidth,
cliquewidth and rankwidth} three. We lose bounded pathwidth as a property,
which is caused by the large tree of states that we add to construct our gadgets.

For the \NP, \coNP, and polynomial time hierarchy results, we show in
Appendix~\ref{app:graph_prop} that a variant of our constructions (that is not
planar and where the treewidth, cliquewidth and rankwidth is 4 instead of 3) has
the property that for \NP and \coNP hard instances there are only 2 states
that cannot be made urgent.
Each alternation adds one extra state that cannot be made urgent.
Hence, it is \NP-hard to solve games with 2 non-urgent states and hard for the
$k$-th level of the polynomial time hierachy to solve games with $k+1$
non-urgent states. 

\section{Upper bounds for undirected graphs, trees and DAGs}
\label{sec:upper}

In Section~\ref{sec:exponential} and Section~\ref{sec:graph_prop}, we showed
that there are an exponential number of event points for SPTGs belonging to even
very restrictive graph classes. In this section we show that there are 
some classes of games in which there is at most a polynomial number of event points.
Specifically, this holds for undirected graphs and trees. 
It then follows by~\cite{HIM13} that the event point
iteration algorithm algorithm runs in polynomial-time
for these problems.

Secondly, we show that SPTGs on DAGs are in \PSPACE. The result extends to OCPTGs because of a reduction by~\cite{HIM13}. Our main result implies
that they are \PSPACE-hard and thus, this shows that they are \PSPACE-complete.

\paragraph{\bf Undirected graphs.}
The trick is to consider that whenever play goes to a maximizer state $v$ at
some time $t$ from some other state $v'$, the maximizer can choose to send the
play immediately back to state $v'$. Because our strategies are time-positional,
if the maximizer follows this strategy, and the play then ever goes to a maximizer state $v$ from some
other state $v'$, the play will continue going back and forth between $v$ and
$v'$ forever, and therefore never reach a goal state. The outcome is therefore
$\infty$, which is the best possible for the maximizer, and so we can assume
that he will adopt this strategy.

As shown in~\cite{HIM13}, if $\val(v,t')=\infty$ for some $t'$, then
$\val(v,t)=\infty$ for all $t$ and $\val(v,1)$ can be found in $O(m+n\log n)$.
In the remaining states, we can assume that maximizer states cannot be entered.
This allows us to solve the minimizer and goal states as a sub-game first (which
can be done in polynomial time since it is a priced timed
automata~\cite{LMS04,HIM13}). The remaining maximizer states are also easy to solve in
polynomial time once this has been done. Full details of the 
argument can be found in Appendix~\ref{app:pos}.

\paragraph{\bf Trees.}
The argument for trees is also fairly straightforward, in that the following
lemma (see Appendix~\ref{app:pos}) can easily be shown, using structural
induction and how value functions are computed by the value iteration algorithm.

We will say that a line segment $L$ is covered by a line or line segment $L'$ if $L\subseteq L'$ and also extend the notion to  sets, i.e. a set $S$ is covered by a set $S'$ if each element $L\in S$ is covered by some element $L'\in S'$ (which may depend on $L$).
%\begin{lemma}\label{lem:trees}
%Consider a state $s$ which is the root of a tree with $k$ leaves, for some number $k$.
%Then, let $L_s$ be the line segments of $\val(s,t)$.
%There exists a set $L_k$ of $k$ lines that covers $L_s$.
%\end{lemma}
\begin{restatable}{lemma}{treeslem}
\label{lem:trees}
Consider a state $s$ which is the root of a tree with $k$ leaves, for some number $k$.
Then, let $L_s$ be the line segments of $\val(s,t)$.
There exists a set $L_k$ of $k$ lines that covers $L_s$.
\end{restatable}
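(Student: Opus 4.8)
\textbf{Proof plan for Lemma~\ref{lem:trees}.}

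The plan is to prove the lemma by structural induction on the tree rooted at $s$, tracking how the value-iteration procedure builds $\val(s,t)$ from the value functions at the children. For the base case, $s$ is a leaf, so the tree has $k=1$ leaf: if $s$ is a goal state then $\val(s,t)\equiv 0$, which is covered by a single line, and if $s$ has no outgoing edges and is not a goal state then its value is $\infty$ everywhere, which is again covered by (the ``line'' at) $\infty$; either way one line suffices. For the inductive step, suppose $s$ has children $u_1,\dots,u_d$ through edges of cost $c_1,\dots,c_d$, where the subtree rooted at $u_j$ has $k_j$ leaves and $\sum_j k_j = k$. By the induction hypothesis, each $\val(u_j,\cdot)$ has its line segments covered by a set $L_{k_j}$ of at most $k_j$ lines.

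The key observation is how $\val(s,t)$ is assembled. First, waiting at $s$ with holding rate $r(s)$ turns each line $\ell$ (of slope $\alpha$) covering a piece of $\val(u_j,\cdot)$ into the line of slope $\alpha - r(s)$ obtained by the usual ``cost-to-reach'' shearing, and adding the edge cost $c_j$ is a vertical translation; neither operation increases the number of lines, so after processing edge $(s,u_j)$ we still have a set of at most $k_j$ lines covering the candidate value function ``go immediately along edge $j$, then play optimally''. Including the wait-until-a-later-time option does not add lines because, within each maximal linear piece, it simply selects among the already-present sheared lines (this is exactly the envelope structure used by the value-iteration algorithm of~\cite{HIM13}). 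Then $\val(s,t)$ is the lower envelope (if $s\in V_1$) or upper envelope (if $s\in V_2$) of these $d$ candidate functions. An envelope of piecewise-linear functions only introduces breakpoints where two constituent lines cross; crucially it does not introduce any \emph{new} lines — every line segment of the envelope lies on one of the lines already in $\bigcup_j L_{k_j}$ (after the shear/shift, which is a fixed affine map depending only on $s$ and $j$, so the image of a line is still a single line). Hence the union of the (transformed) sets $L_{k_j}$, which has size at most $\sum_j k_j = k$, covers $L_s$, completing the induction.

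I would write the argument by first stating precisely the two elementary facts used: (i) applying the map $\ell \mapsto (\text{slope } \alpha \mapsto \alpha - r(s),\ \text{plus vertical shift } c_j)$ sends a covering set of lines to a covering set of lines of the same size, and is exactly the operation the value-iteration algorithm performs when ``pulling back'' a child's value function along an edge; and (ii) the lower/upper envelope of finitely many piecewise-linear functions is piecewise-linear with every segment contained in one of the input lines — so if each input function's segments are covered by a set of lines, the envelope's segments are covered by the union. Fact (ii) is the standard envelope fact; fact (i) is where one must be slightly careful about the ``wait then move'' possibility, but since holding rates and costs are non-negative and the game runs on $[0,1]$, the ``optimal wait'' choice is itself dominated by one of the sheared lines on each sub-interval, so no extra lines appear.

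The main obstacle I anticipate is making fact (i) airtight — i.e.\ arguing carefully that allowing the player at $s$ to wait before committing to an edge does not create line segments lying on genuinely new lines. This is intuitively clear from the structure of the value-iteration algorithm of~\cite{HIM13} (optimal strategies have change points exactly at event points, and between consecutive event points the value is affine, lying on one of the finitely many ``pulled-back'' lines), so I would invoke that structural description rather than re-derive it, and then the rest of the induction is the routine bookkeeping $\sum_j k_j = k$ together with the envelope fact.
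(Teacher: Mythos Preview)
Your overall induction matches the paper: leaves give one line, and at an internal node you combine the children's covering sets via the envelope. The base case and the final count $\sum_j k_j = k$ are fine.

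The gap is exactly where you flag it, and your proposed fix is wrong. Waiting at $s$ is \emph{not} a shear sending slope $\alpha$ to $\alpha - r(s)$. Going immediately along $(s,u_j)$ gives $\val(u_j,t)+c_j$, which has the \emph{same} slopes as $\val(u_j,\cdot)$ (only shifted up by $c_j$). Waiting until a later time $t'$ and then moving gives $r(s)(t'-t)+c_j+\val(u_j,t')$, which as a function of $t$ (with $t'$ chosen optimally, hence at a breakpoint of $\val(u_j,\cdot)+c_j$) is a line of slope $-r(s)$. So the value-iteration pull-back of child $u_j$ is the shifted function $\val(u_j,\cdot)+c_j$ \emph{together with} one new segment of slope $-r(s)$ per breakpoint (this is exactly what Algorithm~\ref{alg:val_it} adds to the set $S$). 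These wait segments lie on genuinely new lines; your claim that ``including the wait-until-a-later-time option does not add lines'' is false as stated.

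What the paper uses instead is that the added wait segments are all \emph{parallel} (common slope $-r(s)$). From this one argues that the lower/upper envelope of $\{\val(u_j,\cdot)+c_j\}$ together with its wait segments is still covered by at most $k_j$ lines --- but not necessarily the original $k_j$ lines, since some original lines may be traded for wait lines. The parallelism is the crux: once the envelope leaves a given line it cannot return to it, because it left onto a better parallel wait segment, and from that wait segment it can only transition onward to a different original line. The paper explicitly remarks that the covering set for this per-child envelope need not coincide with the set covering $\val(u_j,\cdot)$. Simply invoking the structural description of optimal strategies from~\cite{HIM13} does not supply this argument; you need the parallel-wait-lines observation (or an equivalent replacement argument) to keep the per-child count at $k_j$ before taking the final envelope across children.
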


Because there can be at most $\frac{k(k-1)}{2}$ intersections of $k$ lines, that
is also a bound on the number of line segments of $\val(v,t)$. This, in turn,
means that there are at most $O(n^3)=O(nk^2)$ many line segments for $\val(v,t)$
over all $n$ states of the graph. Because an event point is the time coordinate
of an end point of some line segment of $\val(v,t)$ for some state $v$, we therefore have at most $O(n^3)$ event points.

\paragraph{\bf DAGs.}
To show that DAGs are in \PSPACE, we will first argue that the denominator of
each event point $t^*$ and number $\val(v,t^*)$ for all $v$ can be expressed in polynomial space.
For a natural number $c$, we say that a fraction $x$ is {\em $c$-expressible} if the denominator $d$ of $x$ is such that $d\cdot k=c$ for some natural number $k$. 
In Appendix~\ref{app:pos}, we show the following lemma.

%\begin{lemma}\label{lem:dags}
%Consider a SPTG on a DAG of depth $h$ with integer holding rates. Let $R=\Pi_{v_1,v_2\in V\mid r(v_1)\neq r(v_2)}\abs{r(v_1)-r(v_2)}$. 
 %Let $v$ be some state at depth $h_v$ and $(x,y)$ some end point of a line segment of $\val(v,t)$. 
%If $y=\infty$, then $\val(v,t)=\infty$ and otherwise,
 %if $y\neq \infty$, the numbers $x$ and $y$ are $R^{h-h_v}$-expressible.
%\end{lemma}
\begin{restatable}{lemma}{dagslem}
\label{lem:dags}
Consider a SPTG on a DAG of depth $h$ with integer holding rates. Let $R=\Pi_{v_1,v_2\in V\mid r(v_1)\neq r(v_2)}\abs{r(v_1)-r(v_2)}$. 
 Let $v$ be some state at depth $h_v$ and $(x,y)$ some end point of a line segment of $\val(v,t)$. 
If $y=\infty$, then $\val(v,t)=\infty$ and otherwise,
 if $y\neq \infty$, the numbers $x$ and $y$ are $R^{h-h_v}$-expressible.
\end{restatable}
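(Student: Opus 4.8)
The plan is to evaluate the states of the DAG in reverse topological order and to keep track, at each state $v$, not merely of the breakpoints of $\val(v,\cdot)$ but of the full set of \emph{lines} carrying its linear pieces; the denominators of these lines grow strictly more slowly than those of the breakpoints, and this is exactly what lets the exponent be $h-h_v$ rather than roughly twice that.

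First I would set up the recursion. Goal states have $\val\equiv 0$; w.l.o.g.\ they have holding rate $0$ (a goal's rate is irrelevant to every value), so $0$ is among the holding rates, and w.l.o.g.\ the edge costs are integers too. For a non-goal minimizer state $v$, restricting attention to its finite-valued successors $v'$ (the minimizer never uses an infinite-valued successor, and a maximizer with an infinite-valued successor is itself infinite-valued, so a finite state's relevant successors are finite), one has the closed form
\[
 \val(v,t)=\min_{e=(v,v')}\bigl[\,c(e)+\bigl(\mathrm{menv}_{r(v)}\val(v',\cdot)\bigr)(t)\,\bigr],
\]
where $\mathrm{menv}_{r}\psi(t):=\min_{t'\in[t,1]}\bigl(r(t'-t)+\psi(t')\bigr)$; for maximizer states replace both minima and $\mathrm{menv}$ by their $\max$-counterparts. (This is the fixed-point characterisation of the value, cf.\ \cite{HIM13,BLMR06}; there is no genuine recursion because $v$ lies strictly above its successors in the DAG, and the self-reference over time is resolved by the closed form.) I also invoke the cited fact that $\val(v,t')=\infty$ for one $t'$ forces $\val(v,\cdot)\equiv\infty$~\cite{HIM13}, which immediately settles the ``$y=\infty$'' clause; all remaining states are finite-valued. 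Writing $\eta(v)$ for the length of a longest path from $v$ to a goal (so $\eta(v)\le h-h_v$, since a longest path ending at $v$ followed by a longest $v$-to-goal path has length $\le h$), I would prove by joint induction on $\eta(v)$: \textbf{(a)} every linear piece of $\val(v,\cdot)$, extended to a line $y=st+b$, has $s\in\{0\}\cup\{-r(u):u\in V\}\subseteq\mathbb Z$ and $b$ with denominator dividing $R^{\max(0,\eta(v)-1)}$; \textbf{(b)} every breakpoint of $\val(v,\cdot)$ has denominator dividing $R^{\eta(v)}$, as does $\val(v,x)$ for every such breakpoint $x$ and for $x\in\{0,1\}$. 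Since $0$ is a holding rate, every nonzero slope difference $|s_1-s_2|$ arising here is a difference of two distinct holding rates, hence divides $R$.

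For the inductive step I would track the two operations. Adding the integer $c(e)$ shifts intercepts by an integer, changing no denominator. For $h=\mathrm{menv}_{r(v)}\psi$ with $\psi=\val(v',\cdot)$: a standard analysis of such monotone envelopes shows each linear piece of $h$ is either a piece of $\psi$ (intercept denominator dividing $R^{\max(0,\eta(v')-1)}$ by (a) for $v'$), or a new line of slope $-r(v)$ through a point $(t^*,\psi(t^*))$ with $t^*$ a breakpoint of $\psi$ or $t^*\in\{0,1\}$, whose intercept $r(v)t^*+\psi(t^*)$ has denominator dividing $R^{\eta(v')}$ by (b) for $v'$. So $c(e)+h$, and hence (taking the lower/upper envelope over $e$, which introduces no new lines — every envelope piece lies on a line already present) $\val(v,\cdot)$, satisfies (a) with exponent $\eta(v')\le\eta(v)-1$. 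Then (b) for $v$ follows: a breakpoint of $\val(v,\cdot)$ is an intersection $(b_2-b_1)/(s_1-s_2)$ of two of its lines with $s_1\ne s_2$; the numerator's denominator divides $R^{\eta(v)-1}$ and $|s_1-s_2|$ divides $R$, so the breakpoint's denominator divides $R^{\eta(v)}$, and $\val(v,x)=s_1x+b_1$ at such $x$ (or at $x\in\{0,1\}$) likewise has denominator dividing $R^{\eta(v)}$. Applying (b) to an endpoint $(x,y)$ of a line segment of $\val(v,\cdot)$ and using $\eta(v)\le h-h_v$ yields the lemma.

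The crux I anticipate is the denominator bookkeeping for $\mathrm{menv}$ together with the realisation that the envelope-over-edges step is free: a naive argument that first forms $g(t)=\min_e(c(e)+\val(\mathrm{head}(e),t))$ and only then applies $\mathrm{menv}_{r(v)}$ pays a factor of $R$ for each of the two operations and yields only $R^{2(h-h_v)}$. Pushing $\mathrm{menv}_{r(v)}$ inside the minimum over edges (valid because $r(v)(t'-t)$ does not depend on $e$) makes a single monotone envelope the only denominator-increasing operation per DAG level, and even that one is charged to the breakpoints of the already-computed successor value function rather than compounded. The second delicate point is confirming that the slopes occurring in every value function are precisely negated holding rates (or $0$) — so that all slope differences appearing as denominators are genuine factors of $R$ — including degenerate situations (e.g.\ a single holding rate, where $R=1$) where this forces everything to be integral.
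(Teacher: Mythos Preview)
Your approach is essentially the same as the paper's: structural induction on the DAG, using that every slope occurring in a value function is the negation of some holding rate so that nonzero slope differences divide $R$, and showing that breakpoint denominators pick up at most one factor of $R$ per level. The paper's proof is terser---it simply observes that every line segment contributing to $\val(v,\cdot)$ passes through a successor breakpoint with slope $-r_i$ for some holding rate $r_i$, and then computes the intersection of two such segments directly---whereas you organise the same computation into separate invariants (a) on line intercepts and (b) on breakpoints and explicitly push the waiting envelope inside the min/max over edges; this is a cleaner presentation of the same argument, not a different one.

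One small caution: your two ``without loss of generality'' claims are doing real work. Declaring that goal states have holding rate $0$ is exactly how one ensures that the slope $0$ (coming from the constant value function at a goal) counts as a negated holding rate and hence that $|0-r(u)|$ divides $R$; the paper's proof silently assumes this too. And declaring edge costs to be integers is not a w.l.o.g.\ but an implicit hypothesis needed for the lemma as stated (the paper's proof also tacitly relies on integer shifts preserving $R^k$-expressibility). Neither is a flaw in your reasoning, but you should flag them as assumptions rather than reductions.
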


We can find the set of states for which $\val(v,t)=\infty$ in time $O(m+n\log
n)$ by using techniques from~\cite{HIM13}. Specifically, that paper shows that
if $\val(v,t')=\infty$ for some $t'$ then $\val(v,t)=\infty$ for all $t$, and
that paper also shows how to find $\val(v,1)$ in time $O(m+n\log n)$ for all
$v$.

For the remaining states, note that $R^{h-1}$ can be described in polynomial
space, since it is a product of $\leq hn^2$ numbers, each of which are bounded
by the largest holding rate. In turn, we can also bound the numerators as using at most
polynomial space and thus all the numbers.

This, in turn, means that a variant of the event point iteration algorithm given
in~\cite{HIM13} (that does not store the end points of line segments of
$\val(v,t)$, which is only used for the output) runs in polynomial space (see
Appendix~\ref{app:known_algo} for pseudo-code for the event point iteration
algorithm), because it then stores only $t^*$ and $\val(v,t^*)$ for all $v$ at
any one point for some event point $t^*$. That can then find $\val(v,t')$ for
some given $v,t'$ by finding the value $\val(v,t^*)$ for the smallest event
point $t^*>t$ and how $\val(v,t)$ behaves between $t^*$ and the next smaller
event point (which is how the algorithm iterates over the event points). Thus,
it can solve the decision question we are interested in.
We give more details of this argument in Appendix~\ref{app:pos}.

\paragraph{Games with holding rates $\{0,1\}$.}
In~\cite{BGNMMT14}, it was previously claimed (fixed in the latest arXiv
version 6, \href{https://arxiv.org/abs/1404.5894v6}{arXiv:1404.5894v6}) that an
SPTG with holding rates $\{0,1\}$ and integer costs can be solved in polynomial
time because, it was claimed, such games would have only polynomially-many event
points. Our results, however, show that this claim is incorrect: We show in
Appendix~\ref{app:ashutosh} how to convert our examples with exponentially-many
event points and holding rates $\{0,1\}$ to have integer costs. 

\section{Open questions}
\label{sec:open_questions}
While our results show that one-clock priced timed games and many special cases
are \PSPACE-hard, there are still a number of open questions.

The biggest open question for priced timed games is likely the complexity of
two-clock priced timed games. That said, a number of other models related to
priced timed games have been considered and there is often a jump in complexity
when going from one clock to two or more clocks in those models, as we mentioned
in the introduction. Also, many questions related to three or more clocks for
priced timed games are undecidable~\cite{BBM06,BBR05,BJM15}. This suggests that
similar questions for the case of two clocks are also undecidable.

Besides that, we show that the complexity of priced timed games is \PSPACE-hard.
Previous work have shown them to be solvable in exponential
time~\cite{R11,HIM13}, which does leave a gap. A possible way to resolve the
question is to show a conjecture by~\cite{BLMR06}. If, as conjectured
by~\cite{BLMR06}, the number of iterations of the value iteration algorithm is
polynomial, the problem is \PSPACE-complete, since DAGs are in \PSPACE, as we
show, and the value iteration algorithm in essence turns the game into a DAG
with states polynomial in the number of iterations and the number of states of
the game.

Let $\ell$ be the number of event points. We show $\ell\geq 2^{n/2}/2$. Previous
work~\cite{HIM13} has shown that $\ell\leq 12^n$ for SPTGs. This means that
$\ell=2^{\Theta(n)}$, but this is quite a wide gap, and one could work on making
it smaller.

We have shown that priced timed games on DAGs with one clock are
\PSPACE-complete, but the best result for DAGS with more clocks~\cite{ABM04} is
exponential. For DAGs the results for more clocks seems similar to the one clock
case though: The value iteration algorithm runs in exponential time
(see~\cite{ABM04} for the upper bound on more clocks, and we show the lower
bound for one clock). There is an exponential number of areas (called event
points for one clock) where the strategy should change (see~\cite{ABM04} for
more clocks, we have the lower bound for one clock and the upper bound for one
clock is in~\cite{HIM13,R11}). Does our \PSPACE upper bound generalise to more
clocks?

While we resolve several special cases of one-clock priced timed games, a number
are still open: 

\medskip

\begin{itemize}[leftmargin=0.3cm]
\itemsep2mm

\item {\em Constant pathwidth.} We show that each member of our family that has
	exponentially many event points has pathwidth 3, but no
	computational-complexity hardness is shown and it is plausible that they are
	easier than the general case.

\item {\em Pseudo-polynomial time algorithm for costs.} Our 
	constructions use costs that double as we double the number of event points.
	To avoid the lower bounds, one could consider pseudo-polynomial time
	algorithms.

\item {\em A player with few states.} Our \PSPACE-hard construction has a nearly
	equal number minimizer and maximizer states. On the other hand, the for
	automata (i.e., when only one player has states) the corresponding problem
	is in \NLOGSPACE~\cite{LMS04}. Can one design fast algorithms for the case
	where one player only has a few states? Here, few could mean either constant
	or one could do a parametrized analysis.

\item {\em Very limited graph width.} We show hardness for games with
	treewidth, cliquewidth and rankwidth three, but the cases of lower
	treewidths, cliquewidths and rankwidths are still open (apart from trees,
	which we have shown in Section~\ref{sec:upper} can be solved in polynomial time).

\end{itemize}

\clearpage

%% Bibliography
\bibliographystyle{plain}
\bibliography{bibliography}

\begin{thebibliography}{10}

\bibitem{ABM04}
Rajeev Alur, Mikhail Bernadsky, and P.~Madhusudan.
\newblock Optimal reachability for weighted timed games.
\newblock In {\em Proc.\ of {ICALP}}, pages 122--133, 2004.

\bibitem{ALP01}
Rajeev Alur, Salvatore La~Torre, and George~J. Pappas.
\newblock Optimal paths in weighted timed automata.
\newblock In {\em Proc.\ of {HSCC}}, pages 49--62, 2001.

\bibitem{BCDFLL07}
Gerd Behrmann, Agn{\`e}s Cougnard, Alexandre David, Emmanuel Fleury, Kim~G.
  Larsen, and Didier Lime.
\newblock Uppaal-tiga: Time for playing games!
\newblock In {\em Proc.\ of {CAV}}, pages 121--125, 2007.

\bibitem{BFHLPRV01}
Gerd Behrmann, Ansgar Fehnker, Thomas Hune, Kim Larsen, Paul Pettersson, Judi
  Romijn, and Frits Vaandrager.
\newblock Minimum-cost reachability for priced time automata.
\newblock In {\em Proc.\ of {HSCC}}, pages 147--161, 2001.

\bibitem{B06}
Patricia Bouyer.
\newblock Weighted timed automata: Model-checking and games.
\newblock {\em Electronic Notes in Theoretical Computer Science}, 158:3 -- 17,
  2006.

\bibitem{BBBR07}
Patricia Bouyer, Thomas Brihaye, V{\'e}ronique Bruy{\`e}re, and
  Jean-Fran{\c{c}}ois Raskin.
\newblock On the optimal reachability problem of weighted timed automata.
\newblock {\em Formal Methods in System Design}, 31(2):135--175, 2007.

\bibitem{BBM06}
Patricia Bouyer, Thomas Brihaye, and Nicolas Markey.
\newblock Improved undecidability results on weighted timed automata.
\newblock {\em Information Processing Letters}, 98(5):188 -- 194, 2006.

\bibitem{BCFL05}
Patricia Bouyer, Franck Cassez, Emmanuel Fleury, and Kim~G. Larsen.
\newblock Optimal strategies in priced timed game automata.
\newblock In {\em Proc.\ of {FSTTCS}}, pages 148--160, 2004.

\bibitem{BCFL05b}
Patricia Bouyer, Franck Cassez, Emmanuel Fleury, and Kim~G. Larsen.
\newblock Synthesis of optimal strategies using hytech.
\newblock {\em Electronic Notes in Theoretical Computer Science}, 119(1):11 --
  31, 2005.

\bibitem{BJM15}
Patricia Bouyer, Samy Jaziri, and Nicolas Markey.
\newblock On the value problem in weighted timed games.
\newblock In {\em Proc.\ of {CONCUR}}, pages 311--324, 2015.

\bibitem{BLMR06}
Patricia Bouyer, Kim~G. Larsen, Nicolas Markey, and Jacob~Illum Rasmussen.
\newblock Almost optimal strategies in one clock priced timed games.
\newblock In {\em Proc.\ of {FSTTCS}}, pages 345--356, 2006.

\bibitem{BBR05}
Thomas Brihaye, V{\'e}ronique Bruy{\`e}re, and Jean-Fran{\c{c}}ois Raskin.
\newblock On optimal timed strategies.
\newblock In {\em Proc.\ of FORMATS}, pages 49--64, 2005.

\bibitem{BGHLM15}
Thomas Brihaye, Gilles Geeraerts, Axel Haddad, Engel Lefaucheux, and Benjamin
  Monmege.
\newblock Simple priced timed games are not that simple.
\newblock In {\em Proc.\ of {FSTTCS}}, pages 278--292, 2015.

\bibitem{BGNMMT14}
Thomas Brihaye, Gilles Geeraerts, Shankara Narayanan~Krishna, Lakshmi Manasa,
  Benjamin Monmege, and Ashutosh Trivedi.
\newblock Adding negative prices to priced timed games.
\newblock In {\em Proc.\ of {CONCUR}}, pages 560--575, 2014.

\bibitem{HIM13}
Thomas~Dueholm Hansen, Rasmus Ibsen-Jensen, and Peter~Bro Miltersen.
\newblock A faster algorithm for solving one-clock priced timed games.
\newblock In {\em Proc.\ of {CONCUR}}, pages 531--545, 2013.

\bibitem{JT07}
Marcin Jurdzi{\'{n}}ski and Ashutosh Trivedi.
\newblock Reachability-time games on timed automata.
\newblock In {\em Proc.\ of {ICALP}}, pages 838--849, 2007.

\bibitem{KBBEGRZ08}
Leonid Khachiyan, Endre Boros, Konrad Borys, Khaled Elbassioni, Vladimir
  Gurvich, Gabor Rudolf, and Jihui Zhao.
\newblock On short paths interdiction problems: Total and node-wise limited
  interdiction.
\newblock {\em Theory of Computing Systems}, 43(2):204--233, 2008.

\bibitem{LMM02}
Salvatore La~Torre, Supratik Mukhopadhyay, and Aniello Murano.
\newblock Optimal-reachability and control for acyclic weighted timed automata.
\newblock In {\em Proc.\ of {IFIP} 17th World Computer Congress --- {TC1}
  Stream}, pages 485--497, 2002.

\bibitem{LMS04}
F.~Laroussinie, N.~Markey, and Ph. Schnoebelen.
\newblock Model checking timed automata with one or two clocks.
\newblock In {\em Proc.\ of {CONCUR}}, pages 387--401, 2004.

\bibitem{R11}
Michal Rutkowski.
\newblock Two-player reachability-price games on single-clock timed automata.
\newblock In {\em Proc.\ of {QAPL}}, pages 31--46, 2011.

\bibitem{T11}
Ashutosh Trivedi.
\newblock {\em Competitive Optimisation on Timed Automata}.
\newblock PhD thesis, University of Warwick, 2011.

\end{thebibliography}

\clearpage

%% Appendix
\appendix

\section{Two known algorithms for OCPTGs}
\label{app:known_algo}
In this section we describe two known algorithms that we 
will use in our analysis.
There first algorithm iterates over how many steps
one is allowed to take before the game is over.
The second algorithm iterates over the event points of the game.

\paragraph{The value iteration algorithm.}
A variant of the algorithm that iterates over how many steps one is allowed to take before the game is over is used for many classes of games and is typically called the {\em value iteration algorithm}.
The algorithm was defined and shown correct independently by~\cite{ABM04,BCFL05}.
The algorithm, given a game $G$, is based on defining the notion of a finite-horizon game $G^k$, where $k$ is some natural number.
In $G^k$, the outcome is $\infty$ if more than $k$ steps are taken.
This definition allows one to find $\val(v,t,G^k)$ easily from the value functions for its successors in $G^{k-1}$ (because, when entering the successors, one less step is left). 
For a piecewise linear function $f(t)$, where $t\in[0,1]$ let $L(f(t))$, be the set of end points of line segments of $f(t)$.
The function $\up$ (resp. $\low$) is the upper (resp. lower) envelope of a set of functions, i.e. basically max (resp. min), but for functions. Formally, let $f_1,\dots,f_{\ell}:[0,1]\rightarrow \R$, for some number $\ell$, then \[
\up(f_1,\dots,f_{\ell})(x)=\max(f_1(x),\dots,f_{\ell}(x))
\]
and
\[
\low(f_1,\dots,f_{\ell})(x)=\min(f_1(x),\dots,f_{\ell}(x)) \enspace .
\]
For a fixed $i$ and $x$, the number $f_i(x)$ is omitted from the max or min, if it is undefined and in turn the functions $\up,\low$ are undefined at $x$ if all of $f_i(x)$ are.
Also, given two points $(x,y),(x',y')$, we let $(x,y)-(x',y')$ be the line between them.
We give pseudo-code for the algorithm in Algorithm~\ref{alg:val_it}.
As shown by~\cite{BLMR06}, \[
\lim_{k\rightarrow \infty} \val(v,t,G^k)=\val(v,t,G).
\]

\begin{algorithm}
\SetAlgoLined
\KwResult{$\val(v,t,G^k)$ for all $v,t$}
 \For{$v\in V$} {
 	\eIf{$v$ is a goal state} {
 		$\val(v,t,G^0)=0$
 	} {
 		$\val(v,t,G^0)=\infty$
 	}
 }
 \For{$(k'\leftarrow 1;k'\leq k;k'\leftarrow k'+1)$}{
  \For{$v\in V$} {
 	\eIf{$v$ is a goal state} {
 		$\val(v,t,G^{k'})=0$\;
 	} {
 		$S\leftarrow \emptyset$\;
		\For{$(v,u)\in E$} {
			$S\leftarrow S\cup\{\val(u,t,G^{k'})+c((v,u))\}$\;
			\For{$(x,y)\in L(\val(u,t,G^{k'})+c((v,u)))$}{
				$S\leftarrow S \cup \{(0,y+r(v)x)-(x,y)\}$\;
			}
		}
		
		\eIf{$v\in V_1$} {
			$\val(v,t,G^{k'})=\low(S)$\;
		} {
			$\val(v,t,G^{k'})=\up(S)$\;
		}			
 	}
  }
 }
 \caption{Value iteration algorithm\label{alg:val_it}}
\end{algorithm}

\paragraph{The event point iteration algorithm.}
The second algorithm iterates over the event points. 
The algorithm was given and shown correct by~\cite{HIM13}.
In particular, given an event point $t'$ and $\val(v,t')$ for all $v$, it finds the largest event point $t''< t'$ and $\val(v,t'')$ for all $v$. This is done using that if one starts waiting at time $t''$, then one waits until time at least $t'$.
Also, $\val(v,1)$ is easy to find for all $v$, since one cannot wait any more and the game turns in to a priced game.
As shown by~\cite{KBBEGRZ08}, such games can be solved in $O(m+n\log n)$ time, using an algorithm similar to Dijkstra's shortest path algorithm. The edge costs in the following priced games are lexicographic ordered pairs (but we omit the second component if it is 0).
To give pseudo-code for the algorithm, for an SPTG $G$ and a function $f:V\rightarrow \R_+$, let $\PG(G)$ be the priced game with the same states and edges as $G$ and let $\PG(G,f)$ be the extension of $\PG(G)$, that, for all $v$ has an edge $(v,g)$ of cost $f(v)$ where $g$ is a goal state.
To define a piecewise-linear function, one just needs to define the set of end points of line segments. Therefore, to define $\val(v,t)$, we will just define $L(\val(v,t))$, i.e. the end points of line segments of $\val(v,t)$.
We give pseudo-code for the algorithm in Algorithm~\ref{alg:ev_it} (the number $t^*$ becomes in turn each of the event points, starting from the last at 1, and ending with the first at 0).

\begin{algorithm}
\SetAlgoLined
\KwResult{$\val(v,t)$ for all $v,t$}
Solve $\PG(G)$\; $t^*\leftarrow 1$\;	
 \For{$v\in V$} {
 	$L(\val(v,t))\leftarrow \{(1,\val(v,\PG(G)))\}$\;
 	$f(v)\leftarrow (\val(v,\PG(G)),r(v))$\;	
 }

 \While{$t^*\geq 0$} {
	Solve $\PG(G,f)$\;
	$d\leftarrow t^*$\;
	\For{$(v,u)\in E$} {
		$(x,y)\leftarrow \val(v,\PG(G,f))$\;		
		$(x',y')\leftarrow \val(u,\PG(G,f))+(c((v,u)),0)$\;				
		\If{$v\in V_1$ and $y'<y$ and $x'>x$ and $d>(x'-x)/(y-y')$}{
			$d\leftarrow (x'-x)/(y-y')$\;
		}
		\If{$v\in V_2$ and $y'>y$ and $x'<x$ and $d>(x-x')/(y'-y))$}{		
			$d\leftarrow (x-x')/(y'-y)$\;
		} 
	}
	\For{$v\in V$} {
		$(x,y)\leftarrow \val(v,\PG(G,f))$\;	
		$L(\val(v,t))\leftarrow 	L(\val(v,t))\cup \{(t^*-d,x+yd)\}$\;
		$f'(v)\leftarrow (x+yd,r(v))$\;	
	}
	$f\leftarrow f'$\;
	$t^*\leftarrow t^*-d$\;
 } 
 \caption{Event point iteration algorithm\label{alg:ev_it}}
\end{algorithm}

\section{Exponentially-many event points}
\label{app:exp_event_points}

In this section we provide the technical details and proofs for our basic
construction with exponentially-many event points.

Consider the following graph $G$.
The graph is divided into levels, with two states per level. We will divide the states into left and right states. For all $i$, the left state of level $i$ is $v_{\ell}^i$ and the right state is $v_{r}^i$. 
On level~0, the left state, $v_{\ell}^0$, is the goal state and the right state, $v_r^i$, is a maximizer state with holding rate~1.
The state $v_r^i$ has an edge to $v_{\ell}^0$ of cost 0. 
For all $i\geq 1$, at level $i$, the left state, $v_{\ell}^i$ is a minimizer state of holding rate~1 and the right state, $v_{r}^i$ is a maximizer state of holding rate~0. Each node $v\in \{v_{\ell}^i,v_{r}^i\}$ has an edge to $v_{\ell}^{i-1}$ of cost $2^{-i}$ and an edge to $v_{\ell}^{i-1}$ of cost $0$.

We will in this section argue that $G$ has $2^n$ many event points.

\begin{lemma}
\label{lem:val_func_exp}
The value function for each node at level $i$ consists of $2^i$ line segments, each of duration (in time) $2^{-i}$, with slope alternating between $0$ and $-1$. The first line segment of $\val(v_{\ell}^i,t)$ has slope $0$ and starts at value $1-2^{-i}$ and the first line segment of $\val(v_{r}^i,t)$ has slope $-1$ and starts at value 1. 
Furthermore, $\val(v_{\ell}^i,t)+2^{-i-1}$ intersects $2^{i+1}$ times with $\val(v_{r}^{i},t)$ on a line $L$ with slope $-1/2$, starting at $1-2^{-i-1}$. More precisely, the intersections are at time $t=2^{-i-2}+k\cdot 2^{-i-1}$, for $k\in \{0,\dots,2^{i+1}-1\}$. Finally, at time $k\cdot 2^{-i-1}$, for $k\in \{0,\dots,2^{i+1}-1\}$, we have that $|\val(v_{\ell}^i,t)+2^{-i-1}-\val(v_{r}^{i},t)|=2^{-i-1}$
\end{lemma}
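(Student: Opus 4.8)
The plan is to prove Lemma~\ref{lem:val_func_exp} by induction on the level $i$, since the structure of the game is recursive: the states at level $i$ depend only on $v_{\ell}^{i-1}$ and $v_r^0$ (for level $1$) or $v_{\ell}^{i-1}$ (for higher levels), and the value functions are computed by taking upper/lower envelopes of the successors' value functions shifted by edge costs. The base case $i=0$ is immediate: $\val(v_{\ell}^0,t) = 0$ (goal state) and $\val(v_r^0,t) = 1-t$ (the maximizer waits as long as possible with holding rate $1$), which matches the claim with a single segment of slope $-1$ starting at value $1$; the intersection claim for $i=0$ concerns $\val(v_\ell^0,t) + 1/2 = 1/2$ against $\val(v_r^0,t) = 1-t$, which cross once at $t = 1/2$, consistent with $2^{-2} + k\cdot 2^{-1}$ for $k=0$.

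For the inductive step, I would assume the lemma holds at level $i-1$ and analyze level $i$. First observe that at $v_{\ell}^i$ (holding rate $1$, minimizer) and $v_r^i$ (holding rate $0$, maximizer), neither player wants to wait, so the value functions are simply the lower envelope (resp. upper envelope) of the two options: going to $v_\ell^{i-1}$ with cost $2^{-i}$, i.e. $\val(v_\ell^{i-1},t) + 2^{-i}$, and going to $v_\ell^{i-1}$ with cost $0$, i.e. $\val(v_\ell^{i-1},t)$. Wait --- both edges go to $v_\ell^{i-1}$, so this doesn't immediately work; re-reading the construction, the edges at level $i$ go to $v_\ell^{i-1}$ and (presumably) $v_r^{i-1}$, or in the case $i=1$ to $v_\ell^0$ and $v_r^0$. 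So the two relevant functions are $\val(v_\ell^{i-1},t) + 2^{-i}$ and $\val(v_r^{i-1},t)$ (or the appropriate pair), and the "Furthermore" clause of the inductive hypothesis is exactly the statement that these two functions, after the shift, oscillate around a line of slope $-1/2$ crossing $2^i$ times at the stated points with vertical gap $2^{-i}$ at the interval endpoints. This is precisely the information needed: taking the lower envelope of two functions that cross at $t = 2^{-i-1} + k\cdot 2^{-i}$ and are $2^{-i}$ apart at $t = k\cdot 2^{-i}$, with the overall trend a line of slope $-1/2$, produces a function whose segments alternate between slope $0$ and slope $-1$ (since one function has slope $0$ and the other slope $-1$ on matching pieces, and we pick whichever is lower), with $2^i$ segments each of length $2^{-i}$. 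A symmetric argument with the upper envelope handles $v_r^i$.

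The key steps in order: (1) establish the base case explicitly; (2) in the inductive step, argue no waiting occurs at level-$i$ states so value = envelope of shifted successor values; (3) use the "Furthermore"/"Finally" clauses of the level-$(i-1)$ hypothesis to pin down exactly how the two shifted functions interleave, then compute the lower envelope for $v_\ell^i$ and upper envelope for $v_r^i$, reading off the $2^i$ alternating segments, their lengths, and their starting values; (4) from the freshly-computed level-$i$ value functions, verify the new "Furthermore"/"Finally" clauses — i.e. that $\val(v_\ell^i,t) + 2^{-i-1}$ and $\val(v_r^i,t)$ cross $2^{i+1}$ times around a line of slope $-1/2$ starting at $1 - 2^{-i-1}$, with gap $2^{-i-1}$ at the points $k\cdot 2^{-i-1}$. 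Step (4) is essentially a direct computation once step (3) gives the explicit piecewise description, since the two functions differ from the line $L$ by small "bumps" of the same shape at offset positions.

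The main obstacle I expect is step (3)/(4): carefully tracking the phase alignment of the slope-$0$ and slope-$(-1)$ pieces of the two functions after the $2^{-i}$ vertical shift, and confirming that the lower (resp. upper) envelope alternates cleanly with exactly the claimed breakpoints rather than, say, producing a segment of the wrong slope or doubling a segment at a boundary. This is where the precise constants $2^{-i-2} + k\cdot 2^{-i-1}$ for the intersections matter — one must check that at each crossing the envelope switches which function it follows, and that between crossings the gap grows to exactly $2^{-i-1}$ at the midpoints $k\cdot 2^{-i-1}$. I would handle this by writing the two functions explicitly as $L(t) \pm (\text{sawtooth})$ and verifying the envelope operation pointwise on each sub-interval of length $2^{-i-1}$, which reduces the whole thing to a finite check on one period of the pattern, invariant across $k$ by the translation structure of the inductive hypothesis.
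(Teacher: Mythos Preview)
Your proposal is correct and follows essentially the same route as the paper: induction on the level, with the inductive step unpacking the lower/upper envelopes of $\val(v_\ell^{i-1},t)+2^{-i}$ and $\val(v_r^{i-1},t)$ (you are right that the two edges must go to $v_\ell^{i-1}$ and $v_r^{i-1}$ respectively; the text has a typo). The paper organises your steps (3)/(4) as a nested induction on the segment index $k$, proving that the two shifted functions cross at the midpoint of each length-$2^{-i}$ interval and differ by $2^{-i-1}$ at the endpoints, which is exactly the per-period check you propose via the sawtooth/translation-invariance reduction. One small caution: your base-case check ``cross once at $t=1/2$, consistent with $2^{-2}+k\cdot 2^{-1}$ for $k=0$'' is internally inconsistent (the formula gives $t=1/4$); in fact the lemma's stated count of $2^{i+1}$ intersections and the formula $t=2^{-i-2}+k\cdot 2^{-i-1}$ appear to be off by one index --- the paper's own claim inside the proof gives $2^i$ intersections at $t=k\cdot 2^{-i}-2^{-i-1}$, which matches your $t=1/2$ for $i=0$.
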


\begin{proof}
The proof will be by induction in the level. 

\paragraph{The base case, level~0.}
We see that the value function for $v_{\ell}^0$ (being a goal state) is $0$, satisfying the statement. 
The optimal strategy in state $v_{r}^0$ is to wait until time 1 and then move to goal. At time $t$, we have $t-1$ duration left, costing $t-1$. Thus, the value falls from $1$ at time 0 to $0$ at time or, equivalently, $\val(v_{r}^0,t)=1-t$. Thus, satisfying the lemma statement.

\paragraph{The induction case, level~$i$.}
By induction, we have the lemma statement for level~$i$ and needs to show it for level $i+1$, for some $i\geq 0$. 
Let $f_{\ell}(t)=\val(v_{\ell}^{i},t)+2^{-i-1}$ and let $f_r(t)=\val(v_{r}^{i},t)$ for all $t$.

The value function \[
\val(v_{\ell}^{i})=\low(\val(v_{\ell}^{i})+2^{-i-1},\val(v_{r}^{i}))=\low(f_{\ell}(t),f_r(t))
\]
and the value function \[
\val(v_{r}^{i})=\up(\val(v_{\ell}^{i})+2^{-i-1},\val(v_{r}^{i}))=\up(f_{\ell}(t),f_r(t)) \enspace .
\]

Because of (1)~the alternations of slopes $0$ and $-1$, (2)~them starting with different slopes; and (3)~each line segment having equal duration (of $2^{-i}$), we see that at all times (except for the ends of the line segments at time $k2^{-i}$ for some integer $k$), one of the value functions $\val(v_{\ell}^{i})$ and $\val(v_{r}^{i})$ have slope $-1$ and the other has slope $0$.

We will use the following claim to show the lemma.

\begin{claim}
For each $k\in\{1,\dots,2^{i}\}$, we have the following:
\begin{enumerate}
\item The values of the functions
$f_{\ell}(t_k)$ and $f_r(t_k)$ at time $t_k:=k\cdot 2^{-i}-2^{-i-1}$ are equal (and are otherwise different between time $(k-1)2^{-i}$ and $k2^{-i-1}$), i.e. the functions $k$-th line segment intersect in the middle.  
\item Also, at time $t_k':=k\cdot 2^{-i}$, we have that $\abs{f_r(t_k')-f_{\ell}(t_k')}=2^{-i-1}$, i.e the functions differ by $2^{-i-1}$, at the end of the line segments, when the slopes alternate (it is also the case at time 0).
\end{enumerate}

\end{claim}

\begin{proof}
We will show the claim by induction in $k$. 
First, $k=1$.
By induction in $i$, we have that $f_{\ell}(t)$ starts at $1-2^{-i}+2^{-i-1}=1-2^{-i-1}$ and in the first line segment, of duration $2^{-i}$, it has slope $0$. On the other hand, $f_{r}(t)$ starts at $1$ and for the first line segment, of duration $2^{-i}$, it has slope $-1$. Thus, at time $2^{-i-1}(=t_1)$ they intersect, as wanted. Also, $f_{\ell}(t_1')=1-2^{-i-1}$ and $f_r(t_1')=1-2^{-i}$, for $t_1'=2^{-i}$. Thus, $\abs{f_r(t_1')-f_{\ell}(t_1')}=2^{-i-1}$.

Next, consider $k\geq 2$.
By induction in $k$, we have that $\abs{f_r(t_{k-1}')-f_{\ell}(t_{k-1}')}=2^{-i-1}$ and that the function 
$f_{\ell}(t)$ intersected with $f_{r}(t)$ at time $t_{k-1}=t_{k-1}'-2^{-i-1}$. At time $t_{k-1}'$ (by induction in $i$), the slopes of $f_{\ell}(t)$ and $f_r(t)$ alternate. Because the two functions intersected in the middle of the last line segment, the function with least (resp. highest) value must have slope 0 (resp. $-1$) in the next line segment. They therefore intersect after a duration of $2^{-i-1}$ into the line segment, i.e. at time $t_{k-1}'+2^{-i-1}(=t_k)$ and differ by $2^{-i-1}$ after a duration $2^{-i}$ (which is also the duration of the line segment), i.e at time $t_{k-1}'+2^{-i}(=t_k')$, as wanted. 
\end{proof}
Because of the use of $\low$, resp. $\up$ in the above definition of the value function for  $v_{\ell}^{i}$, resp. $v_{r}^{i}$, the first part of the lemma follows from the first part of the claim. The second part of the lemma (about the first line segment), follows from that (1)~$f_{\ell}(t)\leq f_{r}(t)$ until time $2^{-i-1}$, (2)~that the first line segment of $f_{\ell}(t)$ has slope 0 and starts at value $1-2^{-i}+2^{-i-1}=1-2^{-i-1}$; and (3)~that the first line segment of $f_{r}(t)$ has slope $-1$ and starts at value 1. All three statements comes from induction in $i$, but (1) also uses that the functions first intersect at time $2^{-i-1}$, which comes from the claim.
The third and fourth part of the lemma also comes from the claim.
The lemma thus follows from the claim.\end{proof}
 
The lemma implies the following theorem as a corollary:
 
%\begin{reptheorem}{thm:exp_event}
%There is a family of simple priced time games that have exponentially many event
%points.
%\end{reptheorem}
\expeventthm*

\subsection{Inapproximability with few change points}
\label{app:no_approx}

We will in this section argue that a strategy for the minimizer (resp. maximizer)
such that if there exists a duration $[x\cdot 2^{-k+1},(x+1)\cdot 2^{-k+1})$,
for some integer $k,x$ in which there are no strategy change points, then the
strategy is not $\epsilon$-optimal for $\epsilon<2^{-k}$ when starting in
$v_{\ell}^k$ (resp. $v_{r}^k$). In particular, such a duration exists if there
are $<2^{k-1}$ many strategy change points in the strategy or if there exists
any duration of length $2^{-k+2}$ without strategy change points. 

The argument is nearly identical to the one in Section~\ref{sec:no_approx}, but
instead of referencing Figure~\ref{fig:basiclb}, it
uses Lemma~\ref{lem:val_func_exp}. 

Thus, consider towards contradiction that $\epsilon<2^{-k}$ and that we have an $\epsilon$-optimal minimizer strategy with no strategy change points in $D=[x\cdot 2^{-k+1},(x+1)\cdot 2^{-k+1})$ for some integer $x$. 
Let $a=x\cdot 2^{-k+1}$ and $b=(x+1)\cdot 2^{-k+1}$. Thus, $D=[a,b)$.
 Since there are no strategy change points in $D$, minimizer has only three options for the duration of $D$: Always go to $v_{\ell}^{k-1}$, always go to $v_{r}^{k-1}$ or wait until the end of $D$ and then possible do something else. 
If he waits, the outcome, for play starting at time $a$ is $\geq \val(v_{\ell}^k,b)+2^{-k+1}$ (because at best the minimizer starts playing optimally thereafter), while $\val(v_{\ell}^k,a)=\val(v_{\ell}^k,b)+2^{-k+1}/2$, since in the first half, the slope is $0$ and in the last half it is $-1$, by Lemma~\ref{lem:val_func_exp}. The strategy is therefore not $\epsilon$-optimal, since $\epsilon<2^{-k}=2^{-k+1}/2$. 
Alternately, if we always go to state $v_{\ell}^{k-1}$ (and pay the cost of $2^{-k}$) or state $v_{r}^{k-1}$: 
We have that $|\val(v_{\ell}^{k-1},t)+2^{-k}-\val(v_{r}^{k-1},t)|=2^{-k}$, by Lemma~\ref{lem:val_func_exp} for $t\in\{a,b\}$, but which one is smaller differs.
Thus, at either $t=a$ or $t=b-(2^{-k}-\epsilon)/2$ (the adjustment is because there might be a change point at $b$) the outcome we get in $v_{\ell}^k$ differs from $\val(v_{\ell}^k,t)$ by at least $2^{-k}-(2^{-k}-\epsilon)/2=\frac{2^{-k}+\epsilon}{2}$ (atleast because minimizer need not play optimally after leaving $v_{\ell}^k$) - the differences in outcome between going to $v_{\ell}^{k-1}$ and $v_{r}^{k-1}$ changes linearly at all times because of Lemma~\ref{lem:val_func_exp}.
 The strategy is therefore not $\epsilon$-optimal, since $\epsilon<2^{-k}$.

The argument for the maximizer is symmetric and uses $v_{r}^k$ instead of $v_{\ell}^k$, but if the maximizer waits during $D$, the outcome for starting at time $a$ is $\leq \val(v_{\ell}^k,b)$, because $v_{r}^{k}$ has a holding rate of 0. Still $\val(v_{\ell}^k,a)=\val(v_{\ell}^k,b)+2^{-k+1}/2$ and the strategy is not $\epsilon$-optimal.

We get the following lemma.

%\begin{replemma}{lem:no_approx}
%There is a family of simple priced time games in which every $\epsilon$-optimal
%strategy with $\epsilon < 1/2^k$ uses $2^{k-1}$ change points.
%\end{replemma}
\noapproxlem*

\section{Upper bounds for trees and undirected graphs}
\label{app:pos}
In this section, we will argue that there are few event points (i.e. at most
polynomial many) in SPTGs that are either trees or undirected graphs. Recall
that the event point iteration algorithms runs in time $O(|E|(m+n\log n))$,
where $E$ is the set of event points. 

We will say that a line segment $L$ is covered by a line or line segment $L'$ if $L\subseteq L'$ and also extend the notion to  sets, i.e. a set $S$ is covered by a set $S'$ if each element $L\in S$ is covered by some element $L'\in S'$ (which may depend on $L$).

%\begin{replemma}{lem:trees}
%Consider a state $s$ which is the root of a tree with $k$ leaves, for some number $k$.
%Then, let $L_s$ be the line segments of $\val(s,t)$.
%There exists a set $L_k$ of $k$ lines that covers $L_s$.
%\end{replemma}
\treeslem*

\begin{proof}
The proof is by structural induction in the tree-structure.
Consider a leaf $s$. Either, it is a goal state or not. If it is then $\val(s,t)=0$ and otherwise $\val(s,t)=\infty$. In either case, they form a (horizontal) line, satisfying the statement.

Next, consider an inner node $v$ of the tree, with children $c_1,\dots, c_{\ell}$ such that $c_i$ can be covered by a set of $k_i$ lines, which is also a lower bound on the number of leaves below $c_i$. No matter if it is a maximizer or minimizer state, we have that $\val(v,t)$ is an upper/lower envelope of some functions, according to the value iteration algorithm. We can split the upper/lower envelope into some sets and then first apply the upper/lower envelope
on the sets on their own and then apply upper/lower envelope on the set of results for each set. This will still give the same result.

The sets $C=\{C_1,\dots,C_{\ell}\}$ are the value function and the corresponding line segments for each of the successors. Fix an $i$. It is clear that lower/upper envelope of $C_i$ is a piecewise linear function with at most as many pieces as the function $\val(c_i,t)$ and since $\val(c_i,t)$ could be covered with $k_i$ lines before, then the upper/lower envelope of the set can be covered by $k_i$ lines now (or fewer). This is because there is an added line segment for each event point and they are parallel. Thus, then a line segment is left it can never be reentered, because we must have entered a better line segment in-between and we could then just continue on that instead. Let the set that covers the upper/lower envelope of $C_i$ be $L_i$. As a side remark, we do not necessarily have that $L_i$ is the set that covers $\val(c_i,t)$, because the additional line segments might require us to change some lines.

If we have a set of lines $L_i$ that can cover each of the sets on their own, then the union $L$ of them can cover the lower/upper envelope, i.e. $L$ can cover $\val(v,t)$. Note that there are at atleast $k=\sum_{i=1}^{\ell}k_i$ leaves under $v$ (because there were at least $k_i$ leaves under $c_i$ and the leaves must be disjoint since the graph is a tree) and the size of $L$ is at most $k$. 
\end{proof}

\begin{theorem}
In a SPTG that forms a tree, there are at most $O(n^3)$ many event points
\end{theorem}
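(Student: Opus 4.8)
The plan is to combine Lemma~\ref{lem:trees} with an elementary counting argument about intersections of lines. Lemma~\ref{lem:trees} tells us that if $s$ is the root of a subtree with $k_s$ leaves, then the piecewise-linear function $\val(s,t)$ can be covered by a set of at most $k_s$ lines. I would first observe that, since the line segments of $\val(s,t)$ are actual sub-segments of these $k_s$ covering lines and $\val(s,t)$ is a (continuous) function, the only way $\val(s,t)$ can switch from following one covering line to following another is at a point where two of the covering lines meet. Hence the number of internal breakpoints of $\val(s,t)$ is at most the number of pairwise intersection points among $k_s$ lines, which is at most $\binom{k_s}{2} \le k_s^2/2$. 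Therefore the number of line segments of $\val(s,t)$, and in particular the number of event-point time coordinates contributed by state $s$, is $O(k_s^2)$.

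Next I would sum over all states. Writing $V$ for the state set with $|V| = n$, the total number of event points is at most $\sum_{s \in V} O(k_s^2)$ where $k_s$ is the number of leaves of the subtree rooted at $s$. Crude bounding suffices here: each $k_s \le k \le n$, where $k$ is the total number of leaves of the tree, so $\sum_{s\in V} O(k_s^2) \le n \cdot O(k^2) = O(nk^2) = O(n^3)$. (If one wanted a tighter bound one could note $\sum_s k_s$ is related to the sum of leaf-depths, but since we only need the $O(n^3)$ bound claimed in the statement, the uniform bound $k_s \le n$ is enough.) Since an event point of the whole game is, by definition, the time coordinate of an endpoint of some line segment of $\val(v,t)$ for some state $v$, this yields at most $O(n^3)$ distinct event points.

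The only slightly delicate step is the first one: arguing that a breakpoint of $\val(s,t)$ must occur at an intersection of two of the covering lines. This needs the covering lines to be in ``general-ish'' position only in a weak sense — what we actually use is that when the graph of $\val(s,t)$ leaves one covering line $L'$ and begins following another $L''$, continuity of $\val(s,t)$ forces $L'$ and $L''$ to agree at that point, so the breakpoint lies on $L' \cap L''$; distinct non-parallel lines meet in one point, and parallel covering lines can be merged or never both needed, so this is fine. Once this is granted, the rest is the routine counting above, so I expect essentially no further obstacles. One should also remark that states where $\val(v,t)=\infty$ contribute no finite event points and can simply be discarded, which is consistent with the treatment in Lemma~\ref{lem:trees}.

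\begin{proof}
For a state $s$, let $k_s$ denote the number of leaves of the subtree rooted at $s$, and let $k$ be the total number of leaves of the tree, so $k_s \le k \le n$ for every $s$. By Lemma~\ref{lem:trees}, there is a set $L_s$ of at most $k_s$ lines such that every line segment of $\val(s,t)$ is contained in some line of $L_s$. The function $\val(s,t)$ is continuous (it is piecewise linear and finite, ignoring the states with value $\infty$, which contribute no event points), so whenever its graph stops following one line $L'\in L_s$ and starts following another $L''\in L_s$, the two lines must take the same value at that time; hence the corresponding breakpoint lies in $L'\cap L''$. Two distinct non-parallel lines meet in at most one point, and no breakpoint can require switching between two parallel lines of $L_s$ at a common point. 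Therefore the number of breakpoints of $\val(s,t)$ is at most the number of pairwise intersection points of the lines in $L_s$, which is at most $\binom{k_s}{2}$, and so $\val(s,t)$ has at most $\binom{k_s}{2}+1 = O(k_s^2)$ line segments.

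Every event point of the game is the time coordinate of an endpoint of a line segment of $\val(v,t)$ for some state $v$. Summing over all $n$ states, the total number of event points is at most
\[
\sum_{v\in V} O(k_v^2) \le n\cdot O(k^2) = O(nk^2) = O(n^3),
\]
using $k_v\le k\le n$. This proves the claim.
\end{proof}
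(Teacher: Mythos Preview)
Your proof is correct and follows essentially the same approach as the paper: invoke Lemma~\ref{lem:trees} to cover $\val(s,t)$ by at most $k$ lines, bound the breakpoints by the $\binom{k}{2}$ pairwise intersections, and sum over the $n$ states to get $O(nk^2)=O(n^3)$. Your write-up is in fact more careful than the paper's, which simply asserts the intersection bound without your continuity argument or your explicit treatment of the $\infty$ case and the per-state leaf count $k_s$.
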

\begin{proof}
According to Lemma~\ref{lem:trees}, each state is made up of a subset of at most $n$ lines. There can be at most $\frac{n(n-1)}{2}=O(n^2)$ intersections of $n$ lines.
\end{proof}

The next lemma can especially be applied on undirected graphs.

%\begin{replemma}{lem:dags}
%Consider a graph such that for each pair of states $s,s'$ so that $s'$ is a maximizer state, if $(s,s')\in E$, then $(s',s)\in E$. Then there is a polynomial number of event points.
%\end{replemma}
\dagslem*

\begin{proof}
We have that minimizer will never, in an optimal strategy, from a state $s$ such that $\val(s,t)\neq \infty$ (it is easy to see that $\val(s,t')=\infty$ for some $t'$ iff $\val(s,t)=\infty$ for all $t$), take an edge to a maximizer state. 
This is because if he did so, the maximizer could just immediately move back to $s$
and there play would never reach goal. Thus, we can remove all such edges. Now,
the set of minimizer states forms a component, which we can solve first. Because
only 1 player has states in the component, it can be solved as a one player
game, and such has at most a polynomial number of event points. 

Next, there are two kinds of maximizer states $s'$, those with edges to another maximizer state $s''$ and those without. If $(s',s'')\in E$, then $(s'',s')\in E$ by the assumption of the graph. 
Thus, the maximizer can just move back and forth between the states forever and
thus $\val(s',t)=\val(s'',t)=\infty$. 
Otherwise, if the maximizer only have edges to minimizer states, we can see,
following the value iteration algorithm, that it is the upper envelope of a
polynomial number of line segments. Upper envelopes of line segments form a
Davenport-Schinzel sequence of order 3, which implies that, if you have $n$ line
segments, then the output consists of at most $2n\alpha(n)+O(n)$ line segments,
where $\alpha$ is the inverse of the Ackermann function - $\alpha$ is much
slower growing than e.g. $\log$. Thus, we see that we have at most a polynomial
number of event points.
\end{proof}

\section{\PSPACE upper bound for DAGs}
\label{app:pspace_upper_bound}

In this section, we show that DAGs can be solved in polynomial space, by arguing
that, for each state $v$, each end point of a line segment of $\val(v,t)$ is a
pair with at most polynomially many bits.

For a natural number $c$, we say that a fraction $x$ is {\em $c$-expressible} if the denominator $d$ of $x$ is such that $d\cdot k=c$ for some natural number $k$. 
We trivially see that if $p,q$ are $c$-expressible and $r$ $c'$-expressible, then $p-q$ and $p+q$ are $c$-expressible and $p\cdot r$ is $(c'\cdot c)$-expressible for all natural numbers $p,q,r,c,c'$. Also, if a number is $c$-expressible, then it is also $(c\cdot k)$-expressible for all natural numbers $k$.

\begin{lemma}\label{lem:R-express}
Consider a SPTG on a DAG of depth $h$ with integer holding rates. Let $R=\Pi_{v_1,v_2\in V\mid r(v_1)\neq r(v_2)}\abs{r(v_1)-r(v_2)}$. 
 Let $v$ be some state at depth $h_v$ and $(x,y)$ some end point of a line segment of $\val(v,t)$. 
If $y=\infty$, then $\val(v,t)=\infty$ and otherwise,
 if $y\neq \infty$, the numbers $x$ and $y$ are $R^{h-h_v}$-expressible.
\end{lemma}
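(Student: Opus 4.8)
The plan is to prove the lemma by structural induction over the DAG, processing states in reverse topological order (sinks first), and to prove \emph{simultaneously} the auxiliary fact that on every linear piece of a finite-valued $\val(v,\cdot)$ the slope equals $-r(w)$ for some state $w\in V$ — this is what will make the denominators behave. I would measure the depth $h_v$ of a state as the length of a longest directed path ending at $v$, so that every edge $(v,u)\in E$ satisfies $h_u\ge h_v+1$ and hence $R^{h-h_u}$ divides $R^{h-h_v-1}$; this is the arithmetic driving the induction. I would also make the harmless normalisation that goal states have holding rate $0$ (their rate is irrelevant since the game ends upon reaching them), and use, as is standard in the complexity setting, that edge costs are integers, so that adding $c((v,u))$ to a value function preserves $c$-expressibility of its breakpoints for every $c$. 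Note the DAG is "leveled'' by this depth, and value iteration (Algorithm~\ref{alg:val_it}) stabilises after $h$ rounds, so $\val(v,\cdot)$ is genuinely piecewise linear with finitely many pieces.

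First I would dispose of the degenerate cases. If $v$ is a goal state then $\val(v,\cdot)\equiv 0$, whose only breakpoints are $(0,0)$ and $(1,0)$, which are $R^{j}$-expressible for all $j\ge 0$, and whose slope $0=-r(v)$. If $v$ has no outgoing edge and is not a goal, then no play from $v$ reaches a goal, so $\val(v,\cdot)\equiv\infty$, which is the $y=\infty$ case. For the inductive step let $v$ be a non-goal state with outgoing edges and assume the claim for all successors. By the known dichotomy of \cite{HIM13} (also used in Section~\ref{sec:upper}), each $\val(u,\cdot)$ is either $\equiv\infty$ on $[0,1]$ or finite throughout; I discard the successors with value $\equiv\infty$ (if $v$ is a maximizer state and some such successor exists this already forces $\val(v,\cdot)\equiv\infty$, as claimed). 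For the remaining successors, Algorithm~\ref{alg:val_it} expresses $\val(v,\cdot)$ as a lower envelope (if $v\in V_1$) or upper envelope (if $v\in V_2$) of a finite family $S$ of line segments: the shifted successor functions $f_{v,u}:=\val(u,\cdot)+c((v,u))$, together with, for each breakpoint $(x_p,y_p)$ of each $f_{v,u}$, the waiting segment of slope $-r(v)$ through $(x_p,y_p)$.

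The core is a uniform bound on the lines carrying the members of $S$. Each such line has slope $-r(w)$ for some $w$: the slopes of $f_{v,u}$ are inherited from $\val(u,\cdot)$ (negative holding rates by the inductive hypothesis), and the waiting segments have slope $-r(v)$; this proves the auxiliary slope claim for $v$, and in particular all slopes are integers. Writing such a line as $y=a+st$, its intercept is $a=y_p+r(w)\,x_p$ for a suitable breakpoint $(x_p,y_p)$ of some $f_{v,u}$, and $x_p$, $y_p$ are $R^{h-h_u}$-expressible by induction (using integrality of $c((v,u))$ and $r(w)$), hence $R^{h-h_v-1}$-expressible. Now any breakpoint of the envelope of these lines is either an endpoint of one of the segments — at $t\in\{0,1\}$, or a surviving breakpoint of some $f_{v,u}$, all $R^{h-h_u}$-expressible hence $R^{h-h_v}$-expressible — or an intersection of two of the lines $y=a_1+s_1t$, $y=a_2+s_2t$ with $s_1\ne s_2$. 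There $t^{*}=(a_2-a_1)/(s_1-s_2)$: the numerator is $R^{h-h_v-1}$-expressible, and the denominator is a nonzero integer equal to $\abs{r(w_1)-r(w_2)}$ for states with $r(w_1)\ne r(w_2)$, hence one of the factors of $R$ and in particular a divisor of $R$; so $t^{*}$ is $R^{h-h_v}$-expressible. The value $y^{*}=a_1+s_1t^{*}$ is then the sum of an $R^{h-h_v-1}$-expressible number and an integer multiple of an $R^{h-h_v}$-expressible number, hence $R^{h-h_v}$-expressible. This establishes the $y\ne\infty$ part for $v$ and closes the induction; the $y=\infty$ part is exactly the dichotomy invoked above.

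I expect the main obstacle to be pinning down the slope characterisation cleanly — that on every linear piece the slope of $\val(\cdot,t)$ is the negative of a holding rate — and getting the bookkeeping with the waiting segments and the $t\in\{0,1\}$ endpoints to close the induction exactly at one extra power of $R$ per level; once that is in place, the single observation that a difference of two distinct holding rates divides $R$ by definition does all the work, and everything else is routine manipulation of $c$-expressibility (closure under sums, differences, and products, and that $c$-expressible implies $(ck)$-expressible).
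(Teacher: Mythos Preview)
Your proposal is correct and follows essentially the same approach as the paper: structural induction over the DAG via the value-iteration description of $\val(v,\cdot)$, with the key observation that every linear piece has slope $-r(w)$ for some $w$, so that intersection abscissae pick up at most one extra factor of $\abs{r(w_1)-r(w_2)}\mid R$ per level. You are in fact more explicit than the paper about a few bookkeeping points (the goal-state rate normalisation, the integer-cost assumption needed for the $+c((v,u))$ shift, and carrying the slope characterisation through the induction), but the argument is the same.
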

\begin{proof}
We will show the statement using structural induction and consider how the value iteration algorithm computes the function.

First, for the leaves, which are at depth $h$. Either the leaf $v$ is a goal state or not. If so, then $\val(v,t)=0$ and otherwise, $\val(v,t)=\infty$. In either case, the statement is satisfied, in the former case, because the only end points are $(0,0),(1,0)$ and all those are $R^0=1$-expressible.

Next, consider a non-leaf node $v$ of height $h_v$, with successors $c_1,\dots,c_k$. By structural induction, we have that for each $i$, each end point $(x_i,y_i)$ of a line segment of $\val(v_i,x)$ are $R^{h-h_v-1}$. The -1 is because they are at a depth 1 larger.

First of all, the $\val(v,t)$ is either always $\infty$ or never $\infty$. 
To see this consider first that $v$ is a maximizer state. 
Either one of $c_1,\dots,c_k$ is such that $\val(c_i,t)=\infty$ or not. 
In the first case, $\val(v,t)=\infty$ and in the second, by definition of $\up$
and that each line segment has bounded end points, we have that $\val(v,t)$ is
finite. 
Otherwise, if $v$ is a minimizer state, then either all of $c_1,\dots,c_k$ is such that $\val(c_i,t)=\infty$ or not.
In the first case, $\val(v,t)=\infty$ and in the second, by definition of $\low$ and that each line segment has bounded end points, we have that $\val(v,t)$ is finite. 

To show the lemma we thus just need to consider the case where $y\neq \infty$.
In that case, each line segment of $\val(c_i,t)$ or the additional line segments in $L_i$ has a slope corresponding to minus some holding rate and starts in an end point of some line segment of $\val(c_i,t)$ for some $i$ and goes towards the left. Fix two such line segments, defined from end point $(x_i,y_i)$ and slope $-r_i$, for $i\in\{1,2\}$.
They can potentially generate a new end point of a line segment in $\val(v,t)$, at the intersection of the two line segments (the upper and lower envelope of the line segments can trivially not contain other new end points of line segments).
We consider the case when $x_1\geq x_2$ and the other case is symmetric.
We see that the line segment starting in $(x_1,y_1)$ is going through $(x_2,y_1+r_1 (x_1-x_2))$. We let $y'=y_1+r_1 (x_1-x_2)$. If $y'=y_2$, then the line segments intersect in $(x_2,y_2)$. We have by induction that $x_2,y_2$ are such that their denominator $d$ satisfies that $c\cdot d=R^{h-h_v-1}$ for some integer $c$ and thus satisfies the statement.
If $y'<y_2$ and $r_2\geq r_1$ or $y'>y_2$ and $r_2\leq r_1$, then the line segments will never meet. 
Finally, if $y'<y_2$ and $r_2<r_1$ or $y'>y_2$ and $r_2>r_1$, then line segments intersect in $(x,y)$ where
$x=\abs{y'-y_2}/\abs{r_1-r_2}$ and $y=(x_2-x)r_2+y_2(=(x_2-x)r_1+y')$.

First, observe that $y'=y_1+r_1 (x_1-x_2)$  is $R^{h-h_v-1}$-expressible, 
because each of $y_1,x_1,x_2$ are and $r_1$ is some integer and thus $1$-expressible.
Next, $x=\abs{y'-y_2}/\abs{r_1-r_2}$ is $R^{h-h_v}$-expressible, because $y',y$ are $R^{h-h_v-1}$-expressible and $1/\abs{r(v_1)-r(v_2)}$ is $R$-expressible. 
Finally, $y=(x_2-x)r_2+y_2$ is $R^{h-h_v}$-expressible, because $x_2,y_2$ are $R^{h-h_v-1}$-expressible and thus $R^{h-h_v}$-expressible, $x$ is  $R^{h-h_v}$-expressible and $r_2$ is some integer and thus $1$-expressible.
\end{proof}

\begin{theorem}
Consider an SPTG on a DAG with integer holding rates. Then DecisionSPTG is in \PSPACE.
\end{theorem}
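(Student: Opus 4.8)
The plan is to take the event point iteration algorithm of~\cite{HIM13} (Algorithm~\ref{alg:ev_it}), run it in the variant that does \emph{not} accumulate the output list $L(\val(v,t))$ of line-segment end points, and argue that the remaining working memory is polynomially bounded. The algorithm sweeps $t^*$ downward from $1$ through every event point to $0$; between iterations it only needs to remember the current event point $t^*$, the step length $d$, and the $O(n)$ numbers $\val(v,\PG(G,f))$ and $f(v)$, and each iteration solves an auxiliary priced game $\PG(G,f)$, which is a Dijkstra-like computation running in polynomial time (hence polynomial space) by~\cite{KBBEGRZ08}. So everything reduces to showing that each of these $O(n)$ numbers fits in polynomial space.

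First I would dispose of the infinite-value states: by~\cite{HIM13}, $\val(v,t')=\infty$ for some $t'$ iff $\val(v,t)=\infty$ for all $t$, and the set of such states, together with $\val(v,1)$ for all $v$, is computable in $O(m+n\log n)$ time by solving $\PG(G)$. If the query state has infinite value we answer immediately; otherwise the sweep is carried out over the finitely-valued subgame. Next I would bound the denominators using Lemma~\ref{lem:dags}: with $h\le n$ the depth of the DAG, $M$ the largest holding rate, and $R=\Pi_{v_1,v_2\in V\mid r(v_1)\neq r(v_2)}\abs{r(v_1)-r(v_2)}$, every event point and every finite value $\val(v,t^*)$ is $R^{h}$-expressible, and $R^h$ is a product of at most $hn^2\le n^3$ numbers each at most $M$, so $R^h\le M^{n^3}$ has only $O(n^3\log M)$ bits. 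For the numerators, I would use that in a DAG a play visits each state at most once, so any finite $\val(v,t)$ is at most the sum of the $\le n$ edge costs on a path plus the bounded holding cost $M\cdot 1$; hence, after clearing the common denominator $R^h$, the numerators are of polynomial bit-size as well. (If edge costs are rationals of polynomial bit-size from the input, one multiplies $R^h$ by the product of their denominators, still a product of polynomially many input numbers, and the bound goes through.)

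Finally I would note that this is enough: although there may be exponentially many event points (Theorem~\ref{thm:exp_event}), each iteration only overwrites the bounded working storage, so the total space is polynomial — the exponential running time is irrelevant for a \PSPACE\ bound. Running the sweep down to time $0$ yields $\val(v,0)$ for the query state (equivalently, one stops at the smallest event point $t^*>0$, reads $\val(v,t^*)$ and the current slope, and interpolates linearly to time $0$), after which comparing $\val(v,0)$ with the threshold $c$ decides DecisionSPTG. I expect the main obstacle to be exactly the bit-size bookkeeping of the middle paragraph — in particular making sure the numerators, and not merely the denominators handed to us by Lemma~\ref{lem:dags}, remain polynomial, and checking that the subroutine solving the auxiliary priced games with lexicographically ordered edge costs does not itself inflate the space.
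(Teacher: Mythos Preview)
Your proposal is correct and follows essentially the same route as the paper: run the event-point iteration of~\cite{HIM13} without storing the accumulated output, use Lemma~\ref{lem:dags} to bound denominators by $R^{h}$ (hence $O(n^{3}\log M)$ bits), bound numerators by a path-cost argument of the form $nW+M$, and interpolate on the last segment to read off the value. The paper's own proof is the same outline, with the minor generalisation that it allows an arbitrary query time $t^{*}$ rather than just $0$; your aside about multiplying in the product of edge-cost denominators is in fact a point the paper glosses over.
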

\begin{proof}
We will first argue that for each event point $t'$ and state $v$, we have that $\val(v,t')$ is $R^{h}$-expressible.
Fix a state $v$ and a event point $t'$.
An event point is the x-coordinate of an end point of a line segment of $\val(v',t)$ and, by Lemma~\ref{lem:R-express}, event points are therefore $R^{h}$-expressible.
The number $\val(v,t')$ is on some line segment of $\val(v,t)$, starting from some end point of a line segment $(x,y)$ which is $R^h$-expressible and having slope $-r$ for some holding rate $r$. Thus, $x\geq t'$ and hence $\val(v,t')=(t'-x)r+y$. Note that $t',x,y$ are $R^h$-expressible and $r$, being some holding rate, is $1$-expressible. Thus, $\val(v,t')$ is $R^h$-expressible.

The event point iteration algorithm, given an event point $t'$ and $\val(v,t')$ for all $v$, finds the next smaller event point $t''$ and $\val(v,t'')$ for all $v$. and finally outputs all of these numbers as the value function.
To solve the DecisionSPTG problem with $t^*$ as input time and $v$ as input state (i.e. we want to find $\val(v,t^*)$), we then use a variant of the event point iteration algorithm that iterates over the event points, but simply deletes the values from previous iterations, until the iteration in which we have the smallest event point $t'\geq t^*$ and then finds $\val(v,t^*)$ from that. 

We have that $\val(v',t')$ are $R^h$-expressible as shown above. Let $M$ be the largest holding rate (which requires linear space).  For obvious reasons, the nominator of $t'$ is smaller than its denominator (or equal, but only in case $t'=1$). By the event point iteration algorithm, we see that $\val(v,0)\leq \val(v,1)+M$ for all $v$. Also, $\val(v,0)\geq \val(v,t)$ for all $t$.
And finally,  $\val(v,1)\leq n\cdot W$ (unless $\val(v,1)=\infty$), where $W$ is the biggest weight, since it is the cost of some acyclic path in the graph from $v$ to a goal node. Hence, the nominator of $\val(v,t')$ 
Hence, the nominator of $\val(v,t')$ can at most be $n\cdot W+M$ times the denominator (which was $R^h$-expressible).
Thus, each of the $2n+2$ numbers (i.e. nominator and denominator) is bounded by $(n\cdot W+M)R^h$. 
We see, by definition, that $R$ has at most $n^2$ factors, each of which are at most $M$. Hence, we need at most $h\cdot n^2\cdot \log_2(M)$ bits to write down $R^h$ and thus any denominator of a $R^h$-expressible number in each iteration.
Each of the $2n+2$ numbers thus takes up at most $\log_2(n\cdot W+M)+h\cdot n^2\cdot \log_2(M)$ many bits, which is polynomial in the input size. 

In the last iteration (i.e. the one where we output $\val(v,t^*)$), we have an event point $t'\geq t^*$ and the next event point $t''<t^*$. We have that $\val(v,t^*)$ is then $\val(v,t')+r\cdot (t'-t^*)$ for the holding rate $r$ of that line segement of $\val(v,t)$. We see that this computation can be done in polynomial space, since $\val(v,t'),t'$ fits in polynomial space and $r,t^*$, being explicitly in the input, takes at most linear space.

In conclusion, the above described variant of the event point iteration algorithm runs in \PSPACE and solves DecisionSPTG.
\end{proof}

\section{A helpful lemma}

Before giving our lower bound constructions, we state following useful lemma. 
The intuition behind it was also an important part of the proof in \cite{BLMR06}
that one clock priced timed games have a value.

\begin{lemma}\label{lem:no wait}
In a minimizer state $s$ with $r(s)=\max_v r(v)$, the minimizer can always avoid waiting, i.e. 
$$\val(s,t)=\min_{v\in V\mid (s,v)\in E}(\val(v,t)+c((s,v))).$$

Similarly, in a maximizer state with holding rate $0$, the maximizer can always avoid waiting, i.e.,
$$\val(s,t)=\max_{v\in V\mid (s,v)\in E}(\val(v,t)+c((s,v))).$$
\end{lemma}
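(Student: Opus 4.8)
The plan is to show that in a minimizer state $s$ with maximal holding rate, any strategy that waits at $s$ for a positive duration can be replaced by one that does not wait, without increasing the value — and symmetrically for a maximizer state with holding rate $0$. I would argue this directly from the semantics of plays and the determinacy in time-positional strategies established via \cite{BLMR06,HIM13}. First I would fix $s \in V_1$ with $r(s) = \max_v r(v)$ and a time $t$. The inequality $\val(s,t) \le \min_{(s,v)\in E}(\val(v,t) + c((s,v)))$ is immediate: from $s$ at time $t$, the minimizer can choose to move immediately along the edge $(s,v)$ achieving the optimal continuation value $\val(v,t)$ plus the edge cost, so the right-hand side is an available ``value'' for the minimizer and hence an upper bound on $\val(s,t)$.

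For the reverse inequality, I would take an arbitrary $\epsilon$-optimal maximizer strategy $\sigma_2$ and reason about what happens to a play reaching $s$ at time $t$. Under $\sigma_2$ fixed, the minimizer faces a well-defined minimization problem. Consider any minimizer play from $(s,t)$ that waits at $s$ until some time $t' > t$ and then moves along an edge $(s,v)$. I would compare this with the alternative play that moves along $(s,v)$ already at time $t$: the difference in outcome is exactly the cost of waiting, namely $r(s)\cdot(t'-t)$, minus the difference between the continuation values $\val$-style quantities at times $t$ and $t'$ at the successor — but more cleanly, since the value function $\val(\cdot,\cdot)$ is weakly monotone in time in an SPTG (no guards, so costs only increase with time remaining; this is noted in the ``Relative values'' paragraph), and because $\val(v,t') \ge \val(v,t) - r(s)(t'-t)$ would be the relevant comparison. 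Here I would use the key fact that along any path from any state to a goal, the total holding cost accrued is at most $r(s)$ per unit time because $r(s)$ is the maximum holding rate: waiting at $s$ for duration $d$ costs $r(s)\cdot d$, whereas moving down and then letting time $d$ pass optimally in the subgame costs at most $r(s)\cdot d$ more than moving down and acting at the earlier time. Hence waiting is never strictly beneficial for the minimizer, and $\val(s,t) \ge \min_{(s,v)\in E}(\val(v,t)+c((s,v)))$.

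The cleanest way to package the monotonicity argument is: for any state $u$ and times $t \le t'$, $\val(u,t) \le \val(u,t') + M\cdot(t'-t)$ where $M = \max_v r(v)$, because from $(u,t)$ the maximizer-faced minimizer can first ``burn'' the interval $[t,t']$ incurring at most $M(t'-t)$ (by, e.g., the maximizer's best response incurring holding cost bounded by $M$ per unit time along whatever path is followed) and then behave as in the optimal play from $(u,t')$; making this precise requires care but follows the same envelope reasoning used in the value iteration algorithm of Appendix~\ref{app:known_algo}. Applying this with $u = v$ the successor and $M = r(s)$: waiting at $s$ from $t$ to $t'$ and then moving to $v$ gives outcome $\ge r(s)(t'-t) + \val(v,t') + c((s,v)) \ge \val(v,t) + c((s,v))$, so no wait-then-move play beats the best immediate move. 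Taking the minimum over edges and over the maximizer's $\epsilon$-optimal strategies, then letting $\epsilon \to 0$, yields the claim. The maximizer case with $r(s)=0$ is fully symmetric: waiting costs the maximizer nothing in holding, but by the same monotonicity $\val(v,t') \le \val(v,t)$ is false in general — rather, waiting cannot increase the outcome beyond moving immediately because the successor values are weakly increasing in time and the maximizer gains nothing from the zero holding rate while the continuation can only be obtained at a later time; so $\val(s,t) = \max_{(s,v)\in E}(\val(v,t)+c((s,v)))$.

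The main obstacle I anticipate is making the ``burn the interval at cost at most $M$ per unit time'' step rigorous without circularity: one wants $\val(v,t) \le \val(v,t') + r(s)(t'-t)$, but proving this needs either a direct strategy-construction argument (the player who is trying to push the value up can always just wait, but here it is the minimizer's opponent, so one must instead observe that from $(v,t)$ the minimizer can mimic the optimal play from $(v,t')$ by first waiting out the interval $[t,t']$ at $v$ or wherever the play currently is, which costs $\le M(t'-t)$) or an appeal to the piecewise-linearity and slope bounds of $\val$ from \cite{HIM13,BLMR06}. I would handle it by invoking that every line segment of every value function has slope in $[-M, 0]$ (slopes are negatives of holding rates, a fact implicit in the value iteration algorithm and in Lemma~\ref{lem:val_func_exp}), which gives the Lipschitz bound $|\val(v,t) - \val(v,t')| \le M|t-t'|$ and indeed $\val(v,\cdot)$ non-increasing, closing the argument immediately.
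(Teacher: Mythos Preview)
Your approach is essentially the same as the paper's: both arguments rest on the fact that every line segment of every value function has slope in $[-\max_v r(v),\,0]$, from which it follows immediately that waiting at the extremal holding rate (maximum for the minimizer, zero for the maximizer) is never strictly beneficial. The paper packages this more tersely by observing, via the value iteration algorithm, that $\val(s,t)$ is an upper (resp.\ lower) envelope of the successor value functions together with ``waiting'' line segments, and that the slope bound forces each waiting segment to lie on the wrong side of the successor function that spawned it; your direct strategy-comparison using the Lipschitz bound $\val(v,t)\le \val(v,t')+M(t'-t)$ is the same computation unpacked.

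One caution on presentation: your maximizer paragraph is garbled---you write that ``successor values are weakly increasing in time'', which is backwards---before your final paragraph correctly states that $\val(v,\cdot)$ is non-increasing and closes the argument. Also, Lemma~\ref{lem:val_func_exp} concerns only the specific exponential-event-point family, so the general slope bound should be attributed to the event point iteration algorithm (Algorithm~\ref{alg:ev_it}) rather than that lemma.
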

\begin{proof}
We consider the case with a maximizer state of holding rate $0$ and the argument for
minimizer states (with holding rate $r(s)=\max_v r(v)$) is similar.

The argument is based on the two algorithms described in Appendix~\ref{app:known_algo}. 
We have from the value iteration algorithm (Algorithm~\ref{alg:val_it}) that
$\val(s,t)$ is the upper envelope of a set of functions, some that corresponds
to waiting and some that corresponds to going to a successor. However, the
waiting line segments cannot be above the value function of the successor that
spawned it. This is because each line segment has slope at least 0 (by 
Algorithm~\ref{alg:ev_it}). 
\end{proof}

\section{Encoding formulas}
This section is intended to serve as a link between our exponentially lower bound, in the previous section and our \NP and \coNP hardness proof in the next. 
In essence, we will use our exponential lower bound to encode the variables in SAT/Tautology and then later TQBF.

\paragraph{Assignment time interval.}
For the set of booleans $B=\{b_1,\dots,b_n\}$, an assignment $A$ is a map from $B$ to $\{0,1\}$ (or false and true).
For an assignment $A$, we define an interval of times $T^A$, where $t\in T^A$ iff, for all $i\in [n]$, the $i$-th bit of $t$ after the comma, $t_i$, is $A(b_i)$ and there exists some $j>n$, such that $t_j\neq 1$ (this latter requirement is for formal reasons, since e.g. $0.11...=1$ in binary by definition of the reals). Note that each time $t\neq 1$ is mapped to an assignment ($t=1$ is not mapped to any assignment). 
\begin{itemize}[leftmargin=0.3cm]
\itemsep1mm
\item {\bf $t^A_s$.} We let $t^A_s$ be the first $t\in T^A$, i.e. $t\in T^A$ and $t^A_{s,i}=A(b_i)$, for $i\in [n]$ and $t_j=0$ for $j\geq n+1$.
\item {\bf $t^A_m$.} We let $t^A_m$ be the middle $t\in T^A$, i.e. $t\in T^A$ and $t^A_{s,i}=A(b_i)$, for $i\in [n]$, $t_{n+1}=1$ and $t_j=0$ for $j\geq n+1$.
\item {\bf $t^A_e$.} We let $t^A_e$ be the last $t\in T^A$, i.e. $t\in T^A$ and $t^A_{s,i}=A(b_i)$, for $i\in [n]$ and $t_j=1$ for $j\geq n+1$ (it is equal to $t^{A'}_s$, where $A'$ is the ``next'' assignment, by definition of reals).
\end{itemize}

\paragraph{Definition of function encoding.}
We will encode functions using two numbers, a {\em value} $v$ and a {\em offset}
$v'$. We have two encodings, {\em straight encoding} and {\em reverse encoding}
(that depends on $v,v'$ and the set of variables it is over, the latter because
it changes the duration of an assignment). Straight is used for exists, such as
SAT and in exists alternations of TQBF and reverse encoding is used for for all,
such as Tautology and for all alternations of TQBF. We will then convert from
one to the other as a part of our \PSPACE-hardness proof.

For a boolean function $F$ (in particular (quantified) boolean formulas) and an
assignment $A$ to its variables, we write $F(A)$ for the boolean the function
evaluates to when the variables are assigned according to assignment $A$.

\begin{definition}[Straight encoding] 
A state $s$ encodes a boolean function $F$ under straight encoding iff 
\begin{enumerate}[label=\textnormal{(S\arabic*)}]
\itemsep2mm
\item \label{str1} $\forall t$ we have that $$\val(s,t)\in [v-t/2,v+v'-t/2];$$
\item \label{str2} $\forall A$ s.t. $F(A)$ is false, we have that 
		$$\forall t\in t^A:\val(s,t)=v-t/2.$$
\item \label{str3} $\forall A$ s.t. $F(A)$ is true, we have that \ $\exists t\in t^A:$
	\begin{align*}
		\val(s,t) & = v+v'-t/2; \quad \text{and} \\
		\val(s,t_s^A) & =\val(s,t_e^A)=v-t/2;
	\end{align*}
\end{enumerate}
\end{definition}

\begin{definition}[Reverse encoding] 
A state $s$ encodes a boolean function $F$ under reverse encoding iff 
\begin{enumerate}[label=\textnormal{(R\arabic*)}]
	\item \label{rev1} $\forall t$ we have that $$\val(s,t)\in [v-v'-t/2,v-t/2];$$
\item \label{rev2} $\forall A$ s.t. $F(A)$ is true, we have that 
	$$\forall t\in t^A:\val(s,t)=v-t/2;$$
\item \label{rev3} $\forall A$ s.t. $F(A)$ is false, we have that $\exists t\in t^A:$ 
	\begin{align*}
		\val(s,t)=v-v'-t/2; \quad \text{and} \\
		\val(s,t_s^A)=\val(s,t_e^A)=v-t/2;
	\end{align*}
\end{enumerate}
\end{definition}

Since the encodings are so similar, we will at times be able to prove results for both variants at the same time. We will, for $i\in\{1,2,3\}$, then use (en$i$) to refer to (str$i$) for the proof for straight encoding and to (rev$i$) for reverse encoding.

\subsection{Encoding booleans}
\label{app:enc_bools}

In this sub-section we will show how to use our lower bound family to encode
booleans, i.e. $v_i$ and $\neg v_i$ for each $i$, in our encodings.

\paragraph{Step 1 of using our game to encode a variable.}
To encode a variable, $v_i$, we first construct a game according to our lower
bound family from the previous section, with $i$ levels. We then add a new state
$s_i$ (which will encode $v_i$ - whether it is minimizer state of holding rate 1 or a
maximizer state of holding rate 0 does not matter) which has an edge $e$ to $v_{r}^{i}$.
The cost of $e$, $c(e)$, is $ 1/4-2^{-i-2}+2^{-i-1}$. This means that
$\val(s_i,t)$ intersect, according to Lemma~\ref{lem:val_func_exp}, the function
$(\val(v_{\ell}^i)+1/4-2^{-i-2}+2\cdot 2^{-i-1})$ $2^{i+1}$ times on the line
$L$ that starts at $5/4-2^{-i-2}$ and has slope $-1/2$. The value $\val(s_i,t)$
has some similarities with what we want, when we encode $v_i$ in (straight or
reverse) encoding with $v=5/4-2^{-i-2}$ and $v'=2^{-i-2}$. In particular,
\begin{enumerate}[leftmargin=0.3cm]
\itemsep2mm
\item There are $2^{i+1}$ durations (the first and last are of length $2^{-i-2}$,
the rest of length $2^{-i-1}$ each); 
\item For all times $t$ in every odd duration, $$v-t/2\leq \val(s,t)\leq v+v'-t/2,$$ 
and for some\footnote{at the start
of the first and in the middle of the remaining durations} $t$ in each such
duration $$\val(s,t)= v+v'-t/2;$$
\item For each $t$ in an even duration
$$v-v'-t/2\leq \val(s,t)\leq v-t/2,$$ and for some\footnote{at the end of the last
and in the middle of the remaining durations} $t$ in each such duration
$$\val(s,t)=v-v'-t/2.$$ 
\end{enumerate}
To encode $\neg v_i$ (i.e., a variable which is true iff
$v_i$ is false), we can then first use a state $\hat{s}_i$, which has an edge
$e'$ to $v_{\ell}^{i}$ of cost $c(e')=1/4-2^{-i-2}+2\cdot 2^{-i}$. Such a
construction have similar properties to that of $s_i$ (but the properties of odd
and even durations are exchanged). The constructed states $s_i$ and $\hat{s}_i$
have values similar to what we want in our encoding of $v_i$ and $\neg v_i$
(with $v=5/4+2^{-i-2}$ and $v'=2^{-i-2}$), but there are three differences: 
\begin{enumerate}[label=\textnormal{(D\arabic*)}]
\item We start and end in the middle of a duration (and it is not equal to
	$v-t/2$ at the start of the first duration or end of the last duration); \label{issue1}
\item There are two times too many durations;  \label{issue2}
\item In straight encoding the value should be equal to $v-t/2$ if the variable
	is false (instead of below it) and in reverse encoding it should be equal to
	$v-t/2$ (instead of above it) when the variable is true. \label{issue3}
\end{enumerate}

We next deal with these three issues, first dealing with \ref{issue1} and \ref{issue2} by \emph{shifting}.

\paragraph{Shifting.}
To deal with \ref{issue1} and \ref{issue2}, we would like to have a state $s_i'$ such that $\val(s_i',t)=\val(s_i,t/2+1/2-2^{-i-1})$ for $t\in [0,1]$ (we use $t/2$ to ensure that we do not consider the function $\val(s_i,t)$ outside the $[0,1]$ interval for $t$ that we have already considered). 
Note that the last duration (which is the one cut in half) is of length $2^{-i-1}$, which is why $2^{-i-1}$ appears here.
Also, we keep only half of the durations (because of $t/2$).
By definition, $\val(s_i',t)$ has value $1$ at $t=0$ (because $\val(s_i,1/2-2^{-i-1})$, is equal to $\val(L,1/2-2^{-i-1})=5/4-2^{-i-2}-(1/2-2^{-i-1})/2=1$) and $3/4$ at $t=1$ (because $\val(s_i,1-2^{-i-1})$, is equal to $\val(L,1/2-2^{-i-1})=5/4-2^{-i-2}-(1-2^{-i-1})/2=3/4$), also, there are $2^i$ many durations in a game of half the length, and it is above $1-t/4$ half the time and below it the other half.

To do so, we construct a modified game according to \cite[Lemma 4.5-4.8]{HIM13}. The lemmas can be used to construct a game $G_1$ for the interval
$[2^{-i-2},2^{-i-2}+1/2]$ (which is the interval we care about), see
Figure~\ref{fig:extract-time-period} for an illustration.

\begin{figure}
\centering
\scalebox{0.8}{
\pgfkeys{/pgfplots/myaxis/.style={xmin=0,ymin=0,xmax=1.1,ymax=1.1, 
								  samples=2, % straight lines so no need for more than 2
								  axis lines=center,
						  		  xtick = {1},
								  extra x ticks = {0},
								  ymajorticks=false,
								  %ytick = {1},
								  %extra y ticks = {0},
								  %minor tick num=1,
								  %axis lines = middle,
								  xlabel=$\mathsf{time}$,
								  label style = {at={(ticklabel* cs:1.15)}, anchor=east, font=\large},
								  tick label style={font=\large},
								  %ylabel=$y$,
								  %width=13.5cm, height=5cm,
								  %grid=both
								  %xticklabel=\empty,yticklabel=\empty,
						  		  tick label style={
										/pgf/number format/.cd,
										std,
										precision=3,
										/tikz/.cd
									}
						  		  }}

\begin{tikzpicture}
\begin{axis}[myaxis, extra x ticks={0.375,0.875}]
  % blue
  \addplot[thick][domain=0:0.25] (x,1-x);
  \addplot[thick][domain=0.75:1] (x,0.5);
  \addplot[thick][domain=0.25:0.5] (x,0.75);
  \addplot[thick][domain=0.5:0.75] (x,1.25-x);
  % red
  \addplot[thick][domain=0.25:0.5] (x,1.125-x);
  \addplot[thick][domain=0.5:0.75] (x,0.625);
  \addplot[thick][domain=0:0.25] (x,0.875);
  \addplot[thick][domain=0.75:1] (x,1.375-x);
  %\draw[dashed] (axis cs: 0.125,0.125) rectangle (axis cs: 0.875,0.875);
  \draw[dashed] (axis cs: 0.375,0) rectangle (axis cs: 0.875,1);
\end{axis}
\end{tikzpicture}
}
\caption{Extracting a time period from a longer game.}
\label{fig:extract-time-period}
\end{figure}
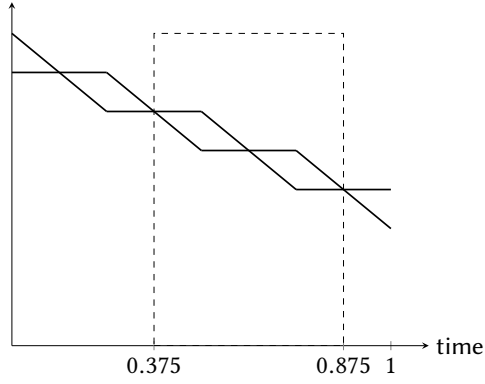

The resulting SPTG then has a state that has value function equal to $s_i'$ as wanted. 
The game is quite similar to the original construction, but all nodes have holding rate
either 0 or $1/2$ (basically, because the interval it is constructed over is of
length $1/2$ and all holding rates were 0 and 1 before) and some additional states and
edges have been added\footnote{for each original minimizer state $s_{\ell}^i$
	there is a new maximizer state $v_{\ell}^i$ with holding rate 1 and an edge to it of
	cost $0$ and then an edge from $v_{\ell}^i$ to some new goal state
	$g_{\ell}^i$ of cost $c_{\ell,i}=\val(s_{\ell}^i,1-2^{-i-2})$ in the
original game. For each original maximizer state $s_{r}^i$, for $i\geq 0$ there
is an new goal state $g_{r}^i$ and an edge to it from $s_{r}^i$ of cost
$c_{r,i}=\val(s_{r}^i,1-2^{-i-2})$ in the original game}. We would prefer to use
holding rates $0$ and $1$, so we use a classic trick from game theory and change the
currency of the output to one of half as much value (such tricks are also used
in \cite{HIM13}). This causes the holding rates, the cost and the values (at all times),
when expressed using the new currency, to double. Let $G_2$ be that game (which
has holding rates in $\{0,1\}$). Let $s_i''$ be the state in $G_2$ corresponding to
$s_i$ (or equally to $s_i'$). 

We see that $\val(s_i'',t)$ is such that (1)~there are $2^{i}$ durations (each of length $2^{-i}$ each); (2)~for each time $t$ in every {\em even} duration, $2-t/2\leq \val(s,t)\leq 2+2^{-i-2}-t/2$, and for the middle $t$ in each such duration $\val(s,t)= 2+2^{-i-2}-t/2$ and (3)~for each $t$ in an {\em odd} duration  $2-2^{-i-2}-t/2\leq \val(s,t)\leq 2-t/2$ and for the middle $t$ in each such duration $2-2^{-i-2}-t/2= \val(s,t)$. Note also that slope alternates in the middle of the durations and that $\val(s,t)=2-t/2$ at the start and end of each duration.

We would like to eliminate the additional states. We will show two lemmas that together will do so.

\begin{lemma}
Consider the game $G_2$ above.
Consider some $i\geq 1$ and let $s=s_{\ell}^i$ and $v_{\ell}^i$. If  $\val(s,t)=\val(v,t)$, then $\val(s,1)=\val(v,1)$ and $\val(s,t)=1-t+\val(v,1)$.

Also, let $s'=s_{r}^i$ and $v'=g_{r}^i$. If $\val(s',t)=\val(v',t)+2\cdot c_{r,i}$, then $\val(s',t)=2\cdot c_{r,i}=\val(s',1)$
\end{lemma}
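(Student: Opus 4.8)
The plan is to reduce the whole statement to computing the value functions of the two newly added states, $v=v_\ell^i$ and $v'=g_r^i$, both of which are essentially trivial, and then to substitute the two hypotheses.

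I would first dispatch the assertion about $s'=s_r^i$. Since $g_r^i$ is a goal state, the play terminates the instant it is reached, so $\val(v',t)=0$ for every $t\in[0,1]$. The hypothesis $\val(s',t)=\val(v',t)+2c_{r,i}$ therefore says exactly $\val(s',t)=2c_{r,i}$ for all $t$; this is a constant function, and in particular equals $\val(s',1)$. That is precisely the claim, so nothing further is required in this case.

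For the state $v=v_\ell^i$, recall that in $G_2$ this is a maximizer state of holding rate $1$ whose unique outgoing edge goes to the goal $g_\ell^i$; call the cost of that edge $c^\star$ (concretely it is the $G_1$-cost $c_{\ell,i}$ after the currency doubling, but its exact value is irrelevant). Every play from $v$ at time $t$ consists of waiting until some $\tau\in[t,1]$ and then crossing to $g_\ell^i$, at total cost $(\tau-t)\cdot 1 + c^\star=(\tau-t)+c^\star$; since $g_\ell^i$ is a goal the play then ends, and since the holding rate is positive the maximizer sets $\tau=1$, giving $\val(v,t)=(1-t)+c^\star$. Equivalently, Algorithm~\ref{alg:val_it} computes at $v$ the upper envelope of the ``move-now'' line $y\equiv c^\star$ with the ``wait-then-move'' segments, all of which have slope $-1$ and pass through $(1,c^\star)$, yielding the same line. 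Evaluating at $t=1$ gives $\val(v,1)=c^\star$, hence $\val(v,t)=1-t+\val(v,1)$.

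Finally I would combine this with the hypothesis $\val(s,t)=\val(v,t)$ (as functions on $[0,1]$): taking $t=1$ gives $\val(s,1)=\val(v,1)$, and substituting the line just derived gives $\val(s,t)=1-t+\val(v,1)$, as wanted. The only delicate point — the closest thing to an obstacle — is justifying that at $v$ the maximizer really does wait all the way to time $1$; this is where the holding rate being $1$ (the largest rate in $G_2$, so waiting never hurts the maximizer) and the uniqueness of the outgoing edge of $v$ are both used. Note that Lemma~\ref{lem:no wait} does not apply directly here, since it covers maximizer states of holding rate $0$; for $v$ the situation is the mirror image, and I would simply argue it directly as above.
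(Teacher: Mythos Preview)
Your reading of the hypotheses is the source of a genuine gap. In the lemma the time $t$ is a single fixed point, not a universally quantified variable: the intended meaning (cf.\ the sentence immediately following the lemma) is that at some particular moment $t$ the optimal action in $s$ (respectively $s'$) is to take the new edge to $v$ (respectively $v'$). You explicitly treat the hypothesis as an identity $\val(s,\cdot)=\val(v,\cdot)$ on all of $[0,1]$ and then simply plug in $t=1$; under that reading the conclusion is indeed immediate, but that is not what is being asserted, and under that reading the lemma would not be worth stating.

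Under the correct reading, both halves require a short squeeze argument that you omit. For $s=s_\ell^i$ your computation $\val(v,t)=1-t+\val(v,1)$ is fine; the hypothesis then only yields $\val(s,t)=1-t+\val(v,1)$ at this one $t$. To conclude $\val(s,1)=\val(v,1)$ one uses that $s$ has holding rate $1$ in $G_2$, so the minimizer can wait until time $1$ and then take the cost-$0$ edge to $v$, giving $\val(s,t)\le (1-t)+\val(s,1)\le (1-t)+\val(v,1)$; comparing with the hypothesis forces equality throughout, hence $\val(s,1)=\val(v,1)$. For $s'=s_r^i$ the hypothesis only gives $\val(s',t)=2c_{r,i}$ at that one $t$; here $s'$ has holding rate $0$, so the maximizer can wait until time $1$ and then use the edge to $v'$, giving $\val(s',t)\ge\val(s',1)\ge 2c_{r,i}$, whence equality. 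These are exactly the steps the paper's proof supplies and your proposal lacks.
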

\begin{proof}
Observe that $\val(v,t)=c_{\ell,i}+1-t$, since the maximizer will wait until time 1 and go to goal.
Therefore, $\val(s,t)=\val(v,t)=c_{\ell,i}+1-t=1-t+\val(v,1)$. On the other hand, if the minimizer waits in $s$ until time 1 and then move as at time 1, we have that $\val(s,t)\leq \val(s,1)+1-t$. But, the minimizer can go to $v$ at time $1$, so $\val(s,1)\leq \val(v,1)$. Therefore, since $\val(s,t)=1-t+\val(v,1)$, we see that $\val(s,1)=\val(v,1)$.

On the other hand, if $\val(s',t)=\val(v',t)+2c_{r,i}$ (recall that we multiplied all holding rates and costs by 2 to get $G_2$, which explains why there is a 2 here). We have that $\val(v',t)=\val(v',1)=0$ since it is a goal state and thus,  $\val(s',t)=c_{r,i}$. On the other hand, the maximizer can wait until time 1 in $s'$ (costing 0) and then do whatever he would do according to an optimal strategy at time 1. Thus, $\val(s',t)\geq \val(s',1)$. One of the options for the maximizer at time 1 is to go to $v'$ and thus $\val(s',1)\geq 2\cdot c_{r,i}$. But, then $\val(s',t)=2\cdot c_{r,i}=\val(s',1)$.
\end{proof}

The lemma says that if a strategy ever goes from $s$ to $v$ or from $s'$ to $v'$, then we can assume that it first wait until time 1 and then do so. We will next argue that (besides $v_r^0$, which must use this new edge, but on the other hand, we can just change the cost of the edge to $s_{\ell}^0$ instead of adding a new state) that at time 1, there are equally good options to going to the additional states. 

\begin{lemma}
For all $k\geq 0$, in $G_2$, we have that \[
\val(s_{\ell}^{k},1)=\begin{cases} 
\val(s_{\ell}^{k-1},1)+2^{1-k} & \text{If }k\geq 2\\
\val(s_{r}^{0},1) & \text{If }k=1\\
0 & \text{If }k=0
\end{cases}
\]
Also, 
\[
\val(s_{r}^{k},1)=\begin{cases}
\val(s_{r}^{k-1},1) & \text{If }k\geq 2\\
\val(s_{\ell}^{0},1)+1/2 & \text{If }k=1\\
2\cdot c_{r,0} & \text{If }k=0
\end{cases}
\]
\end{lemma}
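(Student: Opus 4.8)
The plan is to exploit the fact that at time $t=1$ no delay is possible, so every strategy choice must be an edge and the game on $G_2$ collapses to a finite \emph{priced game}: for a minimizer state $v$ one has $\val(v,1)=\min_{(v,u)\in E}(c((v,u))+\val(u,1))$, for a maximizer state the same identity with $\max$, and $\val(g,1)=0$ for every goal state $g$ --- this is exactly the reduction already used in~\cite{HIM13} and in the event-point iteration algorithm of Appendix~\ref{app:known_algo}. Since $G_2$ is a DAG layered by the level index (the added states $v_\ell^i$, $g_\ell^i$, $g_r^i$ all point into fresh goal states and create no cycles), I would prove the two displayed recurrences by a single simultaneous induction on $k$, computing the values at level $k$ from those at level $k-1$.

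First I would settle the base levels $k=0,1$ by inspecting outgoing edges directly. The state $s_\ell^0$ is the goal inherited from the exponential lower bound, so $\val(s_\ell^0,1)=0$; the state $s_r^0$ has a single edge into a goal of cost $2c_{r,0}$ after the cost-doubling that turns $G_1$ into $G_2$ --- here one uses the caveat from the construction that $s_r^0$ is the one state forced to use its new edge, so $2c_{r,0}$ is simply the cost folded onto that edge --- whence $\val(s_r^0,1)=2c_{r,0}$. The level-$1$ identities then follow from the same edge-by-edge comparison, once one checks that the ``new'' edges out of $s_\ell^1$ (to $v_\ell^1$, then to the goal $g_\ell^1$ at cost $2c_{\ell,1}$) and out of $s_r^1$ (to $g_r^1$ at cost $2c_{r,1}$) are dominated by the ``old'' edges that descend a level.

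For the inductive step $k\ge 2$, the priced-game identity reads
\[
\val(s_\ell^k,1)=\min\bigl\{\,2c_{\ell,k},\ \val(s_\ell^{k-1},1)+2^{1-k},\ \val(s_r^{k-1},1)\,\bigr\}
\]
and, dually,
\[
\val(s_r^k,1)=\max\bigl\{\,2c_{r,k},\ \val(s_\ell^{k-1},1)+2^{1-k},\ \val(s_r^{k-1},1)\,\bigr\},
\]
where the factor $2$ on the edge costs and the term $2^{1-k}=2\cdot 2^{-k}$ (in place of the original $2^{-k}$) come from the cost-doubling, and where $\val(v_\ell^k,1)=2c_{\ell,k}$ because $v_\ell^k$ has only one outgoing edge, into a goal. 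It then remains to show that the asserted terms, $\val(s_\ell^{k-1},1)+2^{1-k}$ and $\val(s_r^{k-1},1)$, actually realise the $\min$ and the $\max$. My plan is to carry along in the induction hypothesis the \emph{exact} closed forms of $\val(s_\ell^{k-1},1)$ and $\val(s_r^{k-1},1)$ obtained by unfolding the recurrences (equivalently, the exact value of the gap $\val(s_r^{k-1},1)-\val(s_\ell^{k-1},1)$), together with the quantities $c_{\ell,k}$ and $c_{r,k}$, which are just $\val$ of the relevant left/right state of the exponential lower bound game at the specific time dictated by the interval-extraction and can be read off Lemma~\ref{lem:val_func_exp}. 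With those in hand all three dominance comparisons reduce to elementary inequalities between dyadic rationals.

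The main obstacle lies entirely in this last step, and it is bookkeeping rather than any new idea: one must keep straight (i)~the cost-doubling; (ii)~the precise original-game time at which $c_{\ell,i}$ and $c_{r,i}$ are sampled, as fixed by the shifting and extraction from~\cite{HIM13}; and (iii)~which of the two states of a level lies on the upper versus the lower envelope at that time, so that Lemma~\ref{lem:val_func_exp} is applied with the correct sign. Getting the induction hypothesis strong enough --- pinning the values down exactly, not merely restating the recurrences --- is what makes the comparisons go through; after that each case is a short calculation. Finally, I would record that, combined with the preceding lemma (which lets us assume that any play entering an added state $v_\ell^i$ or $g_r^i$ first waits until time~$1$), this lemma shows the added states can be deleted without changing $\val(s_\ell^i,\cdot)$ or $\val(s_r^i,\cdot)$ at any time, which is the property actually needed to conclude that $s_i''$ encodes the variable as intended.
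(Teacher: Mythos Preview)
Your plan is correct and follows essentially the same route as the paper. The paper likewise reduces to a priced game at $t=1$, first computes the closed forms $c_{\ell,k}=1/2-2^{-k}+2^{-i-1}$ and $c_{r,k}=1/2$ by induction in the original game $G$, and then runs a second induction in $G_2$ showing $\val(s_\ell^k,1)=2c_{\ell,k}$ and $\val(s_r^k,1)=2c_{r,k}$ while checking that the non-added edge realises the $\min$/$\max$; your proposal to carry the exact closed forms inside a single simultaneous induction is the same idea with the two inductions merged.
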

\begin{proof}
The state $s_{\ell}^0$ is a goal state, thus $\val(s_{\ell}^0,t)=0$. 
In the remainder of the proof, let $t=1-2^{-i-1}$.
In state $s_{r}^0$, the maximizer can either go to goal with a cost of 0 or of
$2\cdot c_{r,0}$.
The number $c_{r,0}$ was $\val(s_{r}^0,t)$ in $G$. The optimal choice in
$s_{r}^0$ in $G$ was to wait to time 1 and go to goal. We have, in the original
game, that $\val(s_{r}^0,t)=1-t=2^{-i-1}$. Clearly, the latter is preferable.
Thus, $\val(s_{r}^{0},1)=2\cdot c_{r,0}=2^{-i}$.

All nodes $s_{\ell}^k$ and $s_{r}^k$ (in both $G$ and $G_2$), for $k\geq 1$ are
either minimizer states of holding rate 1 or maximizer states of holding rate 0 and thus
satisfies Lemma~\ref{lem:no wait}. We will therefore apply that lemma with no
more references for those states in this proof.

We will show using induction in $k$, for all $1\leq k\leq i$, that
$c_{\ell,k}=2^{-i-1}+\sum_{j=2}^k 2^{-j}=1/2-2^{-k}+2^{-i-1}<1/2$ (in
particular, for $i=k$, we have that $c_{\ell,i}=1/2-2^{-i-1}$ and this is the
top state), and $c_{r,k}=1/2$.

First, the base case.
For $s_{\ell}^1$ in $G$, we have that 
\begin{alignat*}{2}
c_{\ell,1} = & \val(G,s_{\ell}^1,t)\\
		   = & \min(\val(G,s_{\ell}^0,t)+1/2, \val(G,s_{r}^0,t)) \\
		   = & \min(1/2,2^{-i-1})=2^{-i-1}.
\end{alignat*}

For $s_{r}^1$ in $G$, we get the same expression, but with $\max$ replacing $\min$ and thus $c_{r,1}=1/2$.

Next, induction case.
For $s_{\ell}^k$ in $G$, we have that 

\begin{alignat*}{2}
c_{\ell,k}=& \val(G,s_{\ell}^k,t)\\
		  =& \min(2^{-k}+\val(G,s_{\ell}^{k-1},t),\val(G,s_{r}^{k-1},t)) \\
		  =& \min(2^{-k}+2^{-i-1}+\sum_{j=2}^{k-1} 2^{-j},1/2) \\
		  =& \ 2^{-i-1}+\sum_{j=2}^{k} 2^{-j}.
\end{alignat*}

For $s_{r}^k$ in $G$, we again have get the same expression, but with $\max$ replacing $\min$ and thus $c_{r,k}=1/2$.

Next, we will, again using induction in $k$, for all $k\geq 0$, that $\val(G_2,s_{\ell}^k,1)=2c_{\ell,k}$ and $\val(G_2,s_{r}^k,1)=2c_{r,k}$ and that there is an option that is equally good as going to $v_{\ell}^k$ and $g_r^k$ in $s_{\ell}^k$ and $s_r^k$ respectively.

First, the base case, for $k=1$.
For $s_{\ell}^1$ in $G_2$, we have that 

\begin{alignat*}{2}
	\val(G_2,s_{\ell}^1,1)= & \min( & \val(G_2,s_{\ell}^0,1)+1,\\
							&	    & \val(G_2,s_{r}^0,1), \\
						    &       & 2c_{\ell,1} + \val(G_2,v_{\ell}^1,1))\\
						  = & \min( & 1,2^{-i},2^{-i}) =2^{-i}=2c_{\ell,1}.
\end{alignat*}

Note that $\val(G_2,s_{\ell}^1,1)=\val(G_2,s_{r}^0,1)$.
For $s_{r}^1$ in $G_2$, we have get the same expression, but with $\max$ replacing $\min$ and thus $\val(G_2,s_{r}^1,1)=2c_{r,1}$. Also, $\val(G_2,s_{r}^1,1)=1+\val(G_2,s_{\ell}^0,1)$.

Next, the induction case, for $k\geq 2$.
For $s_{\ell}^k$ in $G_2$, we have that 
\begin{alignat*}{2}
	\val(G_2,s_{\ell}^k,1) ={}& \min({}&\val(G_2,s_{\ell}^0,1)+2^{1-k}, \\
							  &      {}&\val(G_2,s_{r}^0,1),\\
							  &      {}&2c_{\ell,k}+\val(G_2,v_{\ell}^k,1)) \\
  					     ={}& \min({}&2c_{\ell,k-1}+2^{1-k},1,2c_{\ell,k}) \\
						 ={}& \min({}&2c_{\ell,k},1,2c_{\ell,k}) \\
						 ={}& {}&2c_{\ell,k}.
\end{alignat*}
Note that $\val(G_2,s_{\ell}^k,1)=\val(G_2,s_{\ell}^{k-1},1)+2^{1-k}$.
For $s_{r}^k$ in $G_2$, we have get the same expression, but with $\max$ replacing $\min$ and thus $\val(G_2,s_{r}^k,1)=1=2c_{r,k}$. Also, $\val(G_2,s_{r}^k,1)=\val(G_2,s_{r}^{k-1},1)$.
\end{proof}

We can therefore remove all the states $g_{\ell}^k,v_{\ell}^k,v_r^k$ from $G_2$ and change the cost of the edge $e=(v_{r}^0,v_{\ell}^0)$ so that $c(e)=2^{-i}$ and still have that all states have the same value as before we did so.

An identical sequence of transformations can deal with \ref{issue1} and \ref{issue2} for $\neg v_i$.
 
\paragraph{Bounding the booleans.}
Finally, we deal with \ref{issue3}. 
To do so, we add a new maximizer state $L$, which has only a single edge $e$. 
The holding rate of $L$ is $r(L)=1/2$. The edge $e$ goes to a new goal state and has cost $3/2$. 

The optimal strategy for the maximizer when in $L$ is to wait until time 1 and go to goal. 
Thus, the value is $\val(L,t)=(1-t)/2+3/2=2-t/2$.
We then add a state $s_i^*$, with an edge to $L$ and an edge to $s_i''$.
In each case, we will have that $s_i^*$ is an encoding with $v=2$ and $v'= 2^{-i-2}$.
\begin{enumerate}[leftmargin=0.4cm]
\itemsep2mm
\item {\bf Straight encoding:} In this case, $s_i^*$ should be a maximizer state of holding rate $0$.
The state $s_i^*$ is then a straight encoding of $v_i$ (note that the value $\val(s_i^*,t)\geq v$, because of $L$ and $s_i^*$ being a maximizer state).
\item {\bf Reverse encoding:} In this case, $s_i^*$ should be a minimizer state of holding rate $1$.
The state $s_i^*$ is then a reverse encoding of $v_i$ (note that the value $\val(s_i^*,t)\leq v$, because of $L$ and $s_i^*$ being a minimizer state).
\end{enumerate}
The states are such that~(en3) is satisfied by the midpoint of the duration in
which the variable is true for straight and false for reverse, also the slope of the function in such a duration is
$t$ in the first half and $-t$ in the second half.

Again, we can deal with \ref{issue3} for $\neg v_i$ in the same way.

We get the following lemma:
\begin{lemma}
For each variable $v_i$ or $\neg v_i$, we can, using $5(i+1)$ states, construct a state that is a (straight or reverse) encoding of it with $v=2$ and $v'=2^{-i-2}$
\end{lemma}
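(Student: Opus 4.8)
The plan is to assemble, in order, the three constructions developed above and check that the resulting state satisfies the encoding axioms with $v=2$ and $v'=2^{-i-2}$. I describe the construction for the literal $v_i$; the literal $\neg v_i$ is handled by the symmetric construction using $\hat s_i$ and its edge into $v_\ell^i$. First I would take the game $G$ of Appendix~\ref{app:exp_event_points} with the appropriate number of levels and attach the state $s_i$ with its single edge into $v_r^i$ of cost $1/4-2^{-i-2}+2^{-i-1}$. Lemma~\ref{lem:val_func_exp} then pins down $\val(s_i,t)$ exactly: it is the intersection pattern of a shift of $\val(v_\ell^i,\cdot)$ with $\val(v_r^i,\cdot)$ along the line of slope $-1/2$, so it has the three ``almost what we want'' properties listed after the definition of $s_i$ ($2^{i+1}$ durations, alternation above/below that line, extreme values at the relevant midpoints). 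The only obstructions to $s_i$ being a genuine encoding with $v=5/4-2^{-i-2}$, $v'=2^{-i-2}$ are exactly \ref{issue1}, \ref{issue2} and \ref{issue3}.

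To remove \ref{issue1} and \ref{issue2} (shifting), I would apply the time-extraction construction of \cite[Lemmas~4.5--4.8]{HIM13} to the subinterval $[2^{-i-2},\,2^{-i-2}+1/2]$, getting a game $G_1$ with holding rates in $\{0,1/2\}$, and then rescale the payoff currency by $1/2$ to obtain $G_2$ with holding rates in $\{0,1\}$; let $s_i''$ be the image of $s_i$. The two auxiliary lemmas proved just above then apply: the one saying that a strategy moving from $s_\ell^k$ into the added state $v_\ell^k$ (or from $s_r^k$ into $g_r^k$) may as well first wait until time~$1$, and the one that computes the values at time~$1$ and shows that at time~$1$ there is always a move at least as good as the move into an added state. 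Together they let me delete every auxiliary state $g_\ell^k$, $v_\ell^k$, $g_r^k$ and reset $c(v_r^0,v_\ell^0)=2^{-i}$ without changing any value function. The outcome is a state $s_i''$ whose value function consists of exactly $2^i$ durations of length $2^{-i}$, alternately lying strictly above and strictly below the line $2-t/2$, touching that line at every duration boundary, with the slope flipping at each midpoint and the extreme value $2\pm 2^{-i-2}-t/2$ attained there.

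To remove \ref{issue3} (bounding), I would add a fresh maximizer state with holding rate $1/2$ and a single edge of cost $3/2$ into a fresh goal state, whose value function is then exactly $2-t/2$, together with a state $s_i^*$ having zero-cost edges into it and into $s_i''$. For straight encoding $s_i^*$ is a maximizer state with holding rate~$0$, so by Lemma~\ref{lem:no wait} its value is the pointwise maximum of $2-t/2$ and $\val(s_i'',t)$, which equals $2-t/2$ on the ``low'' durations and rises to $2+2^{-i-2}-t/2$ at the midpoint of each ``high'' duration; for reverse encoding $s_i^*$ is a minimizer state with holding rate~$1$ and its value is the pointwise minimum, swapping the roles of the two kinds of durations. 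Lining up these durations with the assignment intervals on which the literal is true (respectively false) yields (en1)--(en3) with $v=2$ and $v'=2^{-i-2}$, and in particular the peaks (troughs) sit in the middle of those intervals, as Lemma~\ref{lem:detector} will later need. For the state count: the game $G$ and the state $s_i$ contribute $O(i)$ states; the shifting introduces only temporary states, all deleted at its end; and the bounding adds exactly three states (the rate-$1/2$ state, its goal, and $s_i^*$), so a careful tally gives at most $5(i+1)$.

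I expect the shifting step to be the main obstacle. One must verify that the composition of the time-extraction transformation, the currency rescaling, and the deletion of the auxiliary states leaves the value function of $s_i$ literally equal to the rescaled restriction $t\mapsto\val(s_i,\,t/2+1/2-2^{-i-1})$ while disturbing none of the other states' value functions; the two auxiliary lemmas are exactly what make the deletion harmless, but invoking them rests on the inductive evaluation of the constants $c_{\ell,k}$ and $c_{r,k}$ carried out above, and the analogous statements have to be checked for the $\neg v_i$ branch as well.
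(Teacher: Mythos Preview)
Your proposal is correct and mirrors the paper's approach exactly: the lemma in the paper is really a summary statement of the three-stage construction (attach $s_i$ to the exponential game, shift via the time-extraction and currency-rescaling of \cite{HIM13} followed by deletion of the auxiliary states, then bound via the rate-$1/2$ state $L$ and $s_i^*$), and your plan walks through precisely these stages in the same order, invoking the same two auxiliary lemmas to justify the deletion step. The only remark is that the paper does not spell out the $5(i+1)$ tally either; it is a loose upper bound on $2(i{+}1)$ states for the game plus $s_i''$ plus the three bounding states, so your ``careful tally'' comment is accurate.
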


\subsection{Formula encoding}
\label{sec:form_enc}
In this section, we will show how we encode a boolean formula~$F$ over $n$ variables in our construction. We assume that  De Morgan's laws have been applied repeatedly so that all negations are on variables only. In the previous section section, we gave an encoding for variables and their negation, and we thus need only show how we encode ANDs and ORs. 

\paragraph{\bf ANDs and ORs.}
We give a recursive implementation of ORs and ANDs, using the same encoding
for both types of gate in both straight and reverse encoding. Consider some AND
or OR over sub-formulas $F_1,F_2,\dots F_k$, for some $k\geq 1$. We can
recursively construct a game $A_i$, such that a state $s_i$ in it encodes $F_i$
for each $i$ (to ensure that our construction have certain properties, in
particular constant tree-width, the game for $F_i$ has no state in common with
the game for $F_j$ for $i\neq j$). In common for both ANDs and ORs, we add a
state $s$ that has an edge $e_i$ to $s_i$ for each $i$, of cost $c(e_i)=0$.

\smallskip

\begin{enumerate}[leftmargin=0.4cm]
\itemsep1mm
\item {\bf AND:} In this case, $s$ is a minimizer state of holding rate $1$.
\item {\bf OR:} In this case, $s$ is a maximizer state of holding rate $0$.
\end{enumerate}

\smallskip

The state $s$ has value $\val(s,t)$ and is close to satisfying our properties
for straight/reverse encoding. In particular, it satisfies \ref{str1} and
\ref{rev1} - for some bounds, which is trivial - and \ref{str2}/\ref{rev2} for
$v=2$. The latter comes from the fact that all variables $v_i/\neg v_i$ satisfies (en2) with $v=2$
((en2) is independent of $v'$). 
It then follows from Lemma~\ref{lem:no wait} that for any AND or OR directly of
variables that (en2) is satisfied for (straight or reverse) encoding with $v=2$
and it then follows recursively for all formulas.

However, $\val(s,t)$ will not necessarily satisfy requirement \ref{str3}/\ref{rev3} of
straight/reverse encoding, because 
different variables use different values of $v'$. 
We deal with it for the full formula by using a detector stater; the nodes
corresponding to internal ANDs and ORs in the formula will not (necessarily) be
straight or reverse encodings.

\paragraph{Detector state.}
The role of the detector state $c$ is to convert the input state (here the
state $s$ as above for our full formula $F$) to the right format. This will
allow us to detect the truthfulness of the formula according to the 
game value at a specific time.
As mentioned above, the issue with $s$ is that $\val(s,t)+t/2$ can vary a lot,
but we need a state that encodes $F$, in either straight or reverse encoding.
To do this, we make $c$ encode $(F\wedge (v_n\vee \neg v_n))$ for
straight encoding and $(F\vee (v_n\wedge \neg v_n))$ for reverse encoding. Note
that $(v_n\vee \neg v_n)$ is true for any assignment, and $(v_n\wedge \neg v_n)$
is false for all assignments; thus both these boolean formulas are equal to~$F$.

\begin{lemma}\label{lem:detector}
The state $c$ (whether we consider straight or reverse encoding) encodes the formula $F$, with $v=2$ and $v'=2^{-n-2}$. Also requirement (en3) is satisfied precisely by $t^A_m$ for any $A$ for which $F(A)$ is true/false
\end{lemma}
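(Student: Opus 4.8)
The plan is to treat the two encodings separately, carrying out the straight‑encoding case in detail and obtaining the reverse case by the obvious duality (swap $\min$ with $\max$, $\land$ with $\lor$, the role of $v+v'-t/2$ with that of $v-v'-t/2$, and peaks with troughs). In the straight case, recall that $c$ is the state built for the formula $F\wedge(v_n\vee\neg v_n)$, so it is an $\land$-node: a minimizer state of holding rate~$1$. Hence by Lemma~\ref{lem:no wait} the minimizer never waits at $c$ and, since the outgoing edges of $c$ have cost~$0$, $\val(c,t)=\min\bigl(\val(s,t),\val(s_{\vee},t)\bigr)$, where $s$ is the state already built to (almost) encode $F$ and $s_{\vee}$ is the state built for $v_n\vee\neg v_n$. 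Because $v_n\vee\neg v_n$ is a tautology, $F\wedge(v_n\vee\neg v_n)$ has the same truth table as $F$, so it suffices to verify \ref{str1}--\ref{str3} for $c$ with respect to $F\wedge(v_n\vee\neg v_n)$.

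Two facts about the children drive the argument. First, the \emph{shape of $s_{\vee}$}: in every duration $t^A$ exactly one of $v_n,\neg v_n$ is true, and by the variable encodings of Appendix~\ref{app:enc_bools} the true one contributes a strict triangular peak of height exactly $2^{-n-2}$ whose apex sits precisely at the midpoint $t^A_m$, while the false one is flat at $v-t/2$ (with $v=2$). Since $s_{\vee}$ is an $\lor$-node (a maximizer state of holding rate~$0$), Lemma~\ref{lem:no wait} gives that $\val(s_{\vee},t)$ is the pointwise maximum of these two value functions, so in \emph{every} duration $t^A$ it is a strict tent: equal to $v-t/2$ at $t^A_s$ and $t^A_e$, rising strictly to $v+2^{-n-2}-t/2$ at $t^A_m$, and falling strictly thereafter. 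Second, a \emph{lower bound for $s$}: recall $s$ already satisfies \ref{str1} (for some $v'$) and \ref{str2} with $v=2$; I additionally claim that for every $A$ with $F(A)$ true one has $\val(s,t^A_m)\ge v+2^{-n-2}-t^A_m/2$. This follows by structural induction on the $\land/\lor$ structure of $F$: for a variable $v_i$ or $\neg v_i$, the midpoint value in a true duration is $v+2^{-i-2}-t^A_m/2\ge v+2^{-n-2}-t^A_m/2$ since $i\le n$; by Lemma~\ref{lem:no wait} the value of an internal node at the fixed time $t^A_m$ is the pointwise $\max$ (for an $\lor$-node) or $\min$ (for an $\land$-node) of its children's values; if an $\lor$-node is true at $A$ then some child is true and already contributes at least $v+2^{-n-2}-t^A_m/2$ at $t^A_m$, hence so does the $\max$; if an $\land$-node is true at $A$ then \emph{all} of its children are true, each contributing at least $v+2^{-n-2}-t^A_m/2$, hence so does the $\min$.

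Now I would combine these using $\val(c,t)=\min\bigl(\val(s,t),\val(s_{\vee},t)\bigr)$. For \ref{str1}: $\val(c,t)\le\val(s_{\vee},t)\le v+2^{-n-2}-t/2$, and $\val(c,t)\ge v-t/2$ because both children are (by their \ref{str1}). For \ref{str2}: if $F(A)$ is false then $\val(s,t)=v-t/2$ for all $t\in t^A$ by \ref{str2} for $s$, whence $\val(c,t)=v-t/2$ there. For \ref{str3}: if $F(A)$ is true then $\val(c,t^A_m)=\min\bigl(\val(s,t^A_m),\,v+2^{-n-2}-t^A_m/2\bigr)=v+2^{-n-2}-t^A_m/2$ by the lower bound on $s$, and $\val(c,t^A_s)=\min\bigl(\val(s,t^A_s),\,v-t^A_s/2\bigr)=v-t^A_s/2$ (and likewise at $t^A_e$) since $\val(s,t)\ge v-t/2$ while $s_{\vee}$ attains $v-t/2$ at the interval ends; this is exactly \ref{str3} with $v=2$ and $v'=2^{-n-2}$. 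For the final clause, if $t\in t^A$ with $t\ne t^A_m$ then $\val(c,t)\le\val(s_{\vee},t)<v+2^{-n-2}-t/2$ by strictness of the $s_{\vee}$-tent, so $t^A_m$ is the unique point of $t^A$ at which the equality in \ref{str3} holds. The reverse-encoding case is the mirror image: $c$ is built for $F\vee(v_n\wedge\neg v_n)$, which is a maximizer state of holding rate~$0$, so $\val(c,t)=\max\bigl(\val(s,t),\val(s_{\wedge},t)\bigr)$; the state $s_{\wedge}$ is a strict inverted tent of depth exactly $2^{-n-2}$ with apex at $t^A_m$ in every duration (since $v_n\wedge\neg v_n$ is unsatisfiable), $s$ satisfies \ref{rev1} and \ref{rev2} and $\val(s,t^A_m)\le v-2^{-n-2}-t^A_m/2$ whenever $F(A)$ is false, and the same computation yields \ref{rev1}--\ref{rev3} together with uniqueness.

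The obstacle I anticipate is book-keeping rather than a single deep step: one must check that the variable states, the state $s$ assembled from them, and $s_{\vee}$ (respectively $s_{\wedge}$) all use the \emph{same} family of durations $t^A$ and that all of their peaks and troughs sit at the common midpoints $t^A_m$, since it is precisely this alignment that makes the pointwise $\min$/$\max$ from Lemma~\ref{lem:no wait} collapse to the desired tent shape. The only place where a genuine inequality rather than an equality enters is the bound $\val(s,t^A_m)\ge v+2^{-n-2}-t^A_m/2$, which is exactly why the detector conjoins $F$ with $v_n\vee\neg v_n$ for the variable $v_n$ carrying the \emph{smallest} parameter $2^{-n-2}$; and the strictness of the $s_{\vee}$-tent, which traces back to the triangular ``slope $t$ then $-t$'' shape of the variable peaks recorded in Appendix~\ref{app:enc_bools}, is what delivers the ``precisely $t^A_m$'' conclusion.
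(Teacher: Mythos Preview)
Your overall scaffolding matches the paper's proof: you isolate the $\lor$-node $s_{\vee}$ for $v_n\vee\neg v_n$, establish the key pointwise inequality $\val(s,t^A_m)\ge v+2^{-n-2}-t^A_m/2$ for satisfying assignments by structural induction on $F$, and then read off \ref{str1}--\ref{str3} and uniqueness from $\val(c,t)=\min(\val(s,t),\val(s_{\vee},t))$. The paper does exactly this.

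There is, however, a genuine gap in your base case. You write that for a variable $v_i$ or $\neg v_i$ ``the midpoint value in a true duration is $v+2^{-i-2}-t^A_m/2$''. This is false when $i<n$: the point $t^A_m$ is the midpoint of the \emph{assignment} interval $T^A$ (length $2^{-n}$), not of the variable $v_i$'s own interval (length $2^{-i}$), and the tent for $s_i^*$ attains height $2^{-i-2}$ only at the latter midpoint. For instance with $n=2$, $i=1$ and $A$ given by $v_1=1$, $v_2=0$, one has $t^A_m=5/8$, while the $v_1$-tent on $[1/2,1)$ has relative height only $1/16$ at $5/8$, not $2^{-i-2}=1/8$.

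The conclusion $\val(s_i^*,t^A_m)\ge v+2^{-n-2}-t^A_m/2$ is nevertheless correct, but it needs the geometric comparison the paper supplies: because $T^A$ is nested inside the $v_i$-interval with aligned endpoints, and both tents have the same slopes $\pm 1/2$ from a common baseline, the $v_i$-tent dominates the $v_n$- (or $\neg v_n$-) tent throughout $T^A$; evaluating at $t^A_m$ then gives exactly the bound $2^{-n-2}$. Equivalently, if $t^A_m$ lies on the rising half of the $v_i$-tent then its relative height is $(t^A_m-\text{start})/2\ge(t^A_m-t^A_s)/2=2^{-n-2}$, and symmetrically on the falling half. Once you patch the base case this way, the rest of your argument goes through unchanged.
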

\begin{proof}
By the same argument as for $s$, $c$ satisfies requirement (en1) and (en2).

We will give the following argument for straight encoding (resp. reverse encoding):
\begin{claim}
If $F$ is true (resp. false) for some assignment $A$, then $\val(s,t^A_m)\geq 2+2^{-n-2}$ (resp. $\val(s,t^A_m)\leq 2-2^{-n-2}$).
\end{claim}
\begin{proof}
We do so recursively.
\begin{itemize}[leftmargin=0.3cm]
\itemsep=2mm
\item It is true for $v_n$ and $\neg v_n$ straightforwardly (because the duration of $t^A_m$ is exactly the duration for the corresponding state to $v_n$ resp. $\neg v_n$ and variables reach their maximum in the middle).
\item It is true for each state $s$ corresponding to another variable, because $\val(s,t)+t/2$ increases (resp. decreases) from $v'$ (resp. $v$) at the start of the duration for $s$, until the middle and then decrease back to $v'$ (resp. $v'$) at the end. The first half of $\val(s,t)+t/2$ has slope $t/2$ (resp. $-t/2$) and the last half slope $-t/2$ (resp. $t/2$). For any $i<j$, the duration of the variable $v_{i}$ is $2^{j-i}$ times the duration of $v_j$ and the duration of each variable $v_i$ starts at the start of a duration for $v_j$ and ends at the end of a duration of $v_j$. Thus, if $v_i$ and $v_j$ are true in $A$, for $i<j$, then $\val(s_i^*,t)\geq \val(s_j^*,t)$ (resp. $\val(s_i^*,t)\leq \val(s_j^*,t)$) for all $t\in T^A$. It then follows by considering $j=n$.
\item It is true for each state $v$ that encodes an OR over $F_1,\dots, F_n$ (with state $f_i$ encoding $F_i$).
It is then a maximizer state of holding rate $0$ and thus $\val(v,t)=\max_i(\val(f_i,t))$ (by Lemma~\ref{lem:no wait}). 
If the OR is true (resp. false) for $A$ then it follows since for some (resp. each) $F_i$, we have $F_i(A)$ is true (resp. false) and it is then true for $f_i$ and thus for $v$ in turn.
\item It is true for each state $v$ that encodes an AND, similar to OR.
\end{itemize}
\end{proof}

State $c$ satisfies (str3) for straight encoding:
Consider the node $v$ that encodes the OR outside $F$ in the formula $(F\wedge (v_n\vee \neg v_n))$. It is then a maximizer state of holding rate $0$ and thus $\val(v,t)=\max(\val(s_n^*,t),\val(\hat{s}_n^*,t))$ (by Lemma~\ref{lem:no wait}). 
Observe that $\val(s_i^*,t),\val(\hat{s}_i^*,t)\leq 2+2^{-n-2}$, since they encode the variables $v_n/\neq v_n$ in a straight encoding (with $v=2$ and $v'=2^{n-2}$) and $\val(s_i^*,t^A_m)=2+2^{-n-2}$ or $\val(\hat{s}_i^*,t^A_m)=2+2^{-n-2}$ (because either $v_n$ or $\neq v_n$ is true in each assignment). Also, \[
\val(s_i^*,t^A_s)=\val(\hat{s}_i^*,t^A_s)=\val(s_i^*,t^A_e)=\val(\hat{s}_i^*,t^A_e)=2-t/2
\]
by straight encoding and thus 
\begin{alignat*}{2}
	\val(v,t^A_s)= & \max(\val(s_i^*,t^A_s),\val(\hat{s}_i^*,t^A_s))\\
				 = & \ 2-t/2\\
				 = & \max(\val(s_i^*,t^A_e),\val(\hat{s}_i^*,t^A_e))\\
				 = & \val(v,t^A_e).
\end{alignat*}
for all $A$ (using Lemma~\ref{lem:no wait}).
Since $c$ is a minimizer state of holding rate $1$ (being an AND), we then have that 
$$\val(c,t)=\min(\val(s,t),\val(v,t))$$ (by Lemma~\ref{lem:no wait}).
We get that $c$ satisfies (str3), since $\val(v,t^A_m)= 2+2^{-n-2}-t/2$ (and this is not true for other $t\in T^A$) and $\val(s,t^A_m)\geq 2+2^{-n-2}$. Also, $\val(c,t^A_s)=\val(c,t^A_e)=2-t/2$, for all $A$, because it was true for $v$.

The argument for reverse encoding is similar.
\end{proof}

\section{\NP- and \coNP-hardness}
\label{app:np_conp_hardness}

As a stepping stone towards \PSPACE-hardness, we first show that the problem is
\NP- and \coNP-hard. We will do so by encoding SAT and DNF-tautology, well-known
\NP- and \coNP-hard problems. 

\paragraph{\NP-hardness.}
Given a SAT-formula $F$ (with variables encoded in straight format), we
construct a game $G$ so that a detector state $c$ encodes $F$ (according to the
previous section). 
Our \NP-hardness will use a special state called an {\em extender state}. 

In essence, the job of the extender state $x$ is to extend the duration for
which the formula is true (because of the exists part of SAT - we extend the
duration for which the formula is false for \coNP).  The extender state $x$ is a
maximizer state, has holding rate $1/2$ and an edge to $c$ of cost $0$. Consider some
time $t$. If there is an assignment $A$ such that $t\leq t^A_m$ and $F(A)$, then
it is an optimal strategy in $x$ to wait until $t^A_m$ and then go to $c$.

We will show so by considering all possible strategies in $x$. 
Let $f(t)=v-t/2$.
If we start in $x$ at time $t$, and wait until time $t'$ and then move to $c$ (and afterwards follow some optimal strategy), we get 
\begin{align*}
(t'-t)\cdot \frac{1}{2} + \val(c,t')&=(t'-t)\cdot \frac{1}{2} + \val(c,t')-f(t')+f(t')\\
&=(\val(c,t')-f(t'))+f(t)\enspace ,
\end{align*}
using that $f(t)=v-t/2=v-t/2-t'/2+t'/2=(t'-t)/2+f(t')$.
We thus see that the optimal strategy (for the maximizer, since $x$ belongs to
him) is to wait until a time that maximizes $(\val(c,t')-f(t'))$ (because $f(t)$
is independent of how much we wait). According to Lemma~\ref{lem:detector}, such
times $t'$ are when there is an satisfying assignment $A$ to $F$ for which $t\leq
t^A_m$. If no such satisfying assignment exists, it is an optimal choice to go
to $c$ directly (because we might have that $t$ is just slightly larger than
$t^A_m$ for some satisfying assignment $A$ and the function $(\val(c,t')-f(t'))$
is then decreasing until hitting 0).

Note that $0<t^A_m$ for all assignments $A$ and thus, we have that $\val(x,0)=2+
2^{-n-2}$ if there is a satisfying assignment $A$ to $F$ (and it is optimal to
wait until time $t_m^A$ and then move to $c$ when starting in $x$ at time~0) and
otherwise, if no satisfying assignment exists, $c$ is such that $\val(c,t)=f(t)$
(because it is a straight encoding of the formula), and, by our above
calculations, we see that $\val(x,t)=f(t)$ as well. In particular,
$\val(x,0)=2$. We get the following theorem.

\begin{theorem}
In the SPTG $G$ constructed above $\val(x,0)\in \{2,2+2^{-n-2}\}$ and $\val(x,0)=2+ 2^{-n-2}$ iff there is a satisfying assignment to the SAT instance with boolean formula $F$.
\end{theorem}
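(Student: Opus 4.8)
The plan is to combine the structural guarantee on the detector state $c$ coming from Lemma~\ref{lem:detector} with a direct, single-decision analysis of the extender state $x$, essentially formalising the paragraph that precedes the statement.

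First I would invoke Lemma~\ref{lem:detector}, which says that $c$ is a \emph{straight} encoding of $F$ with parameters $v=2$ and $v'=2^{-n-2}$. From the definition of straight encoding we extract the three facts we will use: by \ref{str1}, $\val(c,t)\in[2-t/2,\,2+2^{-n-2}-t/2]$ for all $t$; by \ref{str2}, for every assignment $A$ with $F(A)$ false we have $\val(c,t)=2-t/2$ throughout $T^A$; and by \ref{str3}, for every $A$ with $F(A)$ true there is a time in $T^A$ (namely $t^A_m$) at which $\val(c,t^A_m)=2+2^{-n-2}-t^A_m/2$, while $\val(c,t^A_s)=\val(c,t^A_e)=2-t/2$. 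Writing $f(t):=2-t/2$, property \ref{str1} becomes $\val(c,t)-f(t)\in[0,2^{-n-2}]$.

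Next I would analyse $x$. Since $x$ is a maximizer state whose only outgoing edge goes to $c$ with cost $0$, and since delay is forbidden at time $1$ while $\val(c,\cdot)$ is finite by \ref{str1} (so a goal is always reached), the only decision available from $x$ starting at time $t$ is the time $t'\in[t,1]$ at which to traverse the edge. Using determinacy and the existence of optimal time-positional strategies in SPTGs (from \cite{BLMR06,HIM13}) together with continuity of $\val(c,\cdot)$, this yields $\val(x,t)=\max_{t'\in[t,1]}\big((t'-t)/2+\val(c,t')\big)$. The identity $f(t)=(t'-t)/2+f(t')$ lets me rewrite the waiting cost as $(t'-t)/2+\val(c,t')=\big(\val(c,t')-f(t')\big)+f(t)$, so $\val(x,t)=f(t)+\max_{t'\in[t,1]}\big(\val(c,t')-f(t')\big)$. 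By \ref{str1} the maximised quantity lies in $[0,2^{-n-2}]$, hence $\val(x,0)\in[2,\,2+2^{-n-2}]$. Then I split on satisfiability: if $F$ has a satisfying assignment $A$, then by \ref{str3}, $\val(c,t^A_m)-f(t^A_m)=2^{-n-2}$, and since $t^A_m>0$ the choice $t'=t^A_m$ is legal from $t=0$, giving $\val(x,0)=2+2^{-n-2}$ (with the optimal strategy at $x$ being to wait until $t^A_m$ and then move to $c$); if $F$ is unsatisfiable, then every $A$ has $F(A)$ false, so \ref{str2} gives $\val(c,t)=f(t)$ for all $t$ in every $T^A$, hence for all $t\in[0,1)$ since these intervals cover $[0,1)$, and for $t=1$ by continuity of $\val(c,\cdot)$; thus $\val(c,t')-f(t')=0$ everywhere and $\val(x,0)=2$. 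This gives the dichotomy $\val(x,0)\in\{2,\,2+2^{-n-2}\}$ with the larger value attained exactly when $F$ is satisfiable.

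The main obstacle is not the arithmetic but justifying the formula $\val(x,t)=f(t)+\max_{t'\in[t,1]}(\val(c,t')-f(t'))$: one must argue that against an optimal maximizer the value at $x$ equals the best over all ``wait, then take the unique edge to $c$'' plays, that this maximum is attained (using piecewise linearity and continuity of $\val(c,\cdot)$ and existence of optimal strategies for SPTGs), and that nothing beyond $\val(c,t')$ can be extracted after the edge is used. Note Lemma~\ref{lem:no wait} does not apply verbatim, since $x$ has holding rate $1/2$ rather than $0$; so this step should be argued from the value-iteration characterisation (Algorithm~\ref{alg:val_it}) or directly from the definition of the value. A minor secondary point is the boundary time $t=1$ and the finiteness of $\val(c,\cdot)$ ensuring termination, both immediate from \ref{str1}.
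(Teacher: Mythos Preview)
Your proposal is correct and follows essentially the same argument as the paper: invoke Lemma~\ref{lem:detector} to get that $c$ is a straight encoding with $v=2$, $v'=2^{-n-2}$, rewrite the payoff of ``wait until $t'$ then move to $c$'' as $(\val(c,t')-f(t'))+f(t)$ with $f(t)=2-t/2$, and conclude by maximising $\val(c,t')-f(t')$ over $t'\ge 0$. Your write-up is in fact more careful than the paper's paragraph (you justify the $\max$ formula via continuity/piecewise linearity and existence of optimal strategies, handle $t=1$, and note that Lemma~\ref{lem:no wait} does not apply here), but the route is the same.
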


Note that the SPTG problem in the theorem is an instance of the PromiseSPTG problem.

Because DecisionSPTG is harder than PromiseSPTG and since we only used holding rates in $\{0,1/2,1\}$, we get the following theorem as a corollary:
%\begin{reptheorem}{thm:np_hard}
%For an SPTG, deciding whether $v(s, 0) \ge c$ for a given state $s$ and constant
%$c$ is \NP-hard, even if the game has only holding rates in $\{0,1/2,1\}$.
%\end{reptheorem}
\nphardthm*

We can show \coNP-hardness similarly, by considering DNF-tautology, using
reverse encoding and having the extender be a minimizer state instead. Changing
the extender to be a minimizer state instead, then ensures that $\val(x,0)\in
\{2,2-2^{-n-2}\}$ and $\val(x,0)=2- 2^{-n-2}$ iff there is an assignment to the
DNF-tautology instance such that it evaluates to false (or in other words, the
formula $F$ is a tautology iff $\val(x,0)=2$).

We get the following theorem:
%\begin{reptheorem}{thm:conp_hard}
%For an SPTG, deciding whether $v(s, 0) \ge c$ for a given state $s$ and constant
%$c$ is \coNP-hard, even if the game has only holding rates in $\{0,1/2,1\}$.
%\end{reptheorem}
\conphardthm*

\section{Quantified boolean formula encoding and \PSPACE-hardness}
\label{app:pspace}

In this section, we will show how to encode a quantified boolean formula in our
encoding. This then trivially allows us to show \PSPACE-hardness, since if we
can encode a quantified boolean formula, we can (easily) solve TQBF. Doing it
this way also allows us to give a more precise picture of what is required for
being hard for the $k$-th level of the polynomial time hierachy.

Consider a quantified booolean formula 
\begin{align*}
\forall v_1^1,\dots, v_{n_1}^1 \ \exists v_{1}^2 \dots, v_{n_2}^2\forall \dots \exists v_{1}^n, \dots, v_{n_n}^n:\\
F(v_{n_1}^1,\dots,v_{n_1}^1,v_{1}^2,\dots,v_{n_n}^n),
\end{align*}
where, like before $F(v_{n_1}^1,\dots,v_{n_1}^1,v_{1}^2,\dots,v_{n_n}^n)$ is assumed to have negations only on the variables (we can still assume so, because of De Morgan's laws).

\paragraph{Variable encoding.}
We use variables $v_1,\dots, v_{n_1}$ to encode $v_1^1,\dots, v_{n_1}^1$, variables $v_{n_1+3},\dots,v_{n_1+n_2+2}$ to encode  $v_1^2,\dots, v_{n_2}^2$ and in general variable $v_{S}$ to encode $v_i^j$, where $S=\sum_{k=1}^{j-1} (n_k+2)+i$.
In particular, we skip two variables whenever we have an alternation. 

\paragraph{Recursive construction.}
We will recursively construct an encoding of our quantified boolean formula by showing:
Given a straight encoding, where $v\geq 2$ and $0<v'\leq 2^{-n-2}$ 
(with free variables $v_1^1,\dots v_{n_1}^1,v_1^2,\dots,v_{n_{i}}^{i}$ - i.e. the ones given by time) of 
\begin{align*}
F_{i+1}^A:=\forall v_1^{i+1},\dots, v_{n_{i+1}}^{i+1}\exists v_1^{i+2},\dots, v_{n_{i+2}}^{i+2}\forall \dots \exists v_{1}^n,\dots, v_{n_n}^n: \\
F(v_{n_1}^1,\dots,v_{n_1}^1,v_{1}^2,\dots,v_{n_n}^n)
\end{align*}
we give a reverse encoding, where $v\geq 2$ and $0<v'\leq 2^{-n-2}$, (with free variables $v_1^1,\dots v_{n_1}^1,v_1^2,\dots,v_{n_{i-1}}^{i-1}$ - i.e. the ones given by time) of 
\begin{align*}
F_{i}^E:=\exists v_1^{i},\dots, v_{n_{i}}^{i}\forall v_1^{i+1},\dots, v_{n_{i+1}}^{i+1}\exists \dots \exists v_{1}^n,\dots, v_{n_n}^n: \\
F(v_{n_1}^1,\dots,v_{n_1}^1,v_{1}^2,\dots,v_{n_n}^n).
\end{align*}

Also, we will show how to, given a reverse encoding, where $v\geq 2$ and $0<v'\leq 2^{-n-2}$ (with free variables $v_1^1,\dots v_{n_1}^1,v_1^2,\dots,v_{n_{i}}^{i}$ - i.e. the ones given by time) of 
\begin{align*}
F_{i+1}^E:=\exists v_1^{i+1},\dots, v_{n_{i+1}}^{i+1}\forall v_1^{i+2},\dots, v_{n_{i+2}}^{i+2}\exists \dots \exists v_{1}^n,\dots, v_{n_n}^n:\\
F(v_{n_1}^1,\dots,v_{n_1}^1,v_{1}^2,\dots,v_{n_n}^n)
\end{align*}
give a straight encoding, where $v\geq 2$ and $0<v'\leq 2^{-n-2}$ (with free variables $v_1^1,\dots v_{n_1}^1,v_1^2,\dots,v_{n_{i-1}}^{i-1}$ - i.e. the ones given by time) of 
\begin{align*}
F_{i}^A:=\forall v_1^{i},\dots, v_{n_{i}}^{i}\exists v_1^{i+1},\dots, v_{n_{i+1}}^{i+1}\forall \dots \exists v_{1}^n,\dots, v_{n_n}^n:\\
F(v_{n_1}^1,\dots,v_{n_1}^1,v_{1}^2,\dots,v_{n_n}^n).
\end{align*}

Note that we can directly give a straight/reverse encoding of $F(v_{n_1}^1,\dots,v_{n_1}^1,v_{1}^2,\dots,v_{n_n}^n)$ using the section on formula encoding, and thus, if we show how to make this recursive construction, we can recursively construct the full quantified boolean formula.

\paragraph{The detector state.}
Let $v_S$ be the variable encoding $v_1^i$.
We construct a state $c$ which encodes the formula $F'=(F_{i+1}^A\wedge v_{S-1}\wedge v_{S-1})$ (i.e. by having states $s_{S-1}^*$ and $s_{S-2}^*$, encoding $v_{S-1}$ and $v_{S-2}$ respectively and then $c$ is a minimizer state of holding rate $1$ with an edge to each of $s,s_{S-1}^*,s_{S-2}^*$. The edge to $s$ has cost 0 and the ones to $s_{S-1}^*,s_{S-2}^*$ have cost $v-2$.
Thus, $\val(c,t)=\min(\val(s,t),\val(s_{S-1}^*,t)+v-2,\val(s_{S-2}^*,t)+v-2)$ (by Lemma~\ref{lem:no wait}). 

\begin{lemma}
The state $c$ straight encodes $F'$ with the same $v$ and $v'$ as for $s$. 
\end{lemma}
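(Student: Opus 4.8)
The plan is to verify the three requirements of straight encoding for $c$ directly, exploiting that $s$ is already a straight encoding of $F_{i+1}^{A}$ and that $s_{S-1}^{*},s_{S-2}^{*}$ are already straight encodings of single variables (with $v=2$ and offsets $2^{-(S-1)-2},2^{-(S-2)-2}$). This is a simpler variant of the argument behind Lemma~\ref{lem:detector}, with $s$ in the role of the ``main'' input and $s_{S-1}^{*},s_{S-2}^{*}$ in the role of the auxiliary inputs. First I would apply Lemma~\ref{lem:no wait} --- legitimate because $r(c)=1$ is the maximum holding rate in the construction and the edge costs $v-2$ are non-negative by the invariant $v\ge 2$ --- to obtain
\[
\val(c,t)=\min\bigl(\val(s,t),\ \val(s_{S-1}^{*},t)+(v-2),\ \val(s_{S-2}^{*},t)+(v-2)\bigr).
\]
Since $s_{S-j}^{*}$ ($j\in\{1,2\}$) straight-encodes a single variable with $v=2$, its value lies in $[\,2-t/2,\ 2+2^{-(S-j)-2}-t/2\,]$, so after shifting by $v-2$ all three terms of the minimum share the floor $v-t/2$, while the $s$-term together with \ref{str1} for $s$ supplies the cap $v+v'-t/2$; this gives requirement \ref{str1} for $c$.

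Next I would check \ref{str2}. If $F'(A)$ is false, then one of the conjuncts $F_{i+1}^{A}$, $v_{S-1}$, $v_{S-2}$ is false on $A$. If $F_{i+1}^{A}$ is false on $A$, then \ref{str2} for $s$ gives $\val(s,t)=v-t/2$ on all of $T^{A}$; if $v_{S-j}$ is false on $A$, then \ref{str2} for that single-variable encoding gives $\val(s_{S-j}^{*},t)=2-t/2$ on all of $T^{A}$, so the corresponding shifted term equals $v-t/2$ there. In every case the minimum is pinned to $v-t/2$ throughout $T^{A}$, which with \ref{str1} forces equality, establishing \ref{str2}.

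The crux is \ref{str3}, and it is the only place where the invariant $v'\le 2^{-n-2}$ enters. Suppose $F'(A)$ is true, so $F_{i+1}^{A}(A)$ holds and $A(v_{S-1})=A(v_{S-2})=1$. By \ref{str3} for $s$, together with the addendum of Lemma~\ref{lem:detector} that the peak is attained precisely at $t_{m}^{A}$ --- a property I would carry along inductively through this recursive construction --- we have $\val(s,t_{m}^{A})=v+v'-t_{m}^{A}/2$ and $\val(s,t_{s}^{A})=\val(s,t_{e}^{A})=v-t/2$. Applying the same addendum to $s_{S-1}^{*},s_{S-2}^{*}$, which are straight encodings of single variables both true on $A$, gives $\val(s_{S-j}^{*},t_{m}^{A})=2+2^{-(S-j)-2}-t_{m}^{A}/2$; since $S-j\le n$ we have $2^{-(S-j)-2}\ge 2^{-n-2}\ge v'$, so after adding $v-2$ each auxiliary term is $\ge v+v'-t_{m}^{A}/2$. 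Hence the minimum at $t_{m}^{A}$ is realized by the $s$-term and equals $v+v'-t_{m}^{A}/2$; at $t_{s}^{A}$ and $t_{e}^{A}$ the $s$-term already pulls the minimum down to $v-t/2$; and for every other $t\in T^{A}$ we have $\val(c,t)\le\val(s,t)<v+v'-t/2$. This is exactly \ref{str3} with parameters $(v,v')$, and it simultaneously reproduces the ``precisely $t_{m}^{A}$'' addendum, so the induction closes.

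The only real obstacle is the bookkeeping around the minimum: one must make sure the peak of $s$ and the peaks of the two auxiliary gadgets all sit at the same point $t_{m}^{A}$ (otherwise the minimum could dip below the target height $v+v'-t_{m}^{A}/2$), and one must get the inequality in the right direction --- $2^{-(S-j)-2}\ge v'$ rather than the reverse --- so that the auxiliary terms lie above the target and the $s$-term alone shapes the peak. Both points follow immediately from the recursive invariant $0<v'\le 2^{-n-2}$ and the ``peak in the middle of the interval'' property of Lemma~\ref{lem:detector}, so no new idea beyond that lemma is needed.
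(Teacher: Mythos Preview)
Your treatment of \ref{str1} and \ref{str2} is fine and matches the paper. The gap is in \ref{str3}. You assert that the single-variable encodings $s_{S-1}^{*}$ and $s_{S-2}^{*}$ attain their peak \emph{precisely} at $t_{m}^{A}$, appealing to the addendum of Lemma~\ref{lem:detector}. But that addendum is proved only for detector states (states sitting below a final $\land(v_n\lor\neg v_n)$ or its dual), not for raw variable encodings. The function $\val(s_{S-j}^{*},t)$ peaks at the midpoint of the $v_{S-j}$-duration of length $2^{-(S-j)}$, and those two midpoints cannot both coincide with a single $t_m^A$: for instance, when $A(v_{S-1})=1$ the point $t_m^A$ lies strictly in the second half of the surrounding $v_{S-2}$-duration, so $s_{S-2}^{*}$ is already past its peak there. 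Hence the equality $\val(s_{S-j}^{*},t_{m}^{A})=2+2^{-(S-j)-2}-t_{m}^{A}/2$ that you write down is false, and with it the step you use to force the minimum at $t_m^A$ to equal $v+v'-t_m^A/2$.

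The paper avoids any pointwise comparison. It partitions time into $v_{S-3}$-durations of length $2^{-S+3}$ and argues that on the last quarter of each such duration (the only place where $v_{S-1}$ and $v_{S-2}$ are both true) the shifted auxiliary $\val(s_{S-1}^{*},t)+(v-2)$ starts at $v-t/2$, rises with the maximal holding rate~$1$, and falls back to $v-t/2$ with the minimal rate~$0$; since $\val(s,t)$ also equals $v-t/2$ at both endpoints of that quarter (by \ref{str2}/\ref{str3} for $s$) and moves only with rates in $[0,1]$, the auxiliary dominates $\val(s,t)$ throughout the quarter, and likewise $s_{S-2}^{*}$ on its half. Consequently $\val(c,t)=\val(s,t)$ on the whole last quarter, and \ref{str3} for $c$ is inherited verbatim from \ref{str3} for $s$. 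Your pointwise strategy can be salvaged---a careful lower bound on $\val(s_{S-j}^{*},t_m^A)$ using the extremal slope and the distance of $t_m^A$ from the nearest duration boundary does yield the inequality you need---but as written the alignment claim is incorrect, and the paper's slope argument is what actually carries the step.
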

\begin{proof}
Consider time split into durations of length $2^{-S+3}$ (i.e. a duration for $v_{S-3}$) each. Then, for any time $t$ in the first $3/4$ of the duration, we have that $\val(c,t)=v-t/2$, because the assignment $A$ for which $t\in T^A$ is such that at least one of $v_{S-1}$ and $v_{S-2}$ is false. Still, the state $c$ is a straight encoding of $(F\wedge v_{S-1}\wedge v_{S-1})$ (with the same $v$ and $v'$ as for $s$): Property (str2) comes, similar to earlier cases, from that each of the functions $\val(s,t),\val(s_{S-1}^*,t)+v-2$ and $\val(s_{S-2}^*,t)+v-2$ satisfies it. Property (str1) can be seen since $v-t/2\leq \val(s,t),\val(s_{S-1}^*,t)+v-2,\val(s_{S-2}^*,t)+v-2$ and thus also the minimum of them. Also, $\val(c,t)\leq \val(s,t)\leq v+v'-t/2$. That (str3) is satisfied comes from that for the last $1/4$ of the durations of length $2^{-S+3}$, we have that $\val(s_{S-1},t)$ starts with $v-t/2$ and then increases with the highest holding rate in the game (i.e. holding rate 1) until $7/8$ of the duration and then falls back to $v-t/2$ at the very end with the smallest holding rate in the game (i.e holding rate 0). Therefore, $\val(s_{S-1},t)\geq \val(s,t)$ for such $t$. Similarly, $\val(s_{S-1},t)$ increases, from value $v-t/2$, with the highest holding rate from the middle of these durations, until $3/4$ into the duration at which point it falls back to $v-t/2$ at the end. Again, therefore $\val(s_{S-2},t)\geq \val(s,t)$. In particular, $\val(c,t)=\val(s,t)$ for such $t$ and thus, since $\val(s,t)$ satisfies (str3) for such $t$, it is also satisfied by $c$. On the other hand, in the first 3/4 of these durations, at least one of $\val(s_{S-1},t)$ and $\val(s_{S-2},t)$ have value $v-t/2$, because they are straight encodings (and all such $t$ belongs to $T^A$ for some assignnment where either $v_{S-1}$ or $v_{S-2}$ are false).
\end{proof}

\paragraph{The extender state.}
Next, let $x$ be a maximizer state with holding rate $r(x)=1/2-\frac{v'}{2.5\cdot 2^{-S+1}}$ and an edge to $c$ of cost $0$.
Intuitively, $x$ is an extender, similar to in our \NP-hardness proof, for extending the truth value of our assignments to all variables, but which resets (back to having value $v-t/2$) between each assignment of the first $S-3$ variables.

Consider (like in the proof for $c$ being a straight encoding) time split into durations of length $2^{-S+3}=4\cdot 2^{-S+1}$ (i.e. a duration for $v_{S-3}$) each. 
Consider such a duration, which corresponds to some assignment $\hat{A}$ of variables $v_1^1,\dots,v_{S-3}$.

\begin{lemma}\label{lem:slim region}
For all $t$ in a slim region, from $11/16$ to $3/4=12/16$ into the duration, either $\val(x,t)\geq v+v'/2-t/2$, if $F_i^E(\hat{A})$ is true, or $\val(x,t)=v-t/2$ if $F_i^E(\hat{A})$ is false
\end{lemma}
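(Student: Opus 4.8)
Recall that $c$ straight-encodes $F'=(F_{i+1}^A\wedge v_{S-1}\wedge v_{S-2})$ with this $v,v'$, and that $x$ is a maximiser state whose only outgoing edge goes to $c$ with cost $0$. Since $\val(c,\cdot)$ is finite and a time-positional maximiser must leave $x$ by time~$1$, a strategy at $x$ just picks the leaving time $t'\in[t,1]$, so
\[
\val(x,t)=\max_{t'\in[t,1]}\Big(\big(\tfrac12-\epsilon\big)(t'-t)+\val(c,t')\Big),\qquad \epsilon:=\frac{v'}{2.5\cdot 2^{-S+1}}.
\]
Write $f(t)=v-t/2$ and $\delta_c(t)=\val(c,t)-f(t)$. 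Using that $f$ has slope $-\tfrac12$, the same manipulation as in the \NP-hardness analysis of the extender state (now with an extra $-\epsilon(t'-t)$ term coming from the lowered holding rate) turns the above into
\[
\val(x,t)=f(t)+\max_{t'\in[t,1]}\big(\delta_c(t')-\epsilon(t'-t)\big).
\]
From the straight-encoding of $c$: \ref{str1} gives $\delta_c\in[0,v']$; \ref{str2} gives $\delta_c=0$ on every interval $T^A$ with $F'(A)$ false, hence throughout the first $3/4$ of each $v_{S-3}$-duration (there $v_{S-1}\wedge v_{S-2}$ fails); and \ref{str3} gives, inside every $T^A$ with $F'(A)$ true, a point where $\delta_c=v'$, and such points can only lie in the last quarter $[3\cdot2^{-S+1},4\cdot2^{-S+1})$ of a duration (positions measured from the duration's start). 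Everything now reduces to two estimates on this forward maximum.

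\emph{Case $F_i^E(\hat{A})$ true.} Some assignment $\beta$ to $v_1^i,\dots,v_{n_i}^i$ makes $F_{i+1}^A(\hat A,\beta)$ true, so $F'(\hat A,1,1,\beta)$ is true and by \ref{str3} there is a $p$ with $\delta_c(p)=v'$ in the last quarter of $\hat A$'s duration, i.e.\ $p<4\cdot2^{-S+1}$. For $t$ in the slim region, $t\ge 2.75\cdot2^{-S+1}$ (it begins at $11/16$ of the length-$4\cdot2^{-S+1}$ duration), so $p-t<1.25\cdot2^{-S+1}$ and hence $\epsilon(p-t)<v'/2$. Taking $t'=p$ in the forward maximum, $\val(x,t)\ge f(t)+\delta_c(p)-\epsilon(p-t)>f(t)+v'/2=v+v'/2-t/2$.

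\emph{Case $F_i^E(\hat{A})$ false.} No $\beta$ makes $F_{i+1}^A(\hat A,\beta)$ true, so $F'$ is false on every $T^A$ inside $\hat A$'s duration, hence $\delta_c\equiv0$ on that whole duration; in particular $\delta_c(t)=0$ on the slim region, which lies in the third quarter. Going right from $t$, the first place where $\delta_c$ can be positive is the last quarter of the next $v_{S-3}$-duration, at distance more than $4\cdot2^{-S+1}$ from $t$ (a duration has length $4\cdot2^{-S+1}$ and the slim region precedes its last quarter). There $\epsilon(t'-t)>1.6\,v'>v'\ge\delta_c(t')$, so $\delta_c(t')-\epsilon(t'-t)<0$; peaks in later durations are damped even more. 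Thus the forward maximum is attained at $t'=t$ with value $\delta_c(t)=0$, so $\val(x,t)=v-t/2$.

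\textbf{Main obstacle.} The crux is the joint calibration of the slim region $[11/16,12/16)$ and the decay rate $\epsilon=v'/(2.5\cdot2^{-S+1})$, and it is tight: the largest gap from the slim region to a peak in the same duration's last quarter is $4\cdot2^{-S+1}-2.75\cdot2^{-S+1}=1.25\cdot2^{-S+1}$, and $\epsilon\cdot1.25\cdot2^{-S+1}=v'/2$ exactly, so $\val(x,t)\ge v+v'/2-t/2$ is obtained with no slack; meanwhile a peak in any later duration lies at distance exceeding $4\cdot2^{-S+1}$, where the accumulated decay already exceeds $v'$, so it contributes nothing. The remaining work is bookkeeping with the straight-encoding properties of $c$ (and, if one prefers the peak pinned to the midpoint of its sub-interval rather than merely to the last quarter, Lemma~\ref{lem:detector}) to confirm that $\delta_c$ vanishes on the first $3/4$ of every duration and is supported, reaching height exactly $v'$ at an interior point, only on the true-assignment sub-intervals of the last quarters.
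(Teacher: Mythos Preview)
Your argument is correct and follows essentially the same route as the paper's proof. Both rewrite the payoff of waiting at $x$ in the form $f(t)+\big(\delta_c(t')-\epsilon(t'-t)\big)$ and then exploit the straight-encoding structure of $c$: $\delta_c$ vanishes on the first three quarters of every $v_{S-3}$-duration, and in the true case there is a peak of height $v'$ inside the last quarter of the current duration at distance at most $1.25\cdot 2^{-S+1}$ from the slim region, so the decay $\epsilon\cdot 1.25\cdot 2^{-S+1}=v'/2$ leaves at least $v'/2$; in the false case the nearest nonzero $\delta_c$ lies in the next duration's last quarter, far enough away that the decay exceeds $v'$. The only organisational difference is that the paper first proves the auxiliary bound ``the maximiser never waits more than $2.5\cdot 2^{-S+1}$'' and then uses it, whereas you bound the forward maximum directly; the underlying arithmetic is identical.
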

\begin{proof}
We will first argue that the maximizer never wait $w>2.5\cdot 2^{-S+1}$ in $x$ from time $t$, for any $t$. 
This is because the outcome is then 

\begin{align*}
r(x) \cdot w+\val(c,t+w)\leq & \ \frac{w}{2} - w\cdot \frac{v'}{2.5\cdot 2^{-S+1}}+ v+v'-\frac{t+w}{2}\\
< & \ v- \frac{t}{2}.
\end{align*}
On the other hand not waiting at all gives an outcome of $\val(c,t)\geq v-t/2$, because $c$ is a straight encoding. 

Observe that $\val(c,t)=v-t/2$ for $t$ from $0$ to $3/4$ into a duration, because at least one of $v_{S-1}$ and $v_{S-2}$ are false.
Let $t'$ be some number between $11/16$ and $3/4$ into the duration. If for all $t$ the durations, $\val(c,t)=v-t/2$, then the maximizer will not wait in $x$ in that duration, because, he cannot, as shown above, wait until $3/4$ or later into the next duration. This means that $\val(c,t)=\val(x,t)=v-t/2$ for such durations. In particular, $\val(x,t')=v-t'/2$.
But if $\val(c,t)=v-t/2$ for each $t\in T^A$ for each assignment $A$ that extends $\hat{A}$ into an assignment to variables $v_1^1,\dots, v_{n_i}^i$, then, $F_i^E(\hat{A})$ is false.

On the other hand, if for some $t$ in a duration, we have that
$\val(c,t)=v+v'-t/2$ (which is the case if $\val(c,t)\neq v-t/2$ for all $t$ in
such a duration, by straight encoding), then, $\val(x,t')\geq v+v'/2-t/2$,
because, $t>t'$ (since $\val(c.t'')=v-t/2$ for all $t''$ before 3/4th into the
duration) and thus $t-t'$ is at most 5/16 of the duration. If we wait $d\leq
w=5/16\cdot 2^{-S+3}=5/4\cdot 2^{-S+1}$ from time $t$ until time $t'$, we get an
outcome of 
\begin{alignat*}{2}
r(x)\cdot d+\val(c,t') ={}& r(x)\cdot d+v+v'-t'/2  \\
					   ={}& d/2-\frac{d\cdot v'}{2.5\cdot 2^{-S+1}}+v+v'-t'/2 \\
					   ={}& (v'-\frac{d\cdot v'}{2.5\cdot	2^{-S+1}})+v-t/2\\
					   \geq{}& (v'-\frac{w\cdot v'}{2.5\cdot 2^{-S+1}})+v-t/2\\
					   ={}& v+v'/2-t/2.
\end{alignat*}
But if $\val(c,t)=v+v'-t/2$ for some $t\in T^A$ and some assignment $A$ that extends $\hat{A}$ into an assignment to variables $v_1^1,\dots, v_{n_i}^i$, then, $F_i^E(\hat{A})$ is true.
\end{proof}

\paragraph{Limiter.}
We will now, using a few more states, use the above lemma to create a state $s'$
that is a reverse encoding of $F_i^E$ with $v_i\leftarrow v+v'/2$ and
$v_i'\leftarrow v'/2$.
First, to ensure that $\val(s',t)\leq v_i-t/2$, we will introduce a limiter
state $L$. The state $L$ is a minimizer state of holding rate $1$, with an edge to $x$
and one to a maximizer state $s''$ of holding rate $1/2$, each of cost $0$. The state
$s''$ has a single edge to a goal state of cost $v_i-1/2$. Thus,
$\val(s'',t)=v_i-1/2+(1-t)/2=v_i-t/2$ (because the maximizer will wait until
time 1 and then move to goal).  
We have that $\val(L,t)=\min(\val(x,t),\val(s'',t))=\min(\val(x,t),v_i-t/2)$, by Lemma~\ref{lem:no wait}.

\paragraph{The reverse encoding state $s'$.}
The state $s'$, which we will show, reverse encodes $F_i^E$ is a maximizer state
of holding rate $0$ that has an edge to $L$ and to a state $r$ reverse encoding
$F''=(\neg v_{S+1}\vee \neg v_{S}\vee v_{S-1}\vee \neg v_{S-2})$ with $v_i$ and
$v_i'$.

\begin{lemma}
The state $s'$ reverse encodes $F_i^E$ with $v_i$ and $v_i'$
\end{lemma}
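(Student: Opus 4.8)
The plan is to obtain $\val(s',t)$ in closed form and then verify the three reverse-encoding requirements directly. Since $s'$ is a maximizer state of holding rate $0$ whose (zero-cost) outgoing edges go to $L$ and to $r$, Lemma~\ref{lem:no wait} gives $\val(s',t)=\max\bigl(\val(L,t),\val(r,t)\bigr)$, and substituting $\val(L,t)=\min(\val(x,t),v_i-t/2)$, which was established just before the statement, we get $\val(s',t)=\max\bigl(\min(\val(x,t),\,v_i-t/2),\ \val(r,t)\bigr)$. Requirement \ref{rev1} is then essentially free: the inner $\min$ is at most $v_i-t/2$, and by \ref{rev1} for $r$ (which reverse encodes $F''$ with value $v_i$ and offset $v_i'$) we have $\val(r,t)\in[v_i-v_i'-t/2,v_i-t/2]$, so $\val(s',t)$ lies in $[v_i-v_i'-t/2,\,v_i-t/2]$.

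The crucial structural step is to line up the ``slim regions'' of Lemma~\ref{lem:slim region} with the troughs of $r$. Split $[0,1]$ into the durations $T^{\hat A}$ of length $2^{-S+3}$, one for each assignment $\hat A$ to the free variables of $F_i^E$ (which are exactly the variables encoded by $v_1,\dots,v_{S-3}$; $v_{S-2}$ and $v_{S-1}$ are the two padding variables and $v_S$ encodes $v_1^i$). The formula $F''=(\neg v_{S+1}\vee\neg v_S\vee v_{S-1}\vee\neg v_{S-2})$ is false on exactly those times of $T^{\hat A}$ with $v_{S-2}=v_S=v_{S+1}=1$ and $v_{S-1}=0$, and because $v_{S-2},v_{S-1},v_S,v_{S+1}$ are the four most significant bits inside $T^{\hat A}$, this set is precisely the sub-interval from $11/16$ to $12/16$ of $T^{\hat A}$, i.e.\ the slim region of Lemma~\ref{lem:slim region}. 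Consequently \ref{rev2} for $r$ yields $\val(r,t)=v_i-t/2$ for every $t\in T^{\hat A}$ outside the slim region; in particular this holds at $t_s^{\hat A}$ (there $v_{S-2}=0$, so $F''$ is true) and at $t_e^{\hat A}$ (there $v_{S-1}=1$, so $F''$ is true).

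It then remains to check \ref{rev2} and \ref{rev3} one duration at a time. Suppose $F_i^E(\hat A)$ is true. Lemma~\ref{lem:slim region} gives $\val(x,t)\ge v+v'/2-t/2=v_i-t/2$ on the slim region, so $\val(L,t)=v_i-t/2$ there, and since $\val(r,t)\le v_i-t/2$, the outer $\max$ equals $v_i-t/2$ on the slim region; off the slim region it equals $v_i-t/2$ as well, so $\val(s',t)=v_i-t/2$ throughout $T^{\hat A}$, which is \ref{rev2}. Suppose instead $F_i^E(\hat A)$ is false. Lemma~\ref{lem:slim region} gives $\val(x,t)=v-t/2=v_i-v_i'-t/2$ on the slim region, hence $\val(L,t)=v-t/2$ there; since $\val(r,t)\ge v-t/2$ always, the outer $\max$ just equals $\val(r,t)$ on the slim region, so $s'$ copies $r$ there, and the trough of $r$ supplied by \ref{rev3} for $r$ (the assignment of that sub-interval makes $F''$ false) gives a time $t^\star$ with $\val(s',t^\star)=\val(r,t^\star)=v_i-v_i'-t^\star/2$, while $\val(s',t_s^{\hat A})=\val(s',t_e^{\hat A})=v_i-t/2$ by the previous paragraph. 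That is exactly \ref{rev3}.

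The main obstacle I anticipate is the interval/index bookkeeping: establishing that the $F''$-false set inside $T^{\hat A}$ coincides \emph{exactly} with the $[11/16,12/16)$ slim region of Lemma~\ref{lem:slim region} (rather than merely being contained in it), which in turn forces a careful accounting of which variable indices encode the free variables of $F_i^E$ versus the padding variables $v_{S-2},v_{S-1}$ and the ``window'' variable $v_{S+1}$, together with the check that $t_s^{\hat A}$ and $t_e^{\hat A}$ lie in $F''$-true territory. Once that alignment is pinned down, the rest is the short three-region case analysis above.
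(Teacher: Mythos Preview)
Your proposal is correct and follows essentially the same approach as the paper's proof: you use Lemma~\ref{lem:no wait} to get $\val(s',t)=\max(\val(L,t),\val(r,t))$, obtain \ref{rev1} from the limiter bound together with \ref{rev1} for $r$, align the $F''$-false sub-interval with the slim region of Lemma~\ref{lem:slim region} via the bit expansion of $11/16$, and then run the same two-case analysis on $F_i^E(\hat A)$. If anything, your treatment of the endpoints $t_s^{\hat A},t_e^{\hat A}$ is slightly more explicit than the paper's (which has a couple of typos at that step), but the argument is the same.
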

\begin{proof}
We have that $\val(s',t)=\max(\val(L,t),\val(r,t))$ by Lemma~\ref{lem:no wait}.

First, property (rev1) follows from that $\val(L,t)\leq v_i-t/2$ and $\val(r,t)\in [v_i-t/2,v_i-v_i'-t/2]$ (by reverse encoding), and thus, $\val(s',t)=\max(\val(L,t),\val(r,t))\in [v_i-t/2,v_i-v_i'-t/2]$.

For property (rev2) and (rev3), consider some assignment $\hat{A}$ to the variables $v_1,\dots, v_{S-3}$. 
For each $t\in \{t^{\hat{A}}_s,t^{\hat{A}}_e\}$, we have that $\val(s',t)=\val(s',t)=v_i-t/2$, because $\val(L,t)=v_i-t/2$ for such $t$ (by reverse encoding and since such $t$ are also the end of durations for higher numbered variables) and $\val(s',t)\leq v_i-t/2$ and thus, $\val(s',t)=\max(\val(L,t),\val(r,t))=v_i-t/2$.

 We have that the duration for $\hat{s}_{S+1}^*$, the state encoding $\neg v_{S+1}$, has a duration of $2^{-S-1}=2^{-S+3}/16$. Also, the boolean encoding of $11/16$ is $0.1011$, and thus, $F''$ is true, except for a period of length $1/16$, starting at $11/16$ into the duration. For any $t\not\in [11/16,3/4]$, we have that (1)~$\val(r,t)=v_i-t/2$; (2)~$\val(L,t)\leq v_i-t/2$  and (3)~$\val(s',t)=\max(\val(L,t),\val(r,t))$ (by Lemma~\ref{lem:no wait}), we have that $\val(s',t)=v-t/2$. 
For any $t\in [11/16,3/4]$, we have by Lemma~\ref{lem:slim region} that $\val(c,t)$ is greater than and therefore $\val(L,t)$ is equal to $v+v'/2-t/2=v_i-t/2$ if $F_i^E(\hat{A})$ is true and equal to $v-t/2=v_i-v_i'-t/2$ if $F_i^E(\hat{A})$ is false. 
Let $A$ be the extension of $\hat{A}$ that maps $v_{S+1},v_S, v_{S-2}$ to true and $v_{S-1}$ to false.
Note that $T^A$ is the duration covering $[11/16,3/4]$ of the duration. 
We have that $\val(L,t^A_m)=v_i-v_i'-t/2$, by it reverse encoding $F''$ with those parameters.
Thus, for all $t$ in the duration, $\val(s',t)=v_i-t/2$ if $F_i^E(\hat{A})$ is true and otherwise, $\val(s',t^A_m)=v_i-v_i'-t/2$ if $F_i^E(\hat{A})$ is false.
\end{proof}

We can similar give a straight encoding of $F_{i}^A$, with $v\leftarrow v-v'/2$ and $v'\leftarrow v'/2$ if given a reverse encoding of $F_{i+1}^E$ with $v$ and $v'$.

We can thus encode any quantified boolean formula. In particular, we see that PromiseSPTG is \PSPACE-hard. 
Hence, we get the following theorem.

%\begin{reptheorem}{thm:pspace}
%For an SPTG, deciding whether $v(s, 0) \ge c$ for a given state $s$ and constant
%$c$ is \PSPACE-hard.
%\end{reptheorem}
\pspacethm*

Also, since we need holding rates $\{0,1\}$ for our family with exponentially many event points and besides that one more distinct holding rate (for the extender state) per alternation to encode a quantified boolean formula, we also get the following theorem.

%\begin{reptheorem}{thm:poly_hier} 
%For an SPTG with $k+2$ distinct holding rates, deciding whether $v(s, 0) \ge c$
%for a given state $s$ and constant $c$ is hard for the $k$-th level of the
%polynomial time hierarchy.
%\end{reptheorem}
\polyhierthm*

\section{Graph properties}
\label{app:graph_prop}
We will in this section argue that our construction (or similar constructions) belongs to very many special cases of graphs.
A simple way to understand our \PSPACE-hardness construction is that it basically forms a tree, but with the leaves being a member of our family of graphs with exponentially many event points.

First, we will give three simple properties.
\begin{enumerate}[leftmargin=0.4cm]
\itemsep2mm
\item {\bf DAG.} Note that our lower bound graph is acyclic, since our family of graphs with exponentially many event points were, and it is thus a DAG
\item {\bf Planar.} It is also planar (i.e. it can be drawn in the plane without edges crossing), since we can draw our family of graphs with exponentially many event points in a planar way such that the last level (i.e. states $s_{\ell}^i,s_{r}^i$) are on the outmost surface and thus, a tree, like our \PSPACE-hardness construction is, with such leaves are planar. %reformulate?
\item {\bf Degree 3 or 4.} By assuming that ANDs and ORs are over two variables (which we can do without loss of generality), we see that our graph has degree at most 4. In fact, we can get it down to 3, by for each state $s$ (1)~adding a maximizer state $s'$ with holding rate 0; (2)~adding an edge from $s'$ to $s$ of cost $0$, thus ensuring that $\val(s',t)=\val(s,t)$ by Lemma~\ref{lem:no wait} and (3)~letting all incoming edges from $s$ go to $s'$ instead. The resulting game is such that each state has the same value as before the modification, but the degree is at most~3. Doing so does not change any of other graph properties.
\end{enumerate}

\paragraph{Few holding rates.}
 First, observe that we only use two holding rates, 0 and 1, in our exponential lower bound family. Also, in our \NP-hardness construction we only use holding rates $0,1/2,1$. In general, for a formula with $k-1$ alterations, we use $k+2$ holding rates (2 for the exponential lower bound and 1 more for each $\exists$ or $\forall$ in the formula) and hence SPTGs with $k+2$ holding rates are hard for the $k$-th level of the polynomial time hierachy.
 
\subsection{Urgent states}
One can consider a slightly more general variant of OCPTGs and SPTGs where a state may be declared {\em urgent}. In an urgent state, the owner cannot wait. 

The notion is of interest at least partly, because the proofs in \cite{R11,BLMR06}, showing that OCPTGs have a value and at most exponentially many event points respectively, are recursive arguments on graphs with more and more urgent states. 

We will say that a set of states $S$ is {\em urgent-equivalent} if, for some pair of optimal strategies, no player waits in any state of $S$. Note that changing any number of states in an urgent-equivalent set to being urgent does not change the value (because some optimal strategy in the original game is still useable).
Observe that our family with exponentially many event points are such that $S$ is urgent-equivalent, where $S$ is all states, except for $s_r^0$, because of Lemma~\ref{lem:no wait}.
Hence, even in games with all but 1 state being urgent, we have exponentially many event points.

Next, consider that in our \NP-hardness or our \PSPACE-hardness construction, 
we reuse the state $s_r^0$ also use it instead of state $L$ (used to bound booleans - the two states have the same value function) instead of coping it for each time we use our family in the construction. Then, similar to above, we see that the problem is \NP-hard with 2 non-urgent states (the state $s_r^0$ and the extender state) and in general, with $k+1$ non-urgent states, it is hard for the $k$-th level of the polynomial time hierarchy (we add one extender state for each alternation). Note that the resulting graph is not planar (and also does not have low degree, but it can still be changed similar to before to have degree 3 though while still satisfying all properties except for planarity). One can also show that it has tree-width/clique-width 4 as defined later (by simply having $s_r^0$ in all bags and giving it a unique color respectively). Since it has clique-width 4 it has rank-width 4 as well.

\subsection{Treewidth}
Treewidth is a classic measure of how tree-like a graph is, with treewidth being 1 if the graph is a tree. Many algoritmic problems are easier on graphs with constant tree-width, e.g. monodic second order logic, which is lucky since many graph families (e.g. control flow graphs of C programs without gotos, used in verification of programs) have constant tree-width.
We will argue that our \PSPACE-hard family have tree-width 3, thus showing that solving SPTGs on constant tree-width graphs are \PSPACE-hard.

\paragraph{Treewidth definition.}
A {\em tree-decomposition} of a graph $G=(V,E)$ is a pair $(B,T)$ where $B=\{B_1,B_2,\dots,B_k\}$ is a family of subsets of $V$ (called {\em bags}), and $T$ is a tree with the bags as states, satisfying that 

\begin{itemize}[leftmargin=0.3cm]
\itemsep2mm
\item for each edge $(s,t)\in E$, there exists $i$, such that $s,t\in B_i$. 

\item for each state $s\in V$, the set of bags $X_s$ containing $s$ (i.e. $B_i\in X_s$ iff $s\in B_i$), is a non-empty subtree in $T$. 

\end{itemize}

The {\em width} of a tree-decomposition is $\max_i |B_i|-1$.
The {\em tree-width} of a graph $G=(V,E)$ is the minimum width of any tree-decomposition of $G$ (there can be many tree-decompositions of the same graph).
The {\em path-width} of a graph $G=(V,E)$ is the minimum width of any tree-decomposition $(X,T)$ of $G$, for which $T$ is a path.

\paragraph{Tree-decomposition construction.}
We will first argue that our exponential lower bound example have path-width 3.

The bags are such that $B_k=\{s_{\ell}^{k-1},s_{r}^{k-1},s_{\ell}^k,s_{r}^k\}$ for $k\in \{1,\dots,i\}$. 
The bag $B_k$ has an edge to $B_{k+1}$ for $k<i$ and to $B_{k-1}$ for $k>1$.

Note that all edges are in some bag and that each state is in (at most three) consecutive bags.

Next, the full graph has tree-width 3 as well, simply because it is a tree where the leaves has an edge to a a state in the top bag of a graph with pathwidth (and therefore tree-width) 3.

\subsection{Clique-width}
There are more general graph properties than tree-width (even if tree-width is the most well-known) for which many algorithms runs faster on the special case. One of these are clique-width. It is known that if the treewidth is $w$ then clique-width is at most $3\cdot 2^{w-1}$. In our case, because the tree-width was 3, that would imply that the clique-width is at most 12. We will argue that it is in fact 3.

\paragraph{Definition of clique-width.}
We will next define {\em clique-width}. To do so, let $k$ be some number. The
set of graphs of clique-width at most $k$ are exactly those that can be
constructed using the following set of rules: 

\smallskip

\begin{enumerate}[leftmargin=0.4cm]
\itemsep1mm
\item Given an integer $1\leq i\leq k$, output a new graph with one state of color $i$

\item Given two graphs $G,G'$, output the disjoint union of them

\item Given a graph and two integers $i\neq j$ such that $1\leq i,j\leq k$, output the graph you get by adding an edge from each state of color $i$ to each state of color $j$

\item Given a graph and two integers $i\neq j$ such that $1\leq i,j\leq k$, output the graph you get by changing the color of each state of color $i$ to color $j$.

\end{enumerate}

\paragraph{Lower bound on clique-width.}
A graph $H=(V',E')$ is an {\em induced sub-graph} of a graph $G=(V,E)$, if there exists an injective function $f:V'\rightarrow V$, such that for all pairs $s,t\in V'$, we have that $(s,t)\in E'$ iff $(f(s),f(t))\in E$.
There are also futher types, e.g. rank-width, however, if a graph has clique-width $k$, then it has rankwidth at most $k$ as well and thus our graph has rank-width 3.

It is known that the set of graphs  (with edges) that have clique-width 2 are exactly the ones that do not have a path of length 4 as an induced sub-graph.

Note that already our exponential lower bound family (with $i\geq 3$) has an induced sub-graph which is a path of length 4. E.g. $s_{\ell}^0,s_{\ell}^1,s_{\ell}^2,s_{\ell}^3$ is such an induced sub-graph. Thus, the clique-width is at least 3.

\paragraph{Upper bound on clique-width.}
We first construct our family with exponentially many event points. 

We do so in steps. In the first few steps, we construct $s_{\ell}^0$ and $s_{r}^0$, add the edge between them (this uses 2 colors) and color them, say, green.

We then do an iterative construction to construct all $i$ levels as follows. The last level is special though, in that only one state of that level has incoming edges (and we could have omitted the other state from our construction). Consider that for some $k\leq i-2$, states $s_{\ell}^k$ and $s_r^k$ are green, states $s_{\ell}^{j},s_r^j$, for $0\leq j<k$ are blue and all edges between them have been added, then, we can add $s_{\ell}^{k+1}$ and $s_{r}^{k+1}$ as a red state, add all edges between red and green states and then color green states blue and then red states green and we go to the next iteration. We will only use one of $s_{\ell}^{i}$ and $s_r^i$ in the tree above our family, so add first the other (i.e. if we want to use  $s_{\ell}^{i}$, add first $s_r^i$) as a red state, add all edges between red and green, color the red state blue and then add the one we want to use later as a red state, add all edges between red and green and color the green states blue and finally the red state green. This used 3 colors.

To construct our full \PSPACE-hardness construction, we construct each family member as above and then construct the graph above level by level (i.e. add the next higher state in the tree as a red state, add all edges between red and green and recolor the green states blue and then the red state green and proceed upward like that).
Again, this uses 3 colors in total.

\section{Integer costs}
\label{app:ashutosh}

Here we show how to convert our games with exponentially-many event points
and two holding rates to have integer costs. 

Fix some $i$. Taking the member with $i$ levels our family
of SPTGs that have an exponential number of event points, and changing the unit
of time to be $2^{-i}$ of the old time unit (thus, the duration of the game that
was previously of length 1 time unit is now $2^i$ time units) and changing the
currency for the output to be $2^{-i}$-th of the old outputs currency, we see
that the costs on edges are scaled with $2^i$ (and are thus integers), the holding rates
are scaled with $2^i\cdot 2^{-i}=1$ (and are thus still in $\{0,1\}$), because
the change in time unit and in currency cancels and finally the duration has
changed to $2^i$ time units, i.e. time is in $[0,2^i]$. The resulting game then
has the same number of event points ($2^{i+1}$), but have integer costs and
holding rates in $\{0,1\}$.

\end{document}